\newcommand\bound{10} 
\renewcommand{\theequation}{\thesection\arabic{equation}}
\newtheorem{theorem}{Theorem}
\newtheorem{lemma}{Lemma}
\newtheorem{proposition}{Proposition}
\theoremstyle{definition}
\newtheorem{remark}{Remark}
\DeclareMathOperator*{\argmin}{arg\,min}
\newcommand{\E}{\mathbb{E}}
\newcommand{\R}{\mathbb{R}}
\newcommand*{\addFileDependency}[1]{
  \typeout{(#1)}
  \@addtofilelist{#1}
  \IfFileExists{#1}{}{\typeout{No file #1.}}
}
\begin{document}


\renewcommand{\baselinestretch}{1}



$\ $\par


\fontsize{12}{14pt plus.8pt minus .6pt}\selectfont \vspace{0.8pc}
\centerline{\large\bf \uppercase{A unified framework for change point} }
\vspace{2pt}  
\centerline{\large\bf \uppercase{ detection in high-dimensional linear models} }
\vspace{.4cm} 
\centerline{Yue Bai \quad \quad  Abolfazl Safikhani} 
\vspace{.4cm} 
\centerline{\it Department of Statistics, University of Florida}
 \vspace{.55cm} \fontsize{9}{11.5pt plus.8pt minus.6pt}\selectfont


\begin{quotation}
\noindent {\it Abstract:}
In recent years, change point detection for high dimensional data has become increasingly important in many scientific fields. 
Most literature develop a variety of separate methods designed for specified models (e.g. mean shift model, vector auto-regressive model, graphical model). In this paper, we provide a unified framework for structural break detection which is suitable for a large class of models.
Moreover, the proposed algorithm automatically achieves consistent parameter estimates during the change point detection process, without the need for refitting the model.  
Specifically, we introduce a three-step procedure.  The first step utilizes the block segmentation strategy combined with a fused lasso based estimation criterion, leads to significant computational gains without compromising the statistical accuracy in identifying the number and location of the structural breaks. This procedure is further coupled with hard-thresholding and exhaustive search steps to consistently estimate the number and location of the break points.
The strong guarantees are proved 
on both the number of estimated change points and the rates of convergence of their locations.
The consistent estimates of model parameters are also provided. 
The numerical studies provide further support of the theory and validate its competitive performance for a wide range of models.
The developed algorithm is implemented in the \textbf{R} package \textbf{LinearDetect}. 

\vspace{9pt}
\noindent {\it Key words and phrases:}
High-dimensional data; Piecewise stationarity; Structural breaks; Fused lasso; Block segmentation; Linear model. 
\par
\end{quotation}\par

\def\thefigure{\arabic{figure}}
\def\thetable{\arabic{table}}

\renewcommand{\theequation}{\thesection.\arabic{equation}}


\section{Introduction}\label{sec:intro}


Developing methods to detect change points (break points) in dynamical systems have become increasingly important due to the wide range of applications in many real life problems, including quality control \citep{qiu2013introduction}, neuroscience \citep{OmbaoVonSachsGuo_2005}, economics and finance \citep{frisen2008financial}, and social network analysis \citep{savage2014anomaly}, just to name a selected few. A change point represents a discontinuity in the parameters of the data
generating process. The literature has investigated both the {\em offline} and {\em online} versions of the problem \citep{Basseville1993detection,
csorgo1997limit}. In the former case, one is given a sequence of observations and questions of interest include: (i) whether there exist change (break) points and (ii) if there exist change points, identify their locations, as well as estimate the parameters of the data generating process. 
In the online case, one sequentially obtains new observations and the main interest is in {\em quickest detection} of the change point \citep{wang2015large,chan2021self}.



Fused lasso \citep{rinaldo2009properties} is among computationally attractive offline change point detection methods due to its linear computation time with respect to sample size \citep{bleakley2011group}. In this method, first the parameter space is expanded to allow model parameters to change at all time points while parameters' consecutive differences are fused (forced to zero) to reduce the parameter space dimension. It is known that fused lasso over-estimates the number of change points, i.e. it has a non-vanishing false positive rate \citep{harchaoui2010multiple}, while there is no unified result in deriving upper bounds for the total positive rate of fused lasso. As a result, additional steps are typically combined with fused lasso in order to consistently estimate the number of change points, see e.g. the screening step in \cite{safikhani2020joint}. These additional steps typically include several hyper-parameters and the finite sample detection performance can be sensitive to small changes in these hyper-parameters. Further, the theoretical rates of such hyper-parameters depend on the model and need to be derived separately for each statistical model under consideration. Note that despite these issues, fused lasso is among attractive detection algorithms due to its computational speed compared to more exhaustive search methods such as dynamic programming which has at least quadratic computation time with respect to sample size which makes it not scalable to large scale (and high-dimensional) data sets.  


In this paper, we propose a new detection algorithm called Threshold Block Fused Lasso (TBFL) which is motivated by fused lasso while the fused lasso issues mentioned are mitigated by specific modifications developed in the new algorithm. Unlike fused lasso, TBFL can consistently estimate the number of change points in a single step while its computational complexity is similar to fused lasso (or better sometimes, see details in Remark~\ref{rk:1}). Further, location of change points are estimated consistently by developing a local exhaustive search step. The proposed algorithm is flexible and can handle break detection in a wide range of statistical models. In this paper, we focus on detection of break points and model parameter estimation for general sparse multivariate regression models with high-dimensional covariates \citep{rothman2010sparse}. In this model (model~\ref{eq:model_1}), both response variable and covariates are multivariate and their dimensions can potentially be much larger than the sample size. Moreover, unlike typical regression models, independence among covariates in different samples is not assumed (see more details in Sections~\ref{sec:model} and \ref{sec:thm}). This makes the model flexible enough to include a wide range of models (with possible temporal and/or spatial correlations) including mean shift models \citep{harchaoui2010multiple}, multiple linear regression model \citep{leonardi2016computationally}, vector auto-regressive models \citep{lutkepohl2005new}, Gaussian graphical models \citep{yuan2007model}, and network auto-regressive model \citep{zhu2017network}.

TBFL starts with partitioning the time domain into certain blocks while assuming the model parameters remain fixed within each block and change among neighboring blocks. The block sizes ($b_n$ with $n$ as the sample size) are selected carefully to control false positive rates while not missing any true break point. Then, model parameters among all blocks are estimated simultaneously using regularized estimation procedures motivated by fused lasso and further,
{Frobenius}
norm of differences between estimated model parameters in consecutive blocks are computed which are called ``jumps". Intuitively, a large magnitude of jump implies that there exists a true break point inside the neighboring blocks while a small jump can potentially be due to finite sample estimation error. Thus, jumps are thresholded using a certain data-driven threshold and only block ends corresponding to jumps above the threshold are regarded as ``candidate" change points. Note that the hard-thresholding technique has been used in lasso regularization to reduce the false positive rate \citep{van2011adaptive}, while thresholding hasn't been fully investigated for fused lasso. It is verified (Theorem~\ref{thm_selection_block}) that under certain conditions, this procedure leads to a set of ``clusters" of candidate change points while the number of clusters matches with the true number of break points in the model (denoted by $m_0$) with high probability. As a bi-product of this result, it can be seen that the total number of candidate change points is at most $2m_0$ with high probability converging to one as the sample size diverges. This can be interpreted as an upper bound to control the false positive rate, a result not available for fused lasso for such a general linear model. Moreover, a simple exhaustive search within each estimated cluster gives the final estimation for location of break points. Non-asymptotic consistency rates of final estimates of break point locations are derived (Theorem~\ref{thm:final_consistency_rate}) where it can be seen that change point estimates are optimal up to a logarithmic factor (see more details in Section~\ref{sec:thm}). Model parameter estimates after break detection has not received much attention while having consistent estimators for model parameters before and after break points can reveal the main drives of breaks in the system. This could provide valuable insights to scientists to decipher the main features which contributed to the shock/break in the system (for example, see the application of TBFL on an EEG data set in Section~\ref{sec:eeg}). Interestingly, estimated parameters within TBFL can be utilized to develop model parameter estimates between any two consecutive break points \textit{without} refitting and their consistency is derived as well (see Theorem~\ref{thm:estimation}). Steps of the TBFL algorithm are illustrated in Figure~\ref{fig:jumps_step}.
A random realization from model~\ref{eq:model_1} is generated with sample size $n=1000$, $p_x=20$, $p_y=1$, two true change points at $333$ and $666$ (solid red lines) with the block size of $b_n=30$. In Figure~\ref{fig:jumps_step}, 
{square of jump sizes (i.e., square of
{Frobenius}
norm of differences between estimated model parameters in consecutive blocks) at block ends ($30,60, 90, \ldots$) are plotted in all panels (see more details about the model settings in supplementary material~\ref{sec:plot}).} It can be seen from the left panel that there are large jumps close to two break points while there are some small jumps far from any true break point. Thresholding (green horizontal dashed line) can help removing those small jumps. The middle panel depicts clusters of candidate change points in neighborhoods of true break points. It can be seen that $3$ candidate break points remain after thresholding which matches with Theorem~\ref{thm_selection_block} which states that there should be at most $2m_0=4$ candidate break points. Finally, the right panel illustrates the final estimated break points as blue vertical dashed lines using local search within each cluster.

\begin{figure}[!ht]
\begin{center}
\includegraphics[width=0.3\linewidth, clip=TRUE, trim=0mm 0mm 00mm 10mm]{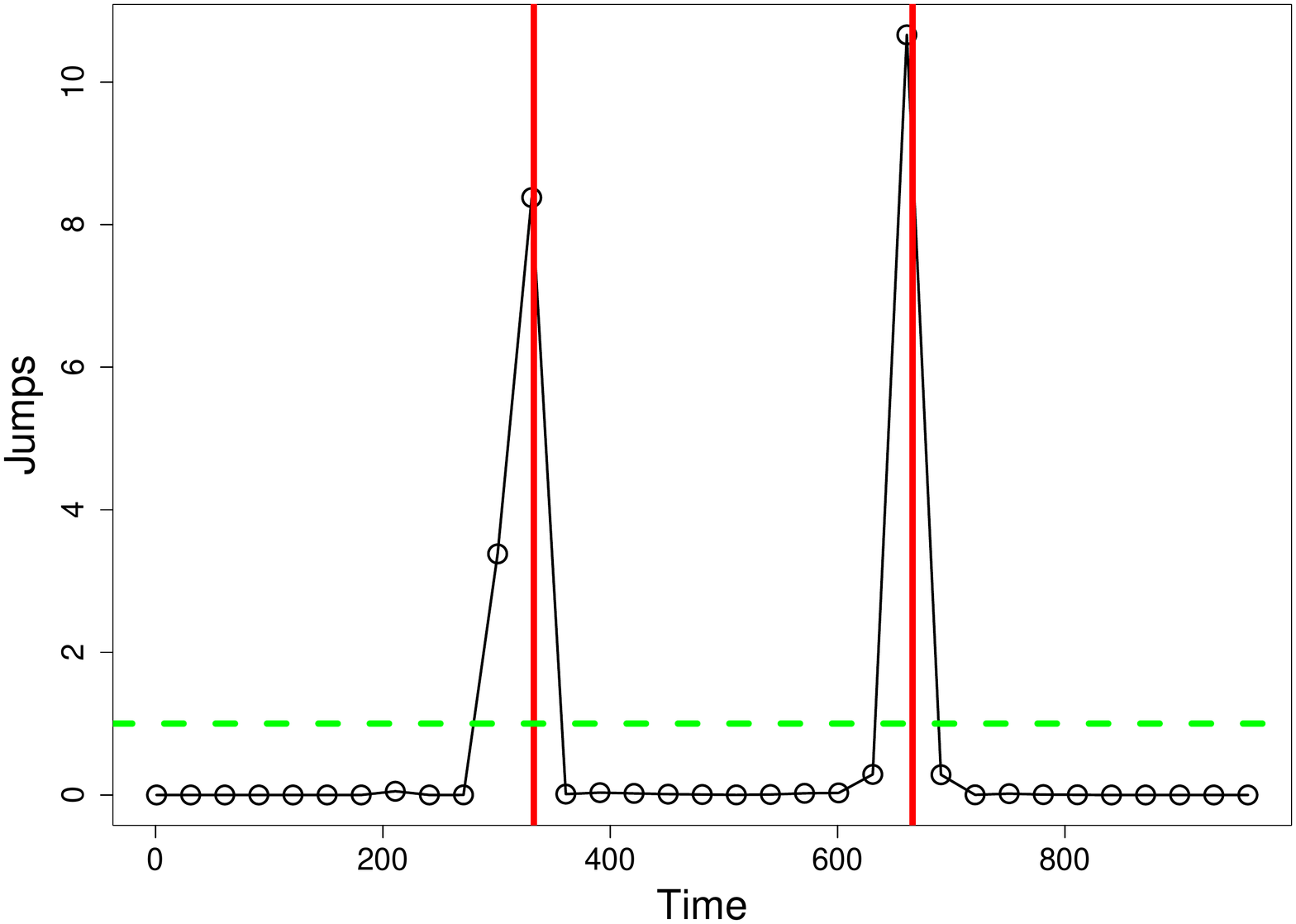}
\includegraphics[width=0.3\linewidth, clip=TRUE, trim=0mm 0mm 00mm 10mm]{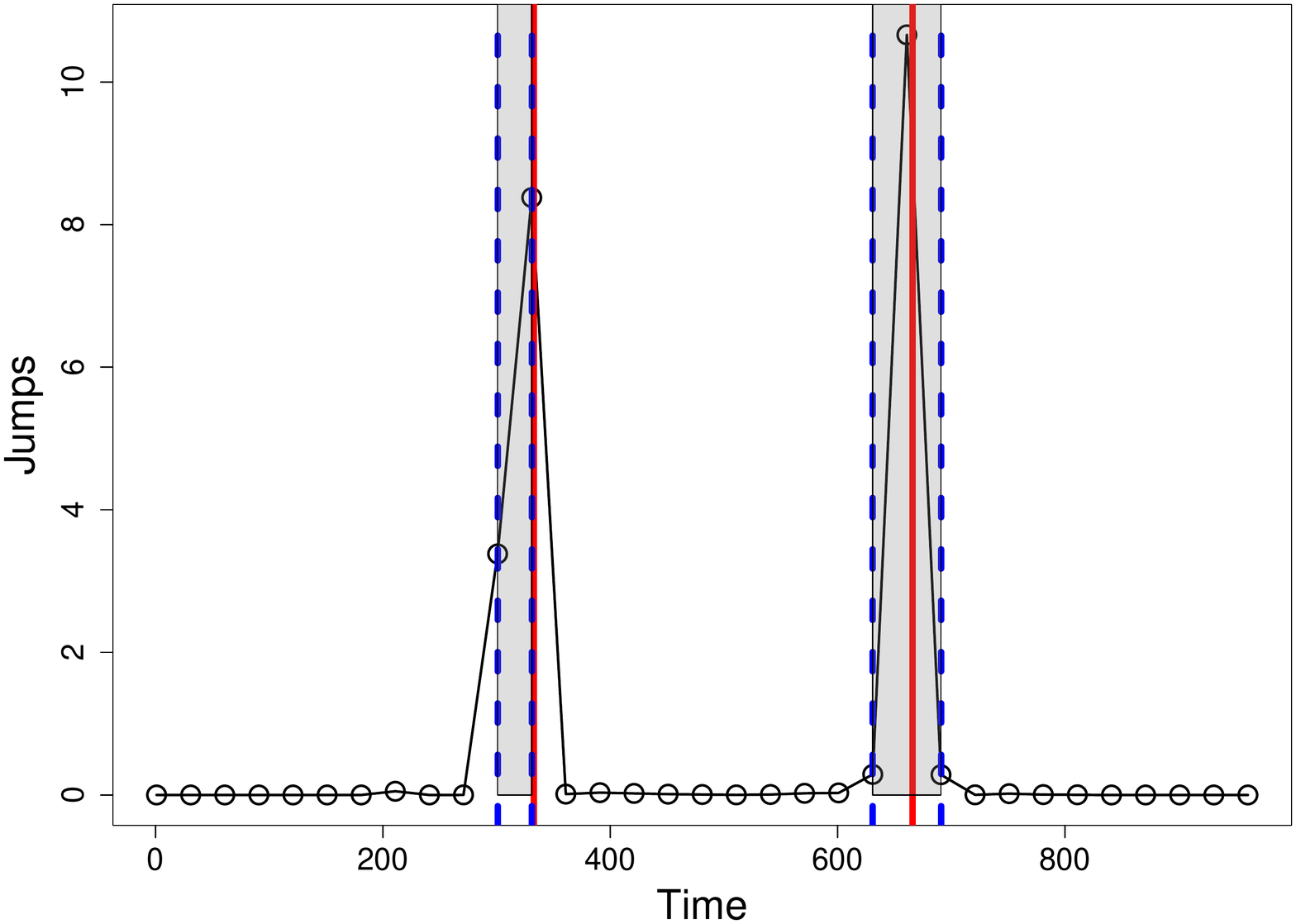}
\includegraphics[width=0.3\linewidth, clip=TRUE, trim=0mm 0mm 00mm 10mm]{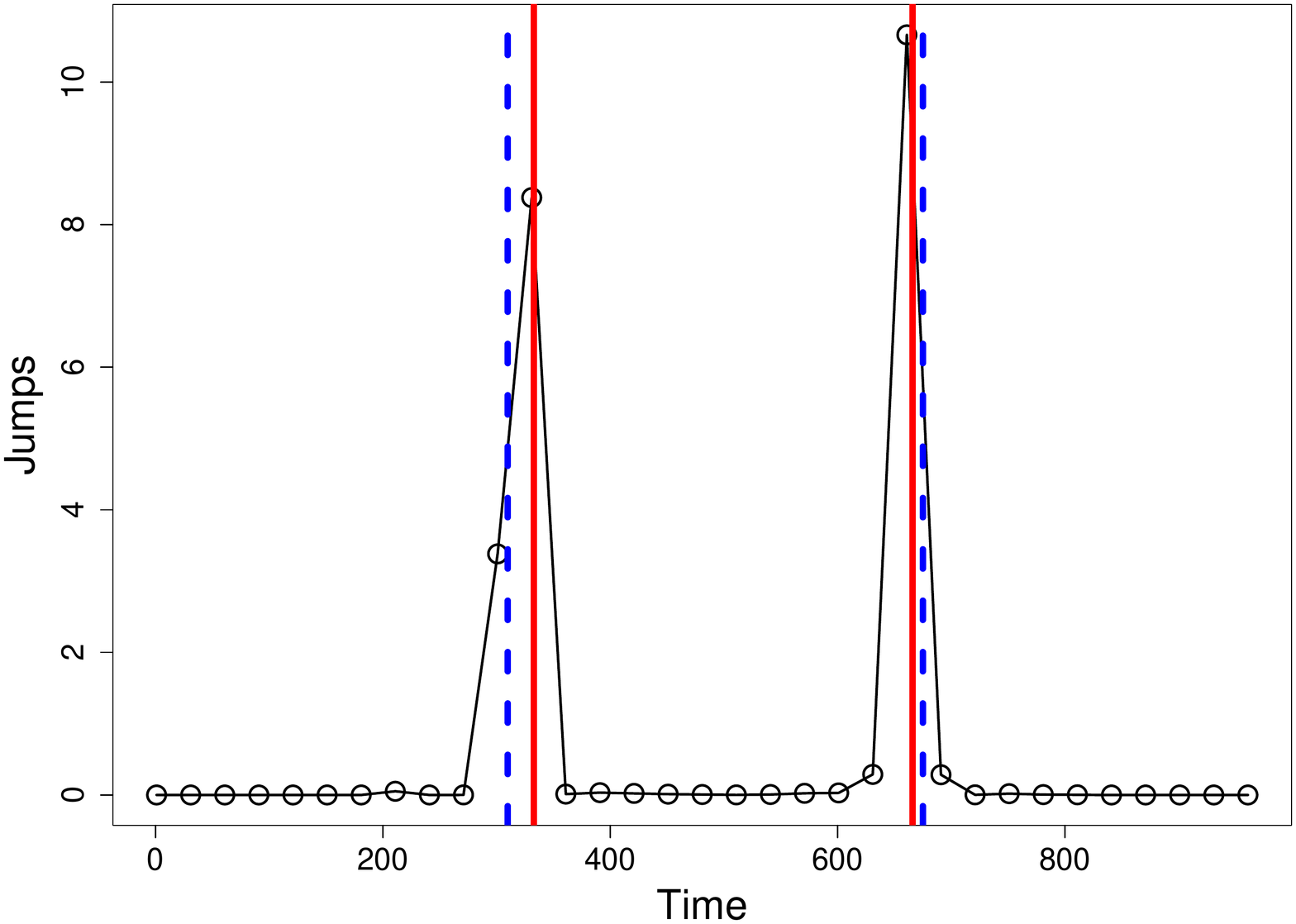}
\caption{Illustration of the TBFL algorithm.}
\label{fig:jumps_step}
\end{center}
\end{figure}

{In summary, the main contributions include (a) proposing a detection algorithm which can handle a wide range of linear  models- including change-in-mean model, multiple linear regression model, Vector Auto-Regressive (VAR) model, and Gaussian graphical model- in both high-dimensional and fixed-dimensional cases; (b) providing theoretical guarantees in terms of consistency rate of change point detection and parameter estimation; (c) providing consistent model parameter estimates during the change point detection process, \textit{without} the need for refitting the model; (d) providing data-driven methods to select all hyper-parameters in the algorithm. Further, the algorithm is implemented in the \textbf{R} package
\textbf{LinearDetect} \citep{lineardetect}. Next, a brief overview of existing detection methods is provided.}

\vspace{-0.5cm}

\subsection{Related Works}\label{sec:related work}

There exists a large body of literature addressing the problem of change point detection in the offline version mostly focusing on fix-dimensional regimes. The literature can be categorized into three groups with respect to the dimension of coefficient parameters considered by the model - i.e. univariate, multivariate, and high-dimensional. There exist several works focusing on different types of models in the univariate case.
For example,  \cite{davis2006structural}  utilize minimum description length principle to locate change points in piecewise univariate auto-regressive models while
\cite{killick2012optimal} propose a Pruned Exact Linear Time (PELT) method  using the Optimal Partitioning approach of \cite{jackson2005algorithm}, together with a pruning step within the dynamic program to detect the structural breaks. \cite{fryzlewicz2017tail} applies a tail-greedy Haar transformation to consistently estimate the number and locations of multiple change points in the univariate piecewise-constant model while
\cite{aue2017detecting} develop a method based on the (scaled) functional cumulative sum (CUSUM) statistic for detecting shifts in the mean of functional data models.
In the multivariate case with the number of the model parameters $p$ fixed,
\cite{OmbaoVonSachsGuo_2005} develop a spectral representation to locate the break points, a self-normalized technique is developed in \cite{zhang2018unsupervised} to test for change points, \cite{matteson2014nonparametric} propose a non-parametric approach based on Euclidean distances between sample observations. There has been an increasing interest recently in the high-dimensional case in which the number of model parameter $p$ is much larger than the number of observations $n$ \citep{hastie2009elements}. \cite{cho2015multiple} and \cite{cho2016change} employed Binary Segmentation for locating break points in high-dimensional data while \cite{wang2016high} proposed a high-dimensional change point detection method using a sparse projection to project the high-dimensional into a univariate case. The algorithm for estimating a single change point can be combined with the wild binary segmentation scheme of \cite{frick2014multiscale} to locate sequentially multiple change points in high-dimensional time series. In \cite{wang2019localizing}, $l_0$-optimization is developed for change point detection in Vector auto-regressive (VAR) models while \cite{roy2017change} developed a likelihood-based method for locating a single break point for high-dimensional Markov random fields and provide the rate of estimating the change point, as well as the model parameters. Further, a U-statistic-based cumulative sum statistic is developed in \cite{liu2020unified} to test for existence of a single change point while \cite{safikhani2020joint}, \cite{bai2020multiple} and \cite{safikhani2021fast} utilize fused lasso \citep{tibshirani2005sparsity} and a screening step to estimate multiple break points in a VAR model and also establish consistency results for both the break points and the model parameters. Moreover, \cite{kolar2012estimating} consider a fused lasso regularization together with a neighborhood selection approach to detect the change points in the Gaussian graphical model while \cite{bybee2018change} introduce majorize-minimize algorithm plus Simulated Annealing (SA) algorithm for computing change points in large graphical models. Finally, \cite{gibberd2017multiple} utilize Group-Fused Graphical Lasso
(GFGL) to detect multiple change points in high-dimensional setting. We refer to two recent review papers \cite{aue2013structural,yu2020review} for a more comprehensive review.

The remainder of the paper is organized as follows. In Section~\ref{sec:model}, the general model formulation is introduced while in Section~\ref{sec:algo}, we provide a detailed description of the proposed TBFL algorithm.  Asymptotic properties including the consistency of the number of change points and their locations are established in Section~\ref{sec:thm} while examples of models are provided in Sections~\ref{sec:example} and \ref{sec:MLR} (supplementary materials). Further, the optimal block size selection method is discussed in Section~\ref{sec:optimalblock}.
The comparison with other methods together with
numerical performance of the proposed TBFL in various simulation settings
are provided in Sections \ref{sec:comparison}, \ref{Sim_Scenarios} and \ref{sec:sa} (supplementary materials). 
Finally, the real data application of electroencephalograms (EEGs) recorded during eyes-closed and eyes-open resting conditions is presented in Section~\ref{sec:eeg} while Section~\ref{sec:conclustion} includes some concluding remarks.


\textbf{Notation}:
Denote the indicator function of a subset $S$ as $\mathbbm{1}_{S}$. 
For any vector $v \in \mathbb{R}^p$,
we use $\Vert v \Vert_{\infty}$  to denote  $\max_{1 \leq i\leq p}\{\vert v_i\vert\}$.
For any matrix $\bm{A}$, the $\ell_1$, $\ell_2$ and $\ell_{\infty}$  norms of the vectorized form of $\bm{A}$ are denoted by $\Vert \bm{A} \Vert_1 = \Vert \text{vec}(\bm{A}) \Vert_1$, $\Vert \bm{A} \Vert_F = \Vert \text{vec}(\bm{A}) \Vert_2$ and $\Vert \bm{A} \Vert_{\infty} = \Vert \text{vec}(\bm{A}) \Vert_{\infty}$. 
The transpose of a matrix $\bm{A}$ is denoted by $\bm{A}^\prime$. Let $\Lambda_{\text{max}}(\Sigma)$ and $\Lambda_{\min}(\Sigma) $  denote the maximum and minimum eigenvalues of the symmetric  matrix $\Sigma$.
Denote the tensor product of two matrices as $\otimes $.
For functions $f (n)$ and $g(n)$, we write $f(n) = \Omega(g(n))$ if and only if for some constants $c \in (0, \infty) $ and $n_0 >0$,  $f(n) \geq cg(n)$ for all $n \geq n_0$; we write $f(n) = O(g(n))$ if and only if for some constants $c \in (0, \infty) $ and $n_0 >0$,  $f(n) \leq cg(n)$ for all $n \geq n_0$.
We define the Hausdorff distance between two countable sets on the real line as $d_H (A, B) = \max \left\{ \max_{b\in B} \min_{a \in A} \vert b- a \vert, \max_{a\in A} \min_{b \in B} \vert b- a \vert  \right\}.$
{For scalars $a$ and $b$, define $a \wedge b = \min(a, b)$ and  $a \vee b = \max (a, b)$.}


\section{Model Formulation}\label{sec:model}
We consider a multivariate  regression model \citep{rothman2010sparse} with structural break such that the values of coefficient matrix change over time in a piece-wise constant manner. Specifically, suppose there exist $m_0$ change points $\left \{t_1 , \dots , t_{m_0} \right\}$ such that $1 =t_0 <t_1 <\dots<t_{m_0} <t_{m_0+1} =n+1$, then the structural break multivariate regression model is given by 
\begin{equation}\label{eq:model_1}
    \bm{y}_t = \sum_{j = 1}^{m_0+1}\left (\bm{B}_j^\star \bm{\bm{x}}_t + \bm{\varepsilon}_{j,t} \right ) \mathbbm{1}_{\{t_{j-1}\leq t <t_j \}}, \ t =  1, \dots, n, 
\end{equation}
where  $\bm{y}_t \in \R^{p_y}$ is the response vector at time $t$; $\bm{B}_j^\star \in \R^{p_y \times p_x}$ is the true coefficient matrix during the $j$th segment;
{
$\bm{x}_t \in \R^{p_x}$ is the  predictor vector at time $t$;
and $\bm{\varepsilon}_{j,t} \in \R^{p_y}$ is a multivariate white noise during the $j$th segment at time $t$. }
All parameters in the model are considered fixed during each segment, while the coefficient matrices $\bm{B}_j^\star$ are allowed to vary over segments. 
The multivariate regression model requires one to estimate $p_xp_y$ parameters within each segment which will be challenging when either the number of predictors $p_x$ or the number of response $p_y$ become large. We work under the high-dimensional setting in which we allow the number of predictors $p_x$ and the number of response $p_y$  to grow by sample size and possibly exceed the sample size $n$, i.e. $p_x \gg n$ and/or $p_y \gg n$. As a result, we assume sparsity of coefficient matrices $\bm{B}_{j}^\star$'s. Specifically, denote the number of nonzero elements 
in $\bm{B}_{j}^\star$ by $d_{j}$, 
$j = 1, 2, \dots , m_0+1$.
Let $d_n^\star =  \max_{1 \leq j \leq m_0+1} d_{j}$ to be the maximum sparsity of the model. We assume that $d_n^\star$ is much smaller than $p_x$ and $p_y$, see more details in Section~\ref{sec:thm}.

\section{Proposed Algorithm}\label{sec:algo}

In this section, we introduce a three-step estimation procedure denoted by Threshold Block Fused Lasso (TBFL).
The first step aims to select candidate change points among blocks and estimate each segment's coefficient matrix by solving a block fused lasso problem.
A hard-thresholding step is then added to reduce the over-selection problem from the fused lasso step. 
In the third step, a local exhaustive search examines every time point inside a neighborhood region based on the cluster of candidate change points estimated in the previous step. Moreover, a consistent model parameter estimate is also obtained during the block fused lasso step. Each step is described in details next.




\noindent
{\bf (Step I) Block Fused Lasso.} Define a sequence of time points $1 = r_0 < r_1 < \dots < r_{k_n} = n + 1 $ for block segmentation such that $r_{i} - r_{i-1} \approx b_n $ for $i =1 ,\dots, k_n-1$, where $k_n = \lfloor {n}/{b_n} \rfloor$ is the total number of blocks. To simplify notation and without loss of generality, throughout the rest of the paper, we assume that $n$ is divisible by $b_n$ such that $r_{i} - r_{i-1} = b_n$ for all $i =1 ,\dots, k_n$.
By partitioning
the observations into blocks of size $b_n$ and fixing the model parameters within each block, 
we set $\bm{\Theta}_1 = \bm{B}_1^\star$ and $\bm{\Theta}_i=\bm{B}_{j+1}^\star -\bm{B}_j^\star$ when $t_j \in [r_{i-1}, r_{i})$ for some $j$, and $\bm{\Theta}_i=0$ otherwise, for $i = 2,3,\dots, k_n$. Note that $\bm{\Theta}_i \neq \bm{0}$ for $i \geq 2$ means that $\bm{\Theta}_i$ has at least one non-zero entry and implies a change in the coefficients. We now formulate the following linear regression model in terms of $\bm{\Theta}(k_n) = (\bm{\Theta}_1, \dots, \bm{\Theta}_{k_n})^\prime $:

{\scriptsize
\begin{equation}\label{block_model_1}
\underbrace{\left(\begin{array}{c}
\bm{y}_{(1: r_1-1)}\\
\bm{y}_{(r_1: r_2-1)}\\
\vdots \\
\bm{y}_{(r_{k_n-1}:r_{k_n}-1)}\\
\end{array}\right)}_{\mathcal{Y}}= 
\underbrace{\left(\begin{array}{cccc}
\bm{x}_{(1: r_1-1)}  & \mathbf{0} &\dots & \mathbf{0}\\
\bm{x}_{(r_1: r_2-1)} & \bm{x}_{(r_1: r_2-1)} &\dots &\mathbf{0}\\
\vdots & \vdots & \ddots & \vdots \\
\bm{x}_{(r_{k_n-1}:r_{k_n}-1)} & \bm{x}_{(r_{k_n-1}:r_{k_n}-1)}& \dots  &\bm{x}_{(r_{k_n-1}:r_{k_n}-1)}\\
\end{array}\right)}_{\mathcal{X}}
\underbrace{
\begin{pmatrix}
{\bm{\Theta}_{1}}^\prime \\
{\bm{\Theta}_{2}}^\prime\\
\vdots \\
{\bm{\Theta}_{k_n}}^\prime\\
\end{pmatrix}}_{\bm{\Theta}(k_n)}
+\underbrace{
\left(\begin{array}{c}
{\bm{\zeta}_{(1: r_1-1)}}\\
{\bm{\zeta}_{(r_1: r_2-1)}}\\
\vdots \\
{\bm{\zeta}_{(r_{k_n-1}:r_{k_n}-1)}}\\
\end{array}\right)}_{E},   
\end{equation}}
where $\bm{y}_{(a:b)}:= (\bm{y}_a, \dots, \bm{y}_{b})^\prime$, $\bm{x}_{(a:b)}:= (\bm{x}_a, \dots, \bm{x}_{b})^\prime$,
${\bm{\zeta}_{(a:b)}:= (\bm{\zeta}_a, \dots, \bm{\zeta}_{b})^\prime}$; 
$\mathcal{Y} \in \R^{n \times p_y} $, $\mathcal{X} \in \R^{n \times k_n p_x} $, $\bm{\Theta}(k_n) \in \R^{k_np_x \times  p_y}$ and $E \in \R^{n \times p_y} $. Letting $\pi_n =k_np_xp_y $, $\mathbf{y} = \text{vec}(\mathcal{Y})$,  $\mathbf{Z} = I_{p_y} \otimes  \mathcal{X} $, $\bm{\theta} = \text{vec}(\mathbf{\Theta}(k_n))$, the regression model~\eqref{block_model_1} into vector form can be written as $\mathbf{y} =\mathbf{Z}\bm{\theta} + \text{vec}(E)$, where  $\mathbf{y} \in \R^{np_y }$, $\mathbf{Z} \in \R^{np_y \times \pi_n} $, $\bm{\theta} \in \R^{\pi_n }$ and $\text{vec}(E) \in \R^{np_y}$. Due to sparsity of parameter $\bm{\theta}$, one can estimate it by using an $\ell_1$-penalized least squares regression of the form:
 \begin{equation}\label{eq:estimation_block}
\widehat{\bm{\theta}} = \argmin_{\bm{\theta} \in \R^{\pi_n}} \left\{ \frac{1}{n} \| \mathbf{y} -\mathbf{Z}\bm{\theta} \|_2^2 + \lambda_{1,n} \|\bm{\theta} \|_1  + \lambda_{2,n} \sum_{i=1}^{k_n} \left\| \sum_{j=1}^{i} \bm{\Theta}_j \right\|_1  \right \} ,
\end{equation}
which uses a fused lasso penalty to control the number of change points and a lasso penalty to control the sparsity of the coefficient parameter in the model. Denote the sets of indices of  blocks with non-zero jumps  and estimated change points obtained from solving \eqref{eq:estimation_block} by 
{\small \[\widehat{I}_n =\left \{\widehat{i}_1, \widehat{i}_2, \dots, \widehat{i}_{\widehat{m}}\right\}= \left \{i :{\left\| \widehat{\bm{\Theta}}_i\right \|_F} \neq 0,\  i = 2,\dots, k_n \right\} \text{ and } \widehat{\mathcal{A}}_n = \left \{\widehat{t}_1, \dots, \widehat{t}_{\widehat{m}}\right\}= \left \{r_{i -1} : i \in  \widehat{I}_n\right\}, \]}
where $ \widehat{m} = \left| \widehat{\mathcal{A}}_n \right| $ and $\widehat{\mathcal{A}}_n \subset \left\lbrace {r}_1, \dots,{r}_{k_n-1}   \right\rbrace $. A data-driven method to select the optimal block size is introduced in Section~\ref{sec:optimalblock}.
\noindent
{\bf (Step II) Hard-thresholding Procedure.} The estimated change points obtained from solving \eqref{eq:estimation_block}
in the block fused lasso step include all block-end points with non-zero $\widehat{\bm{\Theta}}$,   
which lead to an over-estimation of the number of true change points in the model.
To remedy this issue, a hard-thresholding step to ``thin out'' redundant change points with small changes in the estimated coefficients is introduced. Intuitively, we keep estimated change points from the first step whose jumps are large enough (above a threshold). Specifically, denote the sets of indices of candidate blocks and estimated change points after hard-thresholding by 
 \[\widetilde{I}_n = \left \{i :\left\|\widehat{\bm{\Theta}}_i\right \|_F  > \omega_n,\  i = 2,\dots, k_n \right\} \text{ and } \widetilde{A}_n  = \left\{\widetilde{t}_1 ,\dots,  \widetilde{t}_{\widetilde{m} } \right\} = \left \{r_{i-1} : i \in  \widetilde{I}_n\right\}, \]
where $\omega_n$ is proportional to the minimum jump sizes
$\nu_n = \min_{1\leq j \leq m_0} \left\lVert \bm{B}_{j+1}^\star-\bm{B}_{j}^\star \right\rVert_F $.
Given the fact that the $\nu_n $ is unknown, we introduce a data-driven procedure to select a threshold value $\omega_n$ (see details in supplementary material~\ref{subsec:hard-thresholding-tuning}).

\noindent
{\bf (Step III) Exhaustive Search Procedure.} After hard-thresholding, the candidate change points that are located far from any true change points have been removed. However, there may be more than one selected change points remaining in the set $\widetilde{\mathcal{A}}_n$ in the neighborhoods of each true change point. Thus, we cluster the remaining estimated change points based on how close they are to each other. The idea is that the number of clusters is a reasonable estimate for $m_0$, the number of true change points. We consider a block clustering step which is based on a data-driven procedure to partition the $\widetilde{m}$ candidate change points into $\widetilde{m}^f$ clusters. In particular, we select the optimal number of clusters based on the idea of Gap statistics \citep{tibshirani2001estimating} (see details in supplementary material \ref{subsec:blockcluster}). For a set $A$, define cluster $(A)$ to be the partition of $A$ based on the clustering algorithm. Denote the subset in $ \mbox{cluster} \left( \widetilde{\mathcal{A}}_n \right)$ by $ \mbox{cluster} \left( \widetilde{\mathcal{A}}_n\right) = \left\lbrace R_1,  \ldots, R_{\widetilde{m}^f} \right\rbrace $, where $\widetilde{m}^f = \left\vert  \mbox{cluster} \left( \widetilde{\mathcal{A}}_n \right) \right\vert$. Denote the set of corresponding indices clusters by $ \mbox{cluster} \left( \widetilde{I}_n\right) = \left\lbrace J_1, J_2 \ldots, J_{\widetilde{m}^f} \right\rbrace $.

Next, we describe the local exhaustive search procedure to estimate location of change points. First, define the following local coefficient parameter estimates for each segment:
\begin{equation}\label{eq:beta_est_local}
{
 {\widehat{\bm{B}}}_{j} = \sum_{i=1}^{{\lfloor\frac{1}{2}\left(\max(J_{j-1})+\min(J_{j})\right) \rfloor} }\widehat{\bm{\Theta}}_i, \text{ for } j = 1, \dots, \widetilde{m}^f + 1,}
\end{equation}
where  $J_0 = \{1\}$, $J_{\widetilde{m}^f+1} = \{k_n\}$, and $\left \{{\widehat{\bm{\Theta}}}_{i}, i = 1,\dots, k_n \right  \}$ are matrix form parameters estimated from~\eqref{eq:estimation_block}. Define $l_j = (\min (R_j)-b_n) \mathbbm{1}_{\{\vert R_j\vert = 1\}} + \min (R_j) \mathbbm{1}_{\{\vert R_j\vert > 1\}}$ and $u_j = (\max (R_j)+b_n) \mathbbm{1}_{\{\vert R_j\vert = 1\}} + \max (R_j) \mathbbm{1}_{\{\vert R_j\vert > 1\}}$.
Now, given a subset $R_j$,  we apply the exhaustive search method for each time point $s$ in the interval $(l_j, u_j)$ to the data set truncated by the two end points in time, i.e. only consider the data within the interval $ [\min(R_j)-b_n , \max(R_j)+b_n) $. Specifically, define the final estimated change point $ \widetilde{t}_j$ as
\begin{equation}\label{eq:Exhaust:final}
\widetilde{t}_j^f = \argmin_{s \in (l_j, u_j)} \Bigg \{ \sum_{t= \min(R_j)-b_n}^{s-1} \left\| \bm{y}_{t}  - \widehat{\bm{B}}_{j}\bm{x}_{t} \right\|_2^2 +  \sum_{t= s}^{\max (R_j)+b_n-1} \left\| \bm{y}_{t}  - \widehat{\bm{B}}_{j+1}\bm{x}_{t} \right\|_2^2 \Bigg \},
\end{equation}
for $ j = 1, \ldots, \widetilde{m}^f $. Denote the set of final estimated change points from \eqref{eq:Exhaust:final} by $ \widetilde{\mathcal{A}}_n^f = \left\lbrace \widetilde{t}_1^f, \ldots, \widetilde{t}_{\widetilde{m}^f}^f  \right\rbrace $. Note that the local model parameter estimates ${\widehat{\bm{B}}}_{j}$'s defined in \eqref{eq:beta_est_local} can serve as estimation for parameters $\bm{B}_{j}^\star$'s. Thus, as mentioned in Section~\ref{sec:intro}, TBFL can estimate model parameters in parallel to change point detection without any refitting. To enhance the variable selection properties of model parameter estimates, we propose to hard-threshold ${\widehat{\bm{B}}}_{j}$'s. Specifically, define the thresholded estimate $\widetilde{\bm{B}}_{j}$ as  
\begin{equation}\label{est_coef_threshold}
{\widetilde{\bm{B}}}_{j} = \widehat{\bm{B}}_{j}\mathbbm{1}_{ \left\{ \vert \widehat{\bm{B}}_{j} \vert > \eta_{n, j}  \right\}}, \text{ for } j = 1, \dots, \widetilde{m}^f+1, 
\end{equation}
which is element-wise thresholding such that $ \widehat{\bm{B}}_{j,hl}  = 0$ if $\left\vert \widehat{\bm{B}}_{j,hl}  \right\vert  \leq \eta_{n,j} $ and unchanged otherwise, for all $j = 1, \dots, {\widetilde{m}^f}+1, h = 1,\dots, p_y, l = 1, \dots, p_x$. 
The thresholding parameter $\eta_{n, j}$ is selected using BIC criterion (see details in supplementary material~\ref{subsec:thresholding-tuning}).

\begin{remark}\label{rk:hard}
{
Note that the hard-thresholding (Step II) is only used for selecting potential change point locations with large changes in their estimated coefficients.
To guarantee a consistent estimation of
segment-specific model parameters $\bm{B}_j$'s,
those $\widehat{\Theta}_i$'s with smaller norm values are still kept in the local coefficient parameter estimates \eqref{eq:beta_est_local};
 See more discussion in supplementary material~\ref{subsec:para_est}.}
\end{remark}

\begin{remark}\label{rk:1}
The approximate computational complexity of TBFL method is $O \left( n/b_n+ b_n \right) $ for fixed $p_y$, $p_x$ and finite $m_0$.
The computational time  is of order $ O \left( n/b_n  \right) $ in the first step \citep{bleakley2011group}
and of order $ O \left( 2 b_n \right) $ in the   exhaustive search step.
 $b_n$ can essentially be selected as $n^{\epsilon}$ such that $0 \leq \epsilon < 1$.  Setting $\epsilon = 0$ (i.e., selecting $b_n$ as a constant)  yields to linear computational complexity (which matches the complexity in fused lasso). When $0 <\epsilon <1$, the computational complexity is of order $O \left( n^{\max(\epsilon, 1-\epsilon)} \right)$ which is \textit{sub-linear} with respect to the sample size.
\end{remark}

\section{Theoretical Properties}\label{sec:thm}

In this section, we provide asymptotic properties of TBFL in terms of both detection accuracy and model parameter estimation consistency. The following assumptions are needed:



\begin{itemize}
    \item[(A1.)] Lower restricted eigenvalue condition (Lower-RE condition). 
     There exist constants $c_1, c_2 > 0 $, a sequence $\delta_n \rightarrow + \infty$,
     {
     $a_n = \Omega( \log (p_xp_y \vee n)) $
     and parameters $\alpha_1 >0$ and $\tau = c_0\alpha_1 (a_n)^{-1} \log (p_xp_y \vee n)>0$ }
     such that with probability at least $ 1 - c_1 \exp(-c_2 \delta_n )  $ for all $ v \in \mathbb{R}^{p_xp_y}$, 
     \begin{equation}\label{eq:RE_bound}
\inf_{1 \leq j \leq {m_0+1}, t_j > u > l \geq t_{j-1}, | u - l  |  > a_n  }  v^\prime I_{p_y} \otimes \left({(l-u)}^{-1}  \sum_{t=l}^{u - 1} \bm{x}_{t}\bm{x}_{t}^\prime \right) v  \geq
\alpha_1 \Vert v \Vert_2^2 - \tau \Vert v \Vert_1^2.
\end{equation}

Upper restricted eigenvalue condition (Upper-RE condition). 
     There exist constants $c_1, c_2 > 0 $, a sequence $\delta_n \rightarrow + \infty$,
     {
     $a_n = \Omega( \log (p_xp_y \vee n) ) $ and parameters $\alpha_2 >0$ and $\tau = c_0\alpha_2{(a_n)}^{-1}{\log (p_xp_y \vee n)} >0$ }
     such that with probability at least $ 1 - c_1 \exp(-c_2 \delta_n )  $ for all $ v \in \mathbb{R}^{p_xp_y}$, 
     \begin{equation}\label{eq:upper-RE_bound}
\sup_{1 \leq j \leq {m_0+1}, t_j > u > l \geq t_{j-1}, | u - l  |  > a_n  }  v^\prime I_{p_y} \otimes \left({(l-u)}^{-1}  \sum_{t=l}^{u - 1} \bm{x}_{t}\bm{x}_{t}^\prime \right) v  \leq
\alpha_2 \Vert v \Vert_2^2 + \tau \Vert v \Vert_1^2.
\end{equation}
    

    \item[(A2.)]Deviation bound condition. 
    There exist constants $c_1, c_2 > 0 $ and a sequence $\delta_n \rightarrow + \infty$ such that with probability at least $ 1 - c_1 \exp(-\delta_n) $, for any sequence $a_n$,
\begin{equation}\label{eq:dev_bound}
{
\sup_{1 \leq j \leq {m_0+1},  t_j > u > l \geq t_{j-1}, | u - l  |  > a_n }  \left|\left| {(l-u)}^{-1}  \sum_{t=l}^{u - 1} \bm{x}_{t} \bm{\varepsilon}_t^\prime    \right|\right|_\infty \leq c_2 \sqrt{\frac{\log (p_xp_y \vee n)  }{a_n}}.}
\end{equation}
\item[(A3.)]The matrices $\bm{B}_{j}^\star$  are $d_j$-sparse. More specifically, for all $j = 1, \dots, m_0 + 1$, $d_{j} \ll p_x p_y$, i.e., $d_{j}/ (p_x p_y) = o(1)$. Moreover, there exists a positive constant $M_{\bm{B}}  > 0$ such that
    \[\max_{1 \leq j \leq m_0 +1}\left\lVert \bm{B}_{j}^\star \right\rVert_{\infty} \leq M_{\bm{B}} .\]
    
\item[(A4.)]
    Let $\nu_n = \min_{1\leq j \leq m_0} \left\lVert \bm{B}_{j+1}^\star-\bm{B}_{j}^\star \right\rVert_F $ and  $ \Delta_n = \min_{1 \leq j \leq m_0} |t_{j+1} - t_j| $.  
     There exists a  positive sequence $b_n$ such that, as $n \rightarrow \infty$,
\[{\frac{\Delta_n}{b_n }\rightarrow +\infty,   
\ d^\star_n \frac{\log( p_x p_y \vee n)}{b_n} \rightarrow 0,
\nu_n = \Omega\left(\sqrt{\frac{d^\star_n\log( p_x p_y \vee n)}{b_n}}\right)  
.}\]
\item[(A5.)] {The regularization
parameters $\lambda_{1,n}$ and $\lambda_{2,n}$  satisfy
$\lambda_{1,n} = C_1 \sqrt{{ \log( p_x p_y \vee n)}/{n}} \sqrt{{b_n}/{n}}$, and 
$\lambda_{2,n} = C_2\sqrt{{\log( p_x p_y \vee n)}/{n}}\sqrt{{b_n}/{n} }$ 
for some large constant $C_1, C_2 > 0$.}
\end{itemize}

Assumptions~A1 and A2 (known as restricted eigenvalue condition and deviation bound condition) are common assumptions in high-dimensional linear regression models \citep{loh2012} and hold for a wide range of models with possible temporal dependence \citep{Basu_2015}. These assumptions should hold uniformly over all $(m_0+1)$ segments due to changes in the model parameters. Assumption~A3 is related to the sparsity of the model which is needed due to the high-dimensionality of model (i.e. $p_x \gg n$ and $p_y \gg n$). Further, it puts an upper bound on the entries of coefficient matrices, which is a common assumption in change point detection literature (e.g. see Assumption A2 in \cite{safikhani2020joint}). 
Assumption~A4 connects several important quantities together including the minimum jump size required for coefficient matrices to make the change point detectable, the block size used in the TBFL algorithm, total sparsity allowed in the model, and the minimum spacing between consecutive change points. {Specifically, block size should be selected significantly smaller than $\Delta_n$ in order for TBFL not to miss any true break points (i.e., to ensure that there is at most one true change point in each block). The method can handle the case of diverging number of change points as well (i.e. $m_0 \to \infty$) as the sample size $n$ diverges.}
On the other hand, total sparsity allowed in the model, $d_n^*$, can increase proportionally to the block size $b_n$. Note that in the case of no change points, one can set $b_n=n$, thus the constrain on the model sparsity becomes similar to high-dimensional linear regression models with no change points \citep{loh2012}. Also, the higher the $b_n$, the smaller the jump size $\nu_n$ can be while the TBFL can still detect all change points in the model consistently. Finally, Assumption A5 specifies the rate of the tuning parameters $\lambda_{1,n}$ and $\lambda_{2,n}$ in the block fused lasso problem in \eqref{eq:estimation_block}. Note that again, in the case of no change points, one can pick $b_n=n$, and the rates in Assumption~A5 become the typical rates of tuning parameters in high-dimensional regression models \citep{loh2012}.

The first theorem is one of the main results about the false positive rate of the first step of TBFL as well as consistency of number of change points in the second step of TBFL.

\begin{theorem}\label{thm_selection_block}
Suppose A1-A5 hold. Then, as $n \rightarrow + \infty$
\[\mathbb{P} \left( d_H \left(\widetilde{\mathcal{A}}_n, \mathcal{A}_n \right)< b_n , m_0 \leq \left\vert \widetilde{\mathcal{A}}_n \right\vert \leq 2m_0 \mbox{ and } \widetilde{m}^f = m_0\right) \to 1.\]
     

\end{theorem}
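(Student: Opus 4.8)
The plan is to reduce the three simultaneous claims to a single high-probability estimation bound for the block fused lasso, and then read off the localization, counting, and clustering statements from geometry. Throughout I work on the event $\mathcal{E}_n$ on which both the restricted eigenvalue bounds \eqref{eq:RE_bound}--\eqref{eq:upper-RE_bound} and the deviation bound \eqref{eq:dev_bound} hold; by A1--A2 this event has probability at least $1 - c_1\exp(-c_2\delta_n)\to 1$, so it suffices to establish the stated inclusion deterministically on $\mathcal{E}_n$. I write $\bm{\theta}^\star$ (with blocks $\bm{\Theta}_i^\star$) for the true value of the parameter defined in Step I, so that $\sum_{j\le i}\bm{\Theta}_j^\star = \bm{B}_{k(i)}^\star$, the coefficient of the segment containing block $i$.

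First I would establish the master estimation inequality. Starting from optimality of $\widehat{\bm{\theta}}$ in \eqref{eq:estimation_block}, the basic inequality
\[
\tfrac{1}{n}\|\mathbf{Z}(\widehat{\bm{\theta}}-\bm{\theta}^\star)\|_2^2
\le \tfrac{2}{n}\langle \mathbf{Z}^\prime \text{vec}(E),\, \widehat{\bm{\theta}}-\bm{\theta}^\star\rangle
+ \lambda_{1,n}\big(\|\bm{\theta}^\star\|_1-\|\widehat{\bm{\theta}}\|_1\big)
+ \lambda_{2,n}\sum_{i=1}^{k_n}\Big(\big\|\textstyle\sum_{j\le i}\bm{\Theta}_j^\star\big\|_1-\big\|\textstyle\sum_{j\le i}\widehat{\bm{\Theta}}_j\big\|_1\Big)
\]
is controlled by applying the deviation bound \eqref{eq:dev_bound} block-by-block to the cross term together with the choice of $\lambda_{1,n},\lambda_{2,n}$ in A5. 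Because the design $\mathcal{X}$ is lower-triangular in the block index, the gradient in coordinate block $i$ is a cumulative sum $\sum_{t\ge r_{i-1}}\bm{x}_t\bm{\varepsilon}_t^\prime$, whose $\ell_\infty$ norm is bounded by $c_2\sqrt{\log(p_xp_y\vee n)/b_n}$ times the number of blocks it spans; the specific $\sqrt{b_n/n}$ scaling of $\lambda_{1,n},\lambda_{2,n}$ is exactly what makes this cross term dominated. Combining with the lower-RE inequality \eqref{eq:RE_bound} over segments of length exceeding $a_n$ yields, after a cone/compatibility argument in the fused parametrization, a uniform bound of the form
\[
\max_{1\le i\le k_n}\Big\|\textstyle\sum_{j=1}^{i}\widehat{\bm{\Theta}}_j-\sum_{j=1}^{i}\bm{\Theta}_j^\star\Big\|_F
= O\!\left(\sqrt{\tfrac{d_n^\star\log(p_xp_y\vee n)}{b_n}}\right),
\]
that is, the cumulative (segment-level) coefficient estimates are consistent at a rate that is $o(\nu_n)$ by A4.

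Second, I would convert this into statements about the individual jumps $\|\widehat{\bm{\Theta}}_i\|_F$. For any block $i\ge 2$ lying in the interior of a single segment (so $\bm{\Theta}_i^\star=0$ and both $r_{i-1},r_i$ are at least $a_n$ away from the nearest true change point), the displayed cumulative bound forces $\|\widehat{\bm{\Theta}}_i\|_F=o(\nu_n)$, hence below the threshold $\omega_n\asymp\nu_n$. Conversely, if $t_j\in[r_{i-1},r_i)$ for a true change point, then telescoping the cumulative estimates across the two block ends bracketing $t_j$ recovers the true jump $\|\bm{B}_{j+1}^\star-\bm{B}_{j}^\star\|_F\ge\nu_n$ up to the $o(\nu_n)$ error, so the estimated jump mass assigned to the block containing $t_j$ together with its immediate neighbor is at least $\nu_n(1-o(1))$. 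Consequently at least one of these (one or two) blocks survives thresholding, while no block farther than one block-width from every $t_j$ can survive. This simultaneously gives: (i) every $\widetilde{t}\in\widetilde{\mathcal{A}}_n$ is within $b_n$ of some $t_j$ and every $t_j$ has a surviving candidate within $b_n$, so $d_H(\widetilde{\mathcal{A}}_n,\mathcal{A}_n)<b_n$; and (ii) each true change point contributes between one and two surviving candidates, so $m_0\le|\widetilde{\mathcal{A}}_n|\le 2m_0$.

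Finally, for $\widetilde{m}^f=m_0$ I would use the spatial separation from A4. Since distinct true change points are separated by $\Delta_n$ with $\Delta_n/b_n\to\infty$, the surviving candidates fall into $m_0$ groups, each contained in an interval of length at most $2b_n$ around a single $t_j$, whereas the gap between consecutive groups is at least $\Delta_n-2b_n\gg b_n$. The Gap-statistic clustering of Step III, calibrated to this scale, then recovers exactly one cluster per group, giving $\widetilde{m}^f=m_0$; intersecting the three events completes the proof. The main obstacle is the second step: controlling the fused-lasso jump estimates tightly enough to show both that interior jumps are uniformly (over all $k_n$ blocks) below threshold and that the jump mass near each $t_j$ concentrates on at most two adjacent blocks. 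The delicacy is that $t_j$ need not align with a block boundary, so the true jump may be split between the block containing it and its successor; establishing the ``at most two blocks'' localization together with the matching lower bound on the surviving jump magnitude---uniformly over all segments via the restricted-eigenvalue inequality restricted to windows of length $>a_n$---is the crux of the argument.
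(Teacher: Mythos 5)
Your overall architecture is the right one and matches the paper's: jumps at block ends far from every true change point are $o(\nu_n)$ and get thresholded away; near each $t_j$ at least one of the (at most two) bracketing block ends survives with jump of order $\nu_n$; and $\Delta_n/b_n\to\infty$ separates the surviving candidates into exactly $m_0$ clusters. The gap is in how you propose to obtain the key estimation bound, and it is a real one. Your ``master inequality'' $\max_{1\le i\le k_n}\bigl\|\sum_{j\le i}\widehat{\bm{\Theta}}_j-\sum_{j\le i}\bm{\Theta}_j^\star\bigr\|_F=O\bigl(\sqrt{d_n^\star\log(p_xp_y\vee n)/b_n}\bigr)$ is false as stated for the block straddling a true change point: with the paper's convention $\sum_{j\le i}\bm{\Theta}_j^\star=\bm{B}_{j_0+1}^\star$ when $t_{j_0}\in[r_{i-1},r_i)$, but if $t_{j_0}$ sits near the right end of that block the fused lasso will place (all or part of) the estimated jump at $r_i$ rather than $r_{i-1}$, so $\sum_{j\le i}\widehat{\bm{\Theta}}_j$ stays near $\bm{B}_{j_0}^\star$ and the error is of order $\nu_n$, not $o(\nu_n)$. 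Indeed your own step~2 concedes that the jump mass may be split between two adjacent blocks, which is incompatible with the uniform bound you start from (that bound would force the entire jump into the single block containing $t_{j_0}$). This is exactly why the paper's Lemma~\ref{lemma_lm_theta_2} only proves $\max\bigl(\|\widehat{\bm{\Theta}}_{i}\|_F,\|\widehat{\bm{\Theta}}_{i+1}\|_F\bigr)\ge\nu_n/2-o_p(\nu_n)$ in its case (b), and why the theorem's conclusion allows up to $2m_0$ surviving candidates.

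There is a second, related obstruction: even away from change points, the route ``global basic inequality $+$ lower-RE $+$ cone argument'' does not deliver a per-block bound at rate $\sqrt{d_n^\star\log(p_xp_y\vee n)/b_n}$. The global basic inequality controls $n^{-1}\|\mathbf{Z}(\widehat{\bm{\theta}}-\bm{\theta}^\star)\|_2^2$, an average over all $n$ observations in which any single block contributes only a $b_n/n$ fraction, and the RE constant of the full cumulative design $\mathbf{Z}$ (whose column blocks are nested partial sums) is not the segmentwise constant of Assumption~A1. The paper circumvents both issues with a local perturbation argument: in Lemmas~\ref{lemma_lm_theta_1} and~\ref{lemma_lm_theta_2} it compares $\widehat{\bm{\theta}}$ with a competitor $\bm{\psi}$ that differs in only two coordinate blocks (the endpoints of a window of length $\ge b_n$ lying inside one true segment), so that the difference of penalized objectives telescopes to that window alone, and then applies A1--A2 restricted to that window to show the partial sums $\widehat{\bm{L}}_j$, $\widehat{\bm{L}}_{j+1}$ both converge to the local $\bm{B}_{j_0}^\star$; detection of every true change point is then obtained by contradiction (otherwise one partial sum would converge to two coefficients separated by $\nu_n$). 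To repair your proof you would need to replace the master inequality with this localized comparison (or an equivalent device), and weaken the near-change-point conclusion to the ``at least one of the two bracketing jumps is large'' form.
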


Theorem~\ref{thm_selection_block} states that the number of clusters obtained in the second step of TBFL is a consistent estimator for the number of true change points $m_0$, despite the fact that the total number of estimated change points in this step can be larger than $m_0$. Note that although the number of candidate change points in the second step of TBFL can be larger than $m_0$, but Theorem~\ref{thm_selection_block} states that it can be at most $2m_0$ with high probability. Moreover, all candidate change points in the second step of TBFL are within $b_n$-radius neighborhood of a true change point with high probability. In other words, none of candidate estimated change points are far from true change points, a result not true for fused lasso \citep{safikhani2020joint}.




By utilizing the exhaustive search procedure (third step  of TBFL), one can remove additional candidate estimated break points within estimated clusters in the second step of TBFL. The next theorem states the main result on accuracy for locating break points in TBFL.

\begin{theorem}\label{thm:final_consistency_rate}
    Suppose the Assumptions A1-A5 hold. Then as $n \to +\infty$, there exists a large enough constant $K>0$ such that 
    \begin{equation*}
    {
        \mathbb{P}\bigg( \max_{1\leq j \leq m_0} \left|\widetilde{t}^{f}_j - t_j \right| \leq \frac{K {d_n^\star} \log (p_x  p_y \vee n)}{\nu_n^2}  \bigg) \to 1.}
    \end{equation*}
\end{theorem}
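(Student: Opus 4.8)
The plan is to condition on the event of Theorem~\ref{thm_selection_block}, on which $\widetilde m^f = m_0$, each cluster $R_j$ lies within a $b_n$-neighborhood of a distinct true break $t_j$, and (using $\Delta_n/b_n\to\infty$ from A4) the search window $[\min(R_j)-b_n,\ \max(R_j)+b_n)$ in \eqref{eq:Exhaust:final} contains $t_j$ and sits strictly inside segments $j$ and $j+1$, away from every other break. Intersecting with the probability-tending-to-one events on which A1 and A2 hold and on which the local estimates satisfy $\|\widehat{\bm B}_j - \bm B_j^\star\|_F \lesssim \sqrt{d_n^\star \log(p_xp_y\vee n)/b_n}$ (the parameter-estimation rate carried over from the block fused lasso analysis underlying Theorem~\ref{thm_selection_block}), it suffices to bound $|\widetilde t_j^f - t_j|$ deterministically for each fixed $j$.

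First I would extract the basic inequality. Writing $L_j(s)$ for the objective in \eqref{eq:Exhaust:final} and using that $\widetilde t_j^f$ minimizes it, $L_j(\widetilde t_j^f)\le L_j(t_j)$; assume without loss of generality $\widetilde t_j^f < t_j$ (the case $\widetilde t_j^f > t_j$ is symmetric, with the roles of $\widehat{\bm B}_j,\widehat{\bm B}_{j+1}$ interchanged). Since the two candidate splits differ only on $t\in[\widetilde t_j^f,t_j)$, the inequality telescopes to
\begin{equation*}
\sum_{t=\widetilde t_j^f}^{t_j-1}\Big(\|\bm y_t - \widehat{\bm B}_{j+1}\bm x_t\|_2^2 - \|\bm y_t - \widehat{\bm B}_j\bm x_t\|_2^2\Big) \le 0 .
\end{equation*}
On this range $\bm y_t = \bm B_j^\star\bm x_t + \bm{\varepsilon}_{j,t}$, so expanding both squares and cancelling the common $\|\bm{\varepsilon}_{j,t}\|_2^2$ yields, with $L_j := t_j-\widetilde t_j^f$,
\begin{equation*}
\sum_{t}\|(\bm B_j^\star-\widehat{\bm B}_{j+1})\bm x_t\|_2^2 \le \sum_{t}\|(\bm B_j^\star-\widehat{\bm B}_j)\bm x_t\|_2^2 + 2\sum_{t}\big\langle (\widehat{\bm B}_{j+1}-\widehat{\bm B}_j)\bm x_t,\ \bm{\varepsilon}_{j,t}\big\rangle .
\end{equation*}

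Next I would bound the three pieces. For the signal term, split $\bm B_j^\star-\widehat{\bm B}_{j+1} = (\bm B_j^\star-\bm B_{j+1}^\star) + (\bm B_{j+1}^\star-\widehat{\bm B}_{j+1})$, so that by the lower-RE condition (A1) applied to the length-$L_j$ window inside segment $j$ it is $\gtrsim L_j\,\alpha_1\,\nu_{n,j}^2$ up to a subtracted estimation-error piece, where $\nu_{n,j}=\|\bm B_{j+1}^\star-\bm B_j^\star\|_F\ge\nu_n$ and the $\tau\|\cdot\|_1^2$ correction is absorbed using $d_n^\star$-sparsity and $a_n\gtrsim d_n^\star\log(p_xp_y\vee n)$. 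The remaining right-hand (and subtracted) estimation-error terms are handled by the upper-RE condition plus the consistency rate, giving $\lesssim L_j\, d_n^\star\log(p_xp_y\vee n)/b_n$, which by A4 is a small multiple of $L_j\nu_{n,j}^2$ and can be absorbed into the signal provided the hidden constant in $\nu_n=\Omega(\sqrt{d_n^\star\log(p_xp_y\vee n)/b_n})$ is large enough. For the cross term, rewrite it as $\langle \widehat{\bm B}_{j+1}-\widehat{\bm B}_j,\ \sum_t \bm{\varepsilon}_{j,t}\bm x_t^\prime\rangle$ and apply Hölder with the deviation bound (A2) to get $\lesssim \|\widehat{\bm B}_{j+1}-\widehat{\bm B}_j\|_1\sqrt{L_j\log(p_xp_y\vee n)}$; since $\widehat{\bm B}_{j+1}-\widehat{\bm B}_j$ is, up to estimation error, the $O(d_n^\star)$-sparse jump, $\|\widehat{\bm B}_{j+1}-\widehat{\bm B}_j\|_1\lesssim \sqrt{d_n^\star}\,\nu_{n,j}$, so this term is $\lesssim \nu_{n,j}\sqrt{d_n^\star L_j\log(p_xp_y\vee n)}$. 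Combining gives $c\,L_j\nu_{n,j}^2\lesssim \nu_{n,j}\sqrt{d_n^\star L_j\log(p_xp_y\vee n)}$; dividing by $\nu_{n,j}\sqrt{L_j}$ and squaring yields $L_j\lesssim d_n^\star\log(p_xp_y\vee n)/\nu_{n,j}^2\le d_n^\star\log(p_xp_y\vee n)/\nu_n^2$, which is the claimed rate after maximizing over $j$ and choosing $K$.

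The main obstacle will be the interplay between the length of the window $[\widetilde t_j^f,t_j)$ and the minimal window length $a_n$ for which A1 and A2 are usable: the restricted-eigenvalue lower bound is unavailable when $L_j\le a_n$, so I would split into cases, disposing of $L_j\le a_n$ directly (there $L_j$ already lies below the target rate because $a_n\asymp d_n^\star\log(p_xp_y\vee n)\lesssim d_n^\star\log(p_xp_y\vee n)/\nu_n^2$) and running the argument above only for $L_j>a_n$. The second delicate point is purely signal-to-noise accounting, namely ensuring the estimation-error term is genuinely dominated by the signal term; this is exactly what the lower bound on $\nu_n$ in A4 buys, provided its constant is taken large enough. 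A final technical ingredient is to verify that the unthresholded local estimates $\widehat{\bm B}_j$ defined in \eqref{eq:beta_est_local} inherit both an $O(d_n^\star)$ support size and the Frobenius consistency rate used above, which follows from the block fused lasso estimation guarantee.
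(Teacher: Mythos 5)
Your proposal is correct and follows essentially the same route as the paper's proof: both reduce the basic inequality $L_j(\widetilde t_j^f)\le L_j(t_j)$ to the interval between $\widetilde t_j^f$ and $t_j$, lower-bound the signal term via the lower-RE condition together with $\|B_{j+1}^\star-B_j^\star\|_1\le\sqrt{2d_n^\star}\,\nu_{n,j}$, control the estimation-error and cross terms via the upper-RE condition, the deviation bound, and the Frobenius consistency of $\widehat{\bm B}_j$ inherited from the block fused lasso lemmas, and then balance $L_j\nu_{n,j}^2$ against $\nu_{n,j}\sqrt{d_n^\star L_j\log(p_xp_y\vee n)}$ using A4 (the paper phrases this as a contradiction with $L_j>Kd_n^\star\log(p_xp_y\vee n)/\nu_n^2$ and keeps the four sums $I_1,\dots,I_4$ separate, which is only a presentational difference). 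Your explicit case split for windows shorter than $a_n$, where the RE conditions are not applicable, is a detail the paper's proof does not address and is a sensible addition rather than a deviation.
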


Theorem~\ref{thm:final_consistency_rate} states the localization error of TBFL algorithm uniformly over all $m_0$ change points. It scales logarithmically with respect to the model dimensions $p_x$ and $p_y$. Moreover, small jump sizes can potentially worsen the consistency rate for locating break points since the localization error scales proportionally with respect to the reciprocal of $\nu_n^2$. Note that the rate stated in Theorem~\ref{thm:final_consistency_rate} is optimal up to a logarithm factor \citep{csorgo1997limit}.  


Finally, consistent estimation of segment-specific model parameters can be achieved, as stated in the following theorem.

\begin{theorem}\label{thm:estimation}
    Suppose the Assumptions A1-A5  hold. 
    Then solution $\widehat{\bm{B}}_j$ from \eqref{eq:beta_est_local} satisfies 
    \begin{equation*}
    \max_{ 1 \leq j \leq m_0 + 1 } \left \Vert  \widehat{\bm{B}}_j - \bm{B}_j^\star \right\Vert_F =  O_p \left(\sqrt{\frac{d^\star_n {\log (p_xp_y \vee n) }}{b_n}}\right).
    \end{equation*}
Further, 
    {
     if {$\eta_{n,j} = {C_j}\sqrt{\frac{{\log (p_x p_y \vee n)} }{b_n}}$ for some positive constant $C_j$}, the thresholded variant $\widetilde{\bm{B}}_j$ in \eqref{est_coef_threshold}  satisfies
\begin{equation*}
   \max_{ 1 \leq j \leq m_0 + 1 }  \left \vert  \text{supp}(\widetilde{\bm{B}}_j) \backslash  \text{supp}( {\bm{B}}_j^\star) \right\vert =  O_p \left(d^\star_n\right).
    \end{equation*} }
\end{theorem}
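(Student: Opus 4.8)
The plan is to establish the Frobenius-norm rate (the first display) first and then derive the support-overselection bound (the second display) from it by an elementary thresholding argument. Throughout I would condition on the high-probability event $\mathcal{E}$ furnished by Theorem~\ref{thm_selection_block}, on which $\widetilde{m}^f = m_0$, each cluster $J_j$ sits within a $b_n$-neighborhood of the true change point $t_j$, and $d_H(\widetilde{\mathcal{A}}_n, \mathcal{A}_n) < b_n$; since $\mathbb{P}(\mathcal{E}) \to 1$, the complement is absorbed into the $O_p(\cdot)$ statements.

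The first step is a telescoping reduction. Writing $\bm{\beta}_i = \sum_{i' \leq i} \bm{\Theta}_{i'}$ for the block-level coefficient, the definition in \eqref{eq:beta_est_local} gives $\widehat{\bm{B}}_j = \widehat{\bm{\beta}}_{I_j}$ with $I_j = \lfloor \tfrac{1}{2}(\max(J_{j-1}) + \min(J_j)) \rfloor$. On $\mathcal{E}$, Assumption A4 ($\Delta_n/b_n \to \infty$) guarantees that the $b_n$-neighborhoods of $t_{j-1}$ and $t_j$ are disjoint and that the midpoint block $[r_{I_j - 1}, r_{I_j})$ lies strictly inside the $j$-th true segment, containing no true change point. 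Hence the true cumulative sum telescopes to $\bm{\beta}_{I_j}^\star = \bm{B}_1^\star + \sum_{k=1}^{j-1}(\bm{B}_{k+1}^\star - \bm{B}_k^\star) = \bm{B}_j^\star$, so that $\widehat{\bm{B}}_j - \bm{B}_j^\star = \widehat{\bm{\beta}}_{I_j} - \bm{\beta}_{I_j}^\star$, reducing the problem to bounding the block-level estimation error at a well-placed interior block.

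The core of the argument, and the main obstacle, is to show $\|\widehat{\bm{\beta}}_{I_j} - \bm{\beta}_{I_j}^\star\|_F = O_p(\sqrt{d_n^\star \log(p_x p_y \vee n)/b_n})$. I would run the standard high-dimensional M-estimation analysis on the penalized problem \eqref{eq:estimation_block}: starting from the basic inequality implied by optimality of $\widehat{\bm{\theta}}$, control the noise cross-term with the deviation bound A2 (which produces a per-block noise level of order $\sqrt{\log(p_x p_y \vee n)/b_n}$, matched by the tuning rates in A5), and lower-bound the quadratic term block-by-block using the Lower-RE condition A1, exploiting that the design $\mathcal{X}$ is block lower-triangular so that the prediction error decomposes across blocks with the block-$i$ contribution governed by $\widehat{\bm{\beta}}_i - \bm{\beta}_i^\star$. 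The delicate point is to isolate a single interior block's error at the $b_n$ rate without incurring an aggregate factor of $m_0$, and to ensure the $\tau \|\cdot\|_1^2$ slack in A1 is negligible relative to $\alpha_1 \|\cdot\|_2^2$; here the sparsity bound A3 together with $d_n^\star \log(p_x p_y \vee n)/b_n \to 0$ from A4 makes the $\ell_1$-error contribution lower order. A union bound over the at most $m_0 + 1$ segments, absorbed by the exponential tails in A1--A2, then yields the first display uniformly in $j$.

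Finally, the support bound follows cheaply from the first display. If $(h,l)$ is a false positive for segment $j$, i.e.\ $\bm{B}_{j,hl}^\star = 0$ yet $|\widehat{\bm{B}}_{j,hl}| > \eta_{n,j}$, then $|\widehat{\bm{B}}_{j,hl} - \bm{B}_{j,hl}^\star| > \eta_{n,j}$, so each false positive contributes at least $\eta_{n,j}^2$ to $\|\widehat{\bm{B}}_j - \bm{B}_j^\star\|_F^2$, giving
\begin{equation*}
\left| \text{supp}(\widetilde{\bm{B}}_j) \backslash \text{supp}(\bm{B}_j^\star) \right| \leq \eta_{n,j}^{-2} \left\| \widehat{\bm{B}}_j - \bm{B}_j^\star \right\|_F^2 .
\end{equation*}
Substituting $\eta_{n,j}^2 = C_j^2 \log(p_x p_y \vee n)/b_n$ and the first display makes the $\log(p_x p_y \vee n)/b_n$ factors cancel, leaving $O_p(d_n^\star)$ uniformly over $j$, as claimed. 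Steps one and three are essentially bookkeeping once Theorem~\ref{thm_selection_block} and the block-level rate are in hand.
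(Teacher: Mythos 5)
Your reduction of $\widehat{\bm{B}}_j-\bm{B}_j^\star$ to the block-level error $\widehat{\bm{\beta}}_{I_j}-\bm{\beta}_{I_j}^\star$ at the midpoint block is exactly how the paper sets things up (the paper phrases it via the partial sums $\widehat{\bm{L}}_j=\sum_{k\le I_j}\widehat{\bm{\Theta}}_k$ and notes that $I_j$ is at distance at least $b_n$ from the adjacent clusters), and your list of ingredients (basic inequality, A1, A2, the tuning rates in A5) is the right one. But the step you yourself flag as ``the delicate point'' --- extracting a single interior block's error at the $\sqrt{d_n^\star\log(p_xp_y\vee n)/b_n}$ rate without an aggregate factor --- is precisely the substantive content of the paper's proof, and your proposal does not supply the mechanism. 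A global basic inequality on \eqref{eq:estimation_block} controls the \emph{sum} of per-block prediction errors by a noise/penalty term that also aggregates over all $k_n$ blocks, so peeling off one block's contribution does not by itself give the per-segment rate. The paper's device (Lemma~\ref{lemma_lm_theta_1}) is a \emph{local comparator}: it defines $\bm{\psi}$ equal to $\widehat{\bm{\theta}}$ in every block except the two bracketing the segment of interest, where the cumulative sum is reset to $\bm{B}_{j_0}^\star$. Because $\widehat{\bm{\theta}}$ and $\bm{\psi}$ agree outside that segment, the difference of objective values in the basic inequality telescopes to a purely local quantity over an interval of length at least $b_n$, to which A1 and A2 apply directly; the cone inequality $\Vert v\Vert_1\le 4\sqrt{d_n^\star}\,\Vert v\Vert_F$ extracted from the same display then closes the bound. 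Without this localization (or an equivalent one), the first display of the theorem is not established.

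Your second part is correct and is a genuinely different, and arguably cleaner, route than the paper's: you count false positives by noting each contributes at least $\eta_{n,j}^2$ to $\Vert\widehat{\bm{B}}_j-\bm{B}_j^\star\Vert_F^2$, giving the bound $\eta_{n,j}^{-2}\Vert\widehat{\bm{B}}_j-\bm{B}_j^\star\Vert_F^2=O_p(d_n^\star)$ with no further structure needed. The paper instead uses a Markov-type $\ell_1$ bound, $\sum_{s\notin S}\mathbbm{1}_{\{|\widehat{\bm{B}}_{j,s}|>\eta_{n,j}\}}\le \eta_{n,j}^{-1}\sum_{s\notin S}|v_s|$, combined with the cone condition $\sum_{s\notin S}|v_s|\le 3\sum_{s\in S}|v_s|$ and $\Vert v\Vert_1\le 4\sqrt{d_n^\star}\Vert v\Vert_F$, arriving at the same $O_p(d_n^\star)$. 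Your version avoids the cone condition entirely, at the cost of nothing here since both bounds land on the same order; either is acceptable once the first display is in hand.
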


Theorem~\ref{thm:estimation} states that the estimator $\widehat{\bm{B}}_j$ of model parameter has proper consistency results while its thresholded version $\widetilde{\bm{B}}_j$ satisfies satisfactory variable selection property. Note that in the case of no break points, by selecting $b_n=n$, the rates stated in Theorem~\ref{thm:estimation} match the typical consistency rates in high-dimensional regression models \citep{loh2012,Basu_2015}. Thus, the $b_n$ in the denominator of consistency rate in Theorem~\ref{thm:estimation} serves as a proxy for the sample size in each segment.

\section{Examples of Models}\label{sec:example}

In this section, we list two examples of specific well-known models which would fit into the modeling framework~\eqref{eq:model_1}. A third example on high-dimensional regression model is presented in supplementary materials, Section~\ref{sec:MLR}, due to space limitations.

\subsection{Mean Shift Model}
We consider a simple regression model that the values of mean change over time. 
In this case,  setting the parameters $\bm{x}_t = 1$, $\bm{B}_{j}^\star = \bm{\mu}_{j}^\star $, $p_x = 1$, $p_y = p$ in  the model representation  in \eqref{eq:model_1}, the  structural break mean shift model is given by 
\begin{equation}\label{eq:model_mean}
    \bm{y}_t = \sum_{j = 1}^{m_0+1}\left (\bm{\mu}_{j}^\star + \bm{\varepsilon}_{j,t} \right ) \mathbbm{1}_{\{t_{j-1}\leq t <t_j \}} , \   t =  1,\dots, n,
\end{equation}
where $\bm{\bm{y}}_t \in \R^{p}$ is the observation vector at time $t$; $\bm{\mu}_{j}^\star \in \R^{p}$ is the sparse mean vector during the $j$th segment;
{and  $\bm{\varepsilon}_{j,t} \in \R^{p}$ is
multivariate white noise during the $j$th segment at time $t$.}
Define $\bm{\Theta}_1 = \bm{\mu}_{1}^\star$,  $\bm{\Theta}_i = \bm{\mu}_{j+1}^\star -\bm{\mu}_{j}^\star$ when $t_j \in [r_{i-1}, r_{i})$ and $\bm{\Theta}_i = \bm{0}$ otherwise,
for $i = 2,3,\dots, k_n$.
In this case, the linear regression model in terms of ${\Theta}$ can be written as
{\scriptsize
\begin{equation}\label{block_model_mean}
\underbrace{\left(\begin{array}{c}
\bm{y}_{(1: r_1-1)}\\
\bm{y}_{(r_1: r_2-1)}\\
\vdots \\
\bm{y}_{(r_{k_n-1}:r_{k_n}-1)}\\
\end{array}\right)}_{\mathcal{Y}}= 
\underbrace{\left(\begin{array}{c cccc}
\mathbf{1}_{(1: r_1-1)}  & \mathbf{0} &\dots & \mathbf{0}\\
\mathbf{1}_{(r_1: r_2-1)} & \mathbf{1}_{(r_1: r_2-1)} &\dots &\mathbf{0}\\
\vdots & \vdots & \ddots & \vdots \\
\mathbf{1}_{(r_{k_n-1}:r_{k_n}-1)} & \mathbf{1}_{(r_{k_n-1}:r_{k_n}-1)}& \dots  &\mathbf{1}_{(r_{k_n-1}:r_{k_n}-1)}\\
\end{array}\right)}_{\mathcal{X}}
\underbrace{
\begin{pmatrix}
{\bm{\Theta}_{1}}^\prime \\
{\bm{\Theta}_{2}}^\prime\\
\vdots \\
{\bm{\Theta}_{k_n}}^\prime\\
\end{pmatrix}}_{{\Theta}}
+\underbrace{
\left(\begin{array}{c}
{\bm{\zeta}_{(1: r_1-1)}}\\
{\bm{\zeta}_{(r_1: r_2-1)}}\\
\vdots \\
{\bm{\zeta}_{(r_{k_n-1}:r_{k_n}-1)}}\\
\end{array}\right)}_{E},   
\end{equation}}
where $\bm{y}_{(a:b)}:= (\bm{y}_a, \dots, \bm{y}_{b})^\prime$, 
${\bm{\zeta}_{(a:b)}:= (\bm{\zeta}_a, \dots, \bm{\zeta}_{b})^\prime}$; 
where  $\mathbf{1}_{(a:b)} \in \mathbb{R}^{b-a+1}$ is all-ones vector, $\mathcal{Y} \in \R^{n \times p} $, $\mathcal{X} \in \R^{n \times k_n} $, ${\Theta} \in \R^{k_n \times  p}$  and $E \in \R^{n \times p} $. The TFBL algorithm can be applied to this specific model, while following the algorithm described in Section~\ref{sec:algo}, the estimated coefficient parameters is given by 
\begin{equation}\label{est_coef_mean}
{
{\widehat{\bm{\mu}}}_{j} = \sum_{i=1}^{{\left \lfloor\frac{1}{2}\left(\max(J_{j-1})+\min(J_{j})\right)\right\rfloor}} \widehat{\bm{\Theta}}_i, \text{ for } j = 1, \dots, \widetilde{m}^f+1,} 
\end{equation}
and its thresholded variant estimate $\widetilde{\bm{\mu}}_{j}$ can be defined as  
\begin{equation}\label{est_coef_threshold_mean}
{\widetilde{\bm{\mu}}}_{j} = \widehat{\bm{\mu}}_{j}\mathbbm{1}_{ \left\{ \vert \widehat{\bm{\mu}}_{j} \vert > \eta_{n,j}  \right\}}, \text{ for } j = 1, \dots, \widetilde{m}^f+1. 
\end{equation}


To  establish  consistency  properties  of  the  detection/estimation  procedure,  the  following  assumptions  are needed:

\begin{itemize}
 \item[(B1.)] For the $j$-th segment, where $j = 1, 2, \dots, m_0 +1$, the process can be written as $\bm{y}_{j,t} = {\bm{\mu}_{j}^\star} + \bm{\varepsilon}_{j,t}$ where
 the error $\{\bm{\varepsilon}_{j,t}\}$ is a sub-Gaussian random vector with parameter $(\Sigma_{j}, \sigma_{j}^2)$ (see the details of sub-Gaussian definition in Appendix~\ref{sec:subGaussian}). Furthermore,
    {\footnotesize\[1/C_1 \leq  \min_{1\leq j \leq m_0+1}\Lambda_{\min}(\Sigma_{j}) \leq \max_{1\leq j \leq m_0+1}  \Lambda_{\text{max}}(\Sigma_{j}) \leq C_1, \text{ and } 
     1/C_2  < \min_{1\leq j \leq m_0+1} \sigma_{j}^2 \leq \max_{1\leq j \leq m_0+1} \sigma_{j}^2 <C_2,\]}
where $C_1$  and $C_2$ are positive constants.

    \item[(B2.)]The mean vectors  $\bm{\mu}_{j}^\star$  are sparse. More specifically, for all $j = 1,2,\dots, m_0+1$, $d_{j} \ll p$, i.e., $d_{j}/ p = o(1)$. Moreover, there exists a positive constant $M_{\bm{\mu}}  > 0$ such that
    \[\text{max}_{1 \leq j \leq m_0 +1}\lVert \bm{\mu}_{j}^\star \rVert_{\infty} \leq M_{\bm{\mu}} .\]
    
    \item[(B3.)]
    Let $\nu_n = \min_{1\leq j \leq m_0} \left\lVert \bm{\mu}_{j+1}^\star-\bm{\mu}_{j}^\star \right\rVert_2 $.  
     There exists a  positive sequence $b_n$ such that, as $n \rightarrow \infty$,
\[\frac{\min_{1\leq j \leq m_0+1} \vert t_{j} - t_{j-1}\vert}{b_n }\rightarrow +\infty,   
{\ d^\star_n \frac{\log (p\vee n)}{b_n} \rightarrow 0 \text{ and }
\nu_n = \Omega\left(\sqrt{\frac{d^\star_n \log (p\vee n)}{b_n}}\right).}
\]
\item[(B4.)] {The regularization
parameters $\lambda_{1,n}$ and $\lambda_{2,n}$  satisfy
$\lambda_{1,n} = C_1 \sqrt{{ \log (p\vee n)}/ {n}} \sqrt{{b_n} /{n}}$
, and 
$\lambda_{2,n} = C_2\sqrt{{\log (p\vee n)}/{n}}\sqrt{{b_n}/{n} }$ 
for some large constant $C_1, C_2 > 0$.}
\end{itemize}

The Assumption~B1 is a standard assumption in mean shift models and allows one to obtain necessary concentration inequalities needed in high dimensions including restricted eigenvalue condition and deviation bound condition (see \cite{loh2012}). Assumptions~B2-B4 are special cases of Assumptions A3-A5 in Section~\ref{sec:thm}.
The next theorem states the detection and estimation consistency of TBFL in mean shift model.

\begin{theorem}[Results for mean shift model]\label{thm:mean} 
 Suppose the Assumptions B1-B4 hold,  then
 there exists a large enough constant $K>0$ such that  as $n \to +\infty$,

  \[\mathbb{P} \left( \widetilde{m}^f = m_0 , \max_{1\leq j \leq m_0} \left|\widetilde{t}_j - t_j \right| \leq \frac{K {d_n^\star}  {\log (p\vee n)} }{\nu_n^2} \right) \to 1.\]
Also, the solution $\widehat{\bm{\mu}}_j$ from \eqref{est_coef_mean} satisfies 
    \begin{equation*}
    \max_{1\leq j \leq m_0 +1} \left \Vert  \widehat{\bm{\mu}}_j - \bm{\mu}_j^\star \right\Vert_F =  O_p \left(\sqrt{\frac{d^\star_n {{\log( p\vee n)} }}{b_n}}\right).
    \end{equation*}
Further, if {$\eta_{n,j} = {C_j}\sqrt{\frac{{\log (p \vee n)} }{b_n}}$ for some positive constant $C_j$},
the thresholded variant $\widetilde{\bm{\mu}}_j$ from \eqref{est_coef_threshold_mean}  satisfies
\begin{equation*}
   \max_{ 1 \leq j \leq m_0 + 1 }  \left\vert  \text{supp}(\widetilde{\bm{\mu}}_j) \backslash  \text{supp}( {\bm{\mu}}_j^\star) \right\vert = { O_p \left(d^\star_n\right).}
    \end{equation*} 
\end{theorem}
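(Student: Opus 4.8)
The plan is to treat Theorem~\ref{thm:mean} as a direct specialization of the general Theorems~\ref{thm_selection_block}, \ref{thm:final_consistency_rate}, and \ref{thm:estimation} to the mean shift model~\eqref{eq:model_mean}, in which $\bm{x}_t = 1$, $p_x = 1$, $p_y = p$, and $\bm{B}_j^\star = \bm{\mu}_j^\star$. Under this identification $p_x p_y = p$, and the Frobenius norm of a coefficient ``matrix'' is simply the $\ell_2$ norm of the corresponding mean vector, so each of the three displayed conclusions coincides verbatim with the conclusion of the corresponding general theorem once the latter are shown to apply. Since Assumptions~B2--B4 are exactly Assumptions~A3--A5 restricted to this setting, the only real work is to verify that the sub-Gaussian Assumption~B1 implies the restricted eigenvalue conditions~A1 and the deviation bound~A2. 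I would state this reduction first and then dispatch A1 and A2 in turn.

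The verification of A1 is immediate. With $\bm{x}_t = 1$ deterministic and $p_x = 1$, the empirical second-moment matrix $(u-l)^{-1}\sum_{t=l}^{u-1}\bm{x}_t\bm{x}_t^\prime$ equals the scalar $1$ for every window, so $I_{p_y}\otimes\big((u-l)^{-1}\sum_{t=l}^{u-1}\bm{x}_t\bm{x}_t^\prime\big) = I_p$ and $v^\prime I_p v = \|v\|_2^2$ for all $v \in \mathbb{R}^p$. Hence both the Lower-RE bound~\eqref{eq:RE_bound} and the Upper-RE bound~\eqref{eq:upper-RE_bound} hold deterministically with $\alpha_1 = \alpha_2 = 1$ and any $\tau > 0$, and in particular with the probability required in A1.

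The substantive step is A2. With $\bm{x}_t = 1$ the deviation quantity reduces to $\sup \|(u-l)^{-1}\sum_{t=l}^{u-1}\bm{\varepsilon}_t\|_\infty$, the supremum over admissible windows $(l,u)$ within segments of the coordinatewise maximum of the averaged noise. I would bound a single coordinate--window pair using the sub-Gaussian tail guaranteed by B1 (the coordinatewise variance proxy being controlled by $\Lambda_{\max}(\Sigma_j)$ and $\sigma_j^2$, both uniformly bounded), which yields a Gaussian-type tail of order $\exp(-c\,(u-l)\,s^2)$ for the average exceeding $s$. Taking $s$ of order $\sqrt{\log(p\vee n)/a_n}$ and union-bounding over the $p$ coordinates and the $O(n^2)$ windows with $|u-l| > a_n$ delivers the bound $c_2\sqrt{\log(p\vee n)/a_n}$ with probability at least $1 - c_1\exp(-\delta_n)$; the polynomially many windows contribute only an $O(\log n)$ term that is absorbed into $\log(p\vee n)$. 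This uniform maximal inequality is the main (and essentially the only nontrivial) obstacle, and the key point is that the sub-Gaussian tails are sharp enough for the union bound over windows not to inflate the rate.

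With A1--A5 in force, the three conclusions follow by invocation. Theorem~\ref{thm_selection_block} gives $\widetilde{m}^f = m_0$ with probability tending to one; Theorem~\ref{thm:final_consistency_rate}, specialized to $p_x p_y = p$, gives $\max_{1\le j\le m_0}|\widetilde{t}_j - t_j| \le K d_n^\star \log(p\vee n)/\nu_n^2$ (identifying the final estimates $\widetilde{t}_j = \widetilde{t}_j^f$), and together these establish the first displayed claim. Theorem~\ref{thm:estimation} gives the estimation rate $O_p(\sqrt{d_n^\star\log(p\vee n)/b_n})$ for $\widehat{\bm{\mu}}_j$ and, with the choice $\eta_{n,j} = C_j\sqrt{\log(p\vee n)/b_n}$, the support-overshoot bound $O_p(d_n^\star)$ for the thresholded $\widetilde{\bm{\mu}}_j$, which are the remaining two claims.
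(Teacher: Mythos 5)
Your proposal is correct and follows essentially the same route as the paper: reduce to Theorems~\ref{thm_selection_block}, \ref{thm:final_consistency_rate}, and \ref{thm:estimation} by noting B2--B4 are A3--A5 specialized to $p_x=1$, $p_y=p$, observe that A1 holds deterministically because the empirical Gram matrix is identically $1$ when $\bm{x}_t\equiv 1$, and verify A2 by a sub-Gaussian (Hoeffding-type) tail bound on the averaged noise followed by a union bound. If anything you are slightly more careful than the paper's own argument, which unions only over the $p$ coordinates and leaves the union over the polynomially many windows $(l,u)$ implicit; your observation that this extra union costs only a $\log n$ factor absorbed into $\log(p\vee n)$ is the correct way to close that small gap.
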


The localization error rate obtained in Theorem~\ref{thm:mean} for mean shift model is superior compared to some competing methods, e.g. the Sparsified Binary Segmentation (SBS) algorithm developed in \cite{cho2015multiple} and the Inspect algorithm \citep{wang2016high}. Note that our rate of consistency for estimating the break point locations is of order {${{d_n^\star} \log (p \vee n)}/{\nu_n^2} $, which could be as low as $ \left(\log (p\vee n)\right)^{1 + v}$  if we set a constant $\nu_n$ and $d_n^\star = \left(\log (p\vee n) \right)^{\nu}$}. \cite{cho2015multiple}  can achieve a similar rate when $\Delta_n$ is of order $n$. However, when $\Delta_n$ is smaller and is of order  $n^{\psi}$ for some $\psi \in (6/7, 1) $ , Cho \& Fryzlewicz’s rate of consistency will be of order $n^{2-2\psi}$, which is larger than our logarithmic rate. Moreover, \cite{wang2016high} proposed a two-stage procedure called ``Inspect'' for estimation of the change points. The Inspect method guarantees the recovery of the correct number of change points with high probability. Translating to our notation, their best localization error is at least of order $ m_0^4 { (\log n + \log p )}/{\nu_n^2}$  (see Theorem 5 in \cite{wang2016high}), where $m_0$ is the number of change points. This rate can be larger than the rate stated in Theorem~\ref{thm:mean}, specially when $m_0$ is large. We also compared the performance of these three methods (TBFL, SBS, and Inspect) numerically, see more details in Section~\ref{sec:comparison}.

\subsection{ Gaussian Graphical Model}

In this section, we consider a Gaussian graphical model with possible changes in its covariance (precision) matrix. Specifically, suppose there exist $m_0$ change points $\left \{t_1 , \dots , t_{m_0} \right\}$ such that $1 =t_0 <t_1 <\dots<t_{m_0} <t_{m_0+1} =n+1$, then  
\begin{equation}\label{eq:ggm_model_1}
\bm{x}_t \sim \sum_{j = 1}^{m_0+1} \mathcal{N}_p \left(\bm{0}, \Sigma_{j}\right) \mathbbm{1}_{\{t_{j-1}\leq t <t_j \}}, \ t =  1, \dots, n, 
\end{equation}
such that observations $\bm{x}_t \in \mathbb{R}^p$ are p-dimensional realizations
of a multivariate normal distribution with zero mean and covariance matrix $\Sigma_{j}$ during the $j$-th segment. Let $\Omega_{j}:= \Sigma_{j}^{-1}$ denote the precision matrix during the $j$-th segment, with elements $\left (\Omega_{j}({l,k}) \right)$, $1 \leq l, k \leq p$. We study the problem of estimating both the change points and the non-zero elements of the precision matrices. 
Setting the parameters  $\bm{x}_t = \bm{y}_t$, $p_x =p_y = p$ in  the model representation  in \eqref{eq:model_1}, the model \eqref{eq:ggm_model_1} can be equivalently expressed as the following regression equation (utilizing the neighborhood selection method developed in \cite{meinshausen2006high}):
\begin{equation}\label{eq:model_ggm}
    \bm{x}_t = \sum_{j = 1}^{m_0+1}\left (\bm{A}_{j}^\star \bm{\bm{x}}_t + \bm{\varepsilon}_{j,t} \right ) \mathbbm{1}_{\{t_{j-1}\leq t <t_j \}}, \ t =  1, \dots, n, 
\end{equation}
where $\bm{x}_t$ is the $p$-vector of observation at time $t$; 
$\bm{A}_{j}^\star \in \R^{p \times p}$ 
is the sparse coefficient matrix with zero diagonal during the $j$-th segment, such that  the off-diagonal elements
${\bm{A}^\star_{j}}({l,-l}) = \Sigma_{j}({l,-l})\left({\Sigma_{j}({-l,-l})} \right)^{-1} = -\left(\Omega_{j}({ l,l}) \right)^{-1}\Omega_{j}({ l,-l}) $,
where $\Sigma({-l,-k})$ is  the sub-matrix of $\Sigma$ with its $l$-th row and $k$-th column removed; $\Sigma({l,k})$ is  the entry of matrix $\Sigma$ that lies in the $l$-th row and $k$-th column; $\bm{\varepsilon}_{j,t}$ is a multivariate Gaussian white noise, such that $\bm{\varepsilon}_{j,t} \sim \mathcal{N}\left(0, \left(I_p - \bm{A}_{j}^\star\right) \Sigma_{j} \left(I_p - \bm{A}_{j}^\star\right)^\prime \right)$ 
where the variance of the $l$-th component in the error term $\text{Var} (\varepsilon_{j, t}(l)  ) = \Sigma_j({l,l}) -  \Sigma_j({l,-l}) \left(\Sigma_j({-l,-l}) \right)^{-1} \Sigma_j({-l,l})  $. Therefore, we have 
$\Omega_j({l,l}) =  \left(\text{Var} (\varepsilon_{j,t}(l) )\right)^{-1}, \text{ and }
    \Omega_j({l,-l}) =  -\left(\text{Var} (\varepsilon_{j,t}(l) )\right)^{-1}   { \bm{A}_{j}^\star}({l,-l}),
$
where $l=1,\dots ,p$, $j=1,\dots,m_0+1$, $t =1,  \dots, n$.
 The sparsity in the entries of $\Omega_{j}$ can be matched into sparsity in regression coefficient matrix
$\bm{A}_{j}^\star$'s.

Define $\bm{\Theta}_1 = \bm{A}_{j}^\star$,  $\bm{\Theta}_i =\bm{A}_{j+1}^\star-\bm{A}_{j}^\star$ when $t_j \in [r_{i-1}, r_{i})\  \text{for some}\ j$ and  $\bm{\Theta}_i = \bm{0}$ otherwise, 
for $i = 2,3,\dots, k_n$.
In this case, the linear regression model in terms of $\Theta$ can be written as
{\scriptsize
\begin{equation}\label{block_model_ggm}
\underbrace{\left(\begin{array}{c}
\bm{x}_{(1: r_1-1)}\\
\bm{x}_{(r_1: r_2-1)}\\
\vdots \\
\bm{x}_{(r_{k_n-1}:r_{k_n}-1)}\\
\end{array}\right)}_{\mathcal{Y}}= 
\underbrace{\left(\begin{array}{c cccc}
\bm{x}_{(1: r_1-1)}  & \mathbf{0} &\dots & \mathbf{0}\\
\bm{x}_{(r_1: r_2-1)} & \bm{x}_{(r_1: r_2-1)} &\dots &\mathbf{0}\\
\vdots & \vdots & \ddots & \vdots \\
\bm{x}_{(r_{k_n-1}:r_{k_n}-1)} & \bm{x}_{(r_{k_n-1}:r_{k_n}-1)}& \dots  &\bm{x}_{(r_{k_n-1}:r_{k_n}-1)}\\
\end{array}\right)}_{\mathcal{X}}
\underbrace{
\begin{pmatrix}
{\bm{\Theta}_{1}}^\prime \\
{\bm{\Theta}_{2}}^\prime\\
\vdots \\
{\bm{\Theta}_{k_n}}^\prime\\
\end{pmatrix}}_{{\Theta}}
+\underbrace{
\left(\begin{array}{c}
{\bm{\zeta}_{(1: r_1-1)}}\\
{\bm{\zeta}_{(r_1: r_2-1)}}\\
\vdots \\
{\bm{\zeta}_{(r_{k_n-1}:r_{k_n}-1)}}\\
\end{array}\right)}_{E},   
\end{equation}}
where  $\bm{x}_{(a:b)}:= (\bm{x}_a, \dots, \bm{x}_{b})^\prime$,
${\bm{\zeta}_{(a:b)}:= (\bm{\zeta}_a, \dots, \bm{\zeta}_{b})^\prime}$; 
$\mathcal{Y} \in \R^{n \times p} $, $\mathcal{X} \in \R^{n \times k_n p} $, ${\Theta} \in \R^{k_np \times  p}$ and $E \in \R^{n \times p} $. The TBFL algorithm can be applied to detect change points while the estimated coefficient parameters is given by 
\begin{equation}\label{est_coef_ggm}
{\widehat{\bm{A}}}_{j} = \sum_{i=1}^{{\left \lfloor\frac{1}{2}\left(\max(J_{j-1})+\min(J_{j})\right) \right \rfloor}} \widehat{\bm{\Theta}}_i, \text{ for } j= 1, \dots, \widetilde{m}^f+1,
\end{equation}
and its thresholded variant estimate $\widetilde{\bm{A}}_{i}$ as  
\begin{equation}\label{est_coef_threshold_ggm}
{\widetilde{\bm{A}}}_{j} = \widehat{\bm{A}}_{j}\mathbbm{1}_{ \left\{ \vert \widehat{\bm{A}}_{j} \vert > \eta_{n,j}  \right\}}, \text{ for } j = 1, \dots, \widetilde{m}^f+1. 
\end{equation}


To establish consistency properties of the detection/estimation procedure, the following assumptions are needed:
\begin{itemize}
\item[(D1.)] For each $j = 1, 2, \dots, m_0 +1$, the process follows the \eqref{eq:model_ggm} such that
$\bm{x}_{j,t} \sim \mathcal{N} (\bm{0}, \Sigma_{j})$ and $\bm{\varepsilon}_{j,t} \sim \mathcal{N}\left(0, \left(I_p - \bm{A}_{j}^\star\right) \Sigma_{j} \left(I_p - \bm{A}_{j}^\star\right)^\prime \right)$. 
Further,
{\footnotesize\begin{align*}
    1/C_1 \leq  \min_{1\leq j \leq m_0+1}\Lambda_{\min}(\Sigma_{j}) \leq \max_{1\leq j \leq m_0+1}  \Lambda_{\text{max}}(\Sigma_{j}) \leq C_1, \text{ and } 1/C_2 \leq  \min_{1\leq j \leq m_0+1, 1\leq l \leq p} (\Omega_j(l,l))^{-1},
\end{align*}}
where $C_1$  and $C_2$ are positive constants.

\item[(D2.)]The coefficient vectors  $\bm{A}_{j}^\star$  are sparse. More specifically, for all $j = 1,2,\dots, m_0+1$, $d_{j} \ll p^2$, i.e., $d_{j}/ p^2 = o(1)$. Moreover, there exists a positive constant $M_{\bm{A}}  > 0$ such that
    \[\text{max}_{1 \leq j \leq m_0 +1}\lVert \bm{A}_{j}^\star \rVert_{\infty} \leq M_{\bm{A}} .\]
    
    \item[(D3.)]
    Let $\nu_n = \min_{1\leq j \leq m_0} \lVert \bm{A}_{j+1}^\star-\bm{A}_{j}^\star \rVert_F $.  
     There exists a  positive sequence $b_n$ such that, as $n \rightarrow \infty$,
\[\frac{\min_{1\leq j \leq m_0+1} \vert t_{j} - t_{j-1}\vert}{b_n }\rightarrow +\infty,   
\ {d^\star_n \frac{\log (p\vee n )}{b_n} \rightarrow 0 \text{ and }
\nu_n = \Omega\left(\sqrt{\frac{d^\star_n\log ( p\vee n )}{b_n}}\right).}
\]

\item[(D4.)] The regularization
parameters $\lambda_{1,n}$ and $\lambda_{2,n}$  satisfy
{$\lambda_{1,n} = C_1 \sqrt{{ \log( p\vee n)}/{n}} \sqrt{{b_n}/{n}}$
, and 
$\lambda_{2,n} = C_2\sqrt{{\log( p \vee n) }/{n}}\sqrt{{b_n}/{n} }$ }
for some large constant $C_1, C_2 > 0$.

\end{itemize}

One can exclude singular or nearly singular covariance matrices based on Assumption D1, thus guaranteeing the uniqueness of $\bm{\Theta}$ \citep{wang2016precision, meinshausen2006high}. 
The RE condition (A1) and deviation bound (A2) holds under Assumption D1 (see more details in  Section 4 in  \cite{bickel2009simultaneous} and Lemma 12 in \cite{zhou2011high}).
Assumptions~D2-D4 are special cases of Assumptions A3-A5 in Section~\ref{sec:thm}.

The next theorem is about the detection and estimation consistency of TBFL applied to Gaussian graphical model with breaks.

\begin{theorem}[Results for Gaussian graphical model]\label{thm:ggm} 
   Suppose the Assumptions D1-D4 hold. Then
   there exists a large enough constant $K>0$ such that, 
   as $n \to +\infty$,
   \[\mathbb{P} \left( \widetilde{m}^f = m_0, \max_{1\leq j \leq m_0} \left|\widetilde{t}_j - t_j \right| \leq \frac{K {d_n^\star} {\log (p\vee n) }}{\nu_n^2} \right) \to 1.\]
Also, the solution $\widehat{\bm{A}}_j$ from \eqref{est_coef_ggm} satisfies 
    \begin{equation*}
    \max_{1 \leq j \leq m_0 + 1 }\left\Vert  \widehat{\bm{A}}_j - \bm{A}_j^\star \right\Vert_F =  O_p \left(\sqrt{\frac{d^\star_n{\log( p \vee n)} }{b_n}}\right).
    \end{equation*}
Further, 
if {$\eta_{n,j} = {C_j}\sqrt{\frac{{\log (p  \vee n)} }{b_n}}$ for some positive constant $C_j$},
the thresholded variant  $\widetilde{\bm{A}}_j$ from \eqref{est_coef_threshold_ggm}  satisfies
\begin{equation*}
   \max_{ 1 \leq j \leq m_0 + 1 }  \left\vert  \text{supp}\left(\widetilde{\bm{A}}_j \right) \backslash  \text{supp} \left( {\bm{A}}_j^\star\right) \right\vert = { O_p \left(d^\star_n\right).}
    \end{equation*} 
\end{theorem}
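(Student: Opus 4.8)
The plan is to recognize that the Gaussian graphical model \eqref{eq:ggm_model_1}, through its neighborhood-selection reformulation \eqref{eq:model_ggm}, is an instance of the general structural-break regression \eqref{eq:model_1} with $\bm{x}_t = \bm{y}_t$ and $p_x = p_y = p$. Consequently the whole conclusion follows once Assumptions~A1--A5 are verified under D1--D4: the three displayed claims are then exactly Theorems~\ref{thm_selection_block}, \ref{thm:final_consistency_rate} and \ref{thm:estimation} specialized to this design, with $\widetilde{t}_j$ here denoting the final estimate $\widetilde{t}_j^{f}$. Since A3, A4 and A5 are literal restatements of D2, D3 and D4 (carrying $p_x p_y = p^2$ and using $\log(p^2 \vee n) = O(\log(p\vee n))$, a constant absorbed into $K$ and the $O_p(\cdot)$ bounds), the only genuine work is to establish the Lower-/Upper-RE condition A1 and the deviation bound A2 for this particular design/noise pair.

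For A1, I would exploit that within the $j$-th segment the observations $\bm{x}_t$ are i.i.d.\ $\mathcal{N}(\bm{0},\Sigma_j)$ with $1/C_1 \le \Lambda_{\min}(\Sigma_j) \le \Lambda_{\text{max}}(\Sigma_j) \le C_1$ by D1, so the population Gram matrix is uniformly well-conditioned. Standard restricted-eigenvalue results for sub-Gaussian designs \citep{loh2012,Basu_2015,zhou2011high} then give, for any fixed window $[l,u)$ with $|u-l| > a_n$, the empirical inequalities \eqref{eq:RE_bound}--\eqref{eq:upper-RE_bound} with curvature constants $\alpha_1,\alpha_2$ proportional to the extreme eigenvalues of $\Sigma_j$ and tolerance $\tau$ of order $\log(p\vee n)/a_n$, on an event of probability at least $1 - c_1\exp(-c_2\delta_n)$. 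A union bound over the $O(n^2)$ windows and the $m_0+1$ segments is harmless because $a_n = \Omega(\log(p\vee n))$ controls the tolerance term and the failure probabilities decay faster than any polynomial in $n$; this yields the uniform infimum/supremum form required in A1.

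The \textbf{main obstacle} is the deviation bound A2, because here the ``design'' $\bm{x}_t$ and the ``noise'' $\bm{\varepsilon}_{j,t}$ are not independent---they are coordinates of the same Gaussian vector. The key is the neighborhood-selection structure of \eqref{eq:model_ggm}: for each coordinate $l$ one has $x_t(l) = \bm{A}_j^\star(l,-l)\,\bm{x}_t(-l) + \varepsilon_{j,t}(l)$, and by the elementary property of Gaussian conditional regression the residual $\varepsilon_{j,t}(l)$ is independent of the regressors $\bm{x}_t(-l)$. Hence every relevant entry $(l-u)^{-1}\sum_{t} x_t(k)\,\varepsilon_{j,t}(l)$ with $k \neq l$ of the cross-product matrix is an average of products of independent mean-zero sub-Gaussian variables, i.e.\ a sub-exponential average to which a Bernstein-type inequality applies; a union bound over the at most $p^2$ entries and over all windows of length exceeding $a_n$ produces the rate $c_2\sqrt{\log(p\vee n)/a_n}$ in \eqref{eq:dev_bound}. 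This is precisely the concentration established for neighborhood selection, so I would cite \citep{meinshausen2006high,bickel2009simultaneous,zhou2011high} rather than reprove the Bernstein step. Two subtle points to flag are (i) the zero-diagonal constraint on $\bm{A}_j^\star$, which keeps each coordinate out of its own predictor set and preserves the independence just used, and (ii) the verification that the error covariance $(I_p - \bm{A}_j^\star)\Sigma_j(I_p - \bm{A}_j^\star)^\prime$ has a sub-Gaussian parameter bounded by a constant, which follows from D1 since $\text{Var}(\varepsilon_{j,t}(l)) = (\Omega_j(l,l))^{-1}$ is bounded below and the eigenvalues of $\Sigma_j$ are controlled.

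With A1--A5 in force the conclusion is immediate: Theorem~\ref{thm_selection_block} gives $\widetilde{m}^{f} = m_0$ together with the $b_n$-localization of all candidates; Theorem~\ref{thm:final_consistency_rate} upgrades this to $\max_{j}|\widetilde{t}_j^{f} - t_j| \le K d_n^\star \log(p\vee n)/\nu_n^2$; and Theorem~\ref{thm:estimation} delivers both the Frobenius consistency of $\widehat{\bm{A}}_j$ from \eqref{est_coef_ggm} and, under the stated choice of $\eta_{n,j}$, the support-recovery bound $O_p(d_n^\star)$ for the thresholded estimate $\widetilde{\bm{A}}_j$ in \eqref{est_coef_threshold_ggm}. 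The remaining bookkeeping is simply to carry the substitution $p_x p_y = p^2$ through the rates, which alters the logarithmic factors by at most a constant.
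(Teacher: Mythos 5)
Your proposal is correct and follows essentially the same route as the paper: reduce to verifying Assumptions A1--A2 under D1 (D2--D4 being restatements of A3--A5), establish the RE conditions via standard sub-Gaussian/Gaussian design results and the deviation bound via the component-wise independence of $\varepsilon_{j,t}(l)$ from $\bm{x}_t(k)$ for $k\neq l$ together with a Bernstein-type bound and a union over entries and windows, then invoke Theorems~\ref{thm_selection_block}, \ref{thm:final_consistency_rate} and \ref{thm:estimation}. Your explicit handling of the design--noise dependence through the neighborhood-selection structure and the zero-diagonal constraint is exactly the mechanism the paper relies on (and spells out in its supplementary details on the Gaussian graphical model).
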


The localization error stated in Theorem~\ref{thm:ggm} is optimal up to a logarithmic factor. This rate is an improvement over the consistency rate of Group-Fused Graphical Lasso (GFGL) method developed in \cite{gibberd2017multiple} in which the localization error is of order $O(p^2 \log p/v_n^2)$  (as shown in Theorem 3.2 in \cite{gibberd2017multiple}). Moreover, TBFL achieves a better consistency rate compared with the $O(p \log n/v_n^2)$ localization error rate established in \cite{kolar2012estimating}. Finally, it achieves a similar consistency rate in terms of the localization error compared with the method in \cite{bybee2018change} for a single change point while there is no theoretical results for consistency of number of change points in the detection method developed in \cite{bybee2018change}. We investigated a numerical comparison between TBFL and the method developed in \cite{bybee2018change} in which TBFL outperforms the latter method both in terms of estimated number of change points and their location accuracy, see more details on this numerical comparison in Section~\ref{sec:sa} of the supplementary materials.

\section{Optimal Block Size Selection}\label{sec:optimalblock}

In this section, we develop a data-driven method to select the optimal block size. 
If the true number of change points $m_0$ is relatively small, the proposed TBFL algorithm is robust to changes in the block size $b_n$ (see more details in Section~\ref{Sim_Scenarios}), but for a large $m_0$, we propose to select the optimal block size by minimizing the High-dimensional Bayesian Information Criterion (HBIC) developed in \cite{wang2011consistent}  over a grid search domain. Specifically, we select the optimal $b_n$ as
\begin{align*}
    \widehat{b}_n = \argmin_{b_n \in S} \mbox{HBIC} (b_n) = \argmin_{b_n \in S} \left( n \log \left (\frac{1}{n}\mbox{RSS}(b_n)\right) + 2 \gamma \log(p_xp_y) \left\vert M(b_n) \right\vert \right)
\end{align*}
where 
$ \mbox{RSS}(b_n)= \sum_{j = 1}^{\widetilde{m}^f(b_n) + 1}\sum_{t= \widetilde{t}_{j-1}^f(b_n)}^{\widetilde{t}_{j}^f(b_n)-1} \left\| \bm{y}_{t}  - \widetilde{\bm{B}}_{j}(b_n)\bm{x}_{t} \right\|_2^2 $
is the residual sum of squares, $\widetilde{\bm{B}}_{j}(b_n)$, $\widetilde{m}^f(b_n)$ and $\widetilde{t}_{j}^f(b_n)$ are estimated parameters, number of change points and location of change points using block size $b_n$;
$|M(b_n)|= \sum_{j=1}^{\widetilde{m}^f(b_n) + 1} \widetilde{d}_j(b_n)$, where $\widetilde{d}_j(b_n)$ is the number of non-zero elements in the coefficient parameter $\widetilde{B}_j$ in \eqref{est_coef_threshold} while using the block size $b_n$. 
We follow \cite{wang2011consistent}'s suggestion for $\gamma$ selection.
Note that the detection and estimation results are robust with respect to changes in $\gamma$ as investigated in Section~\ref{sec:add_sim} in  supplementary materials. 
The details for selection of the search domain $S$ are provided in Section~\ref{subsec:domain} in supplementary materials.  

\section{Numerical Performance Evaluation}\label{sec:comparison}

In this section, we compare the empirical performance of our method (TBFL) with selected competing methods. For mean shift model, we compare our method with 
SBS \citep{cho2015multiple} and Inspect \citep{wang2016high}.
For Gaussian graphical model, we compare our method with the Simulated Annealing (SA) algorithm \citep{bybee2018change}.
We also evaluate the performance of the TBFL method with respect to both structural break detection and parameter estimation over several simulation scenarios. Due to space limitations, in this section, we only provide details of comparisons with SBS and Inspect, while details on comparison with the method developed in \citep{bybee2018change} and details of empirical performance of TBFL over several simulation scenarios are provided in supplementary material, Sections~\ref{sec:sa} and \ref{Sim_Scenarios}, respectively.

\begin{figure}[!ht]
\begin{center}
\includegraphics[width=0.24\linewidth, clip=TRUE, trim=0mm 0mm 00mm 10mm]{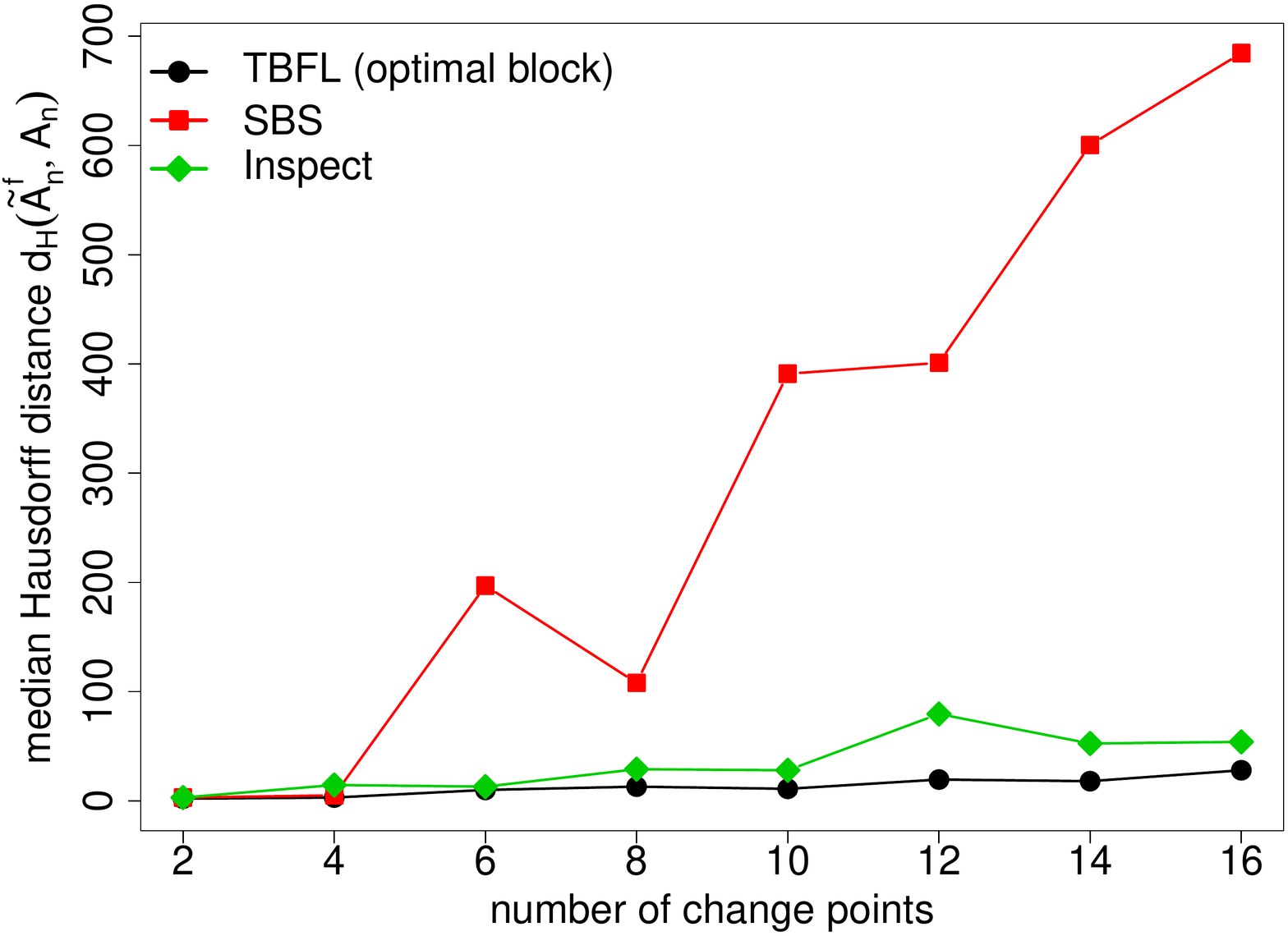}
\includegraphics[width=0.24\linewidth, clip=TRUE, trim=0mm 0mm 00mm 10mm]{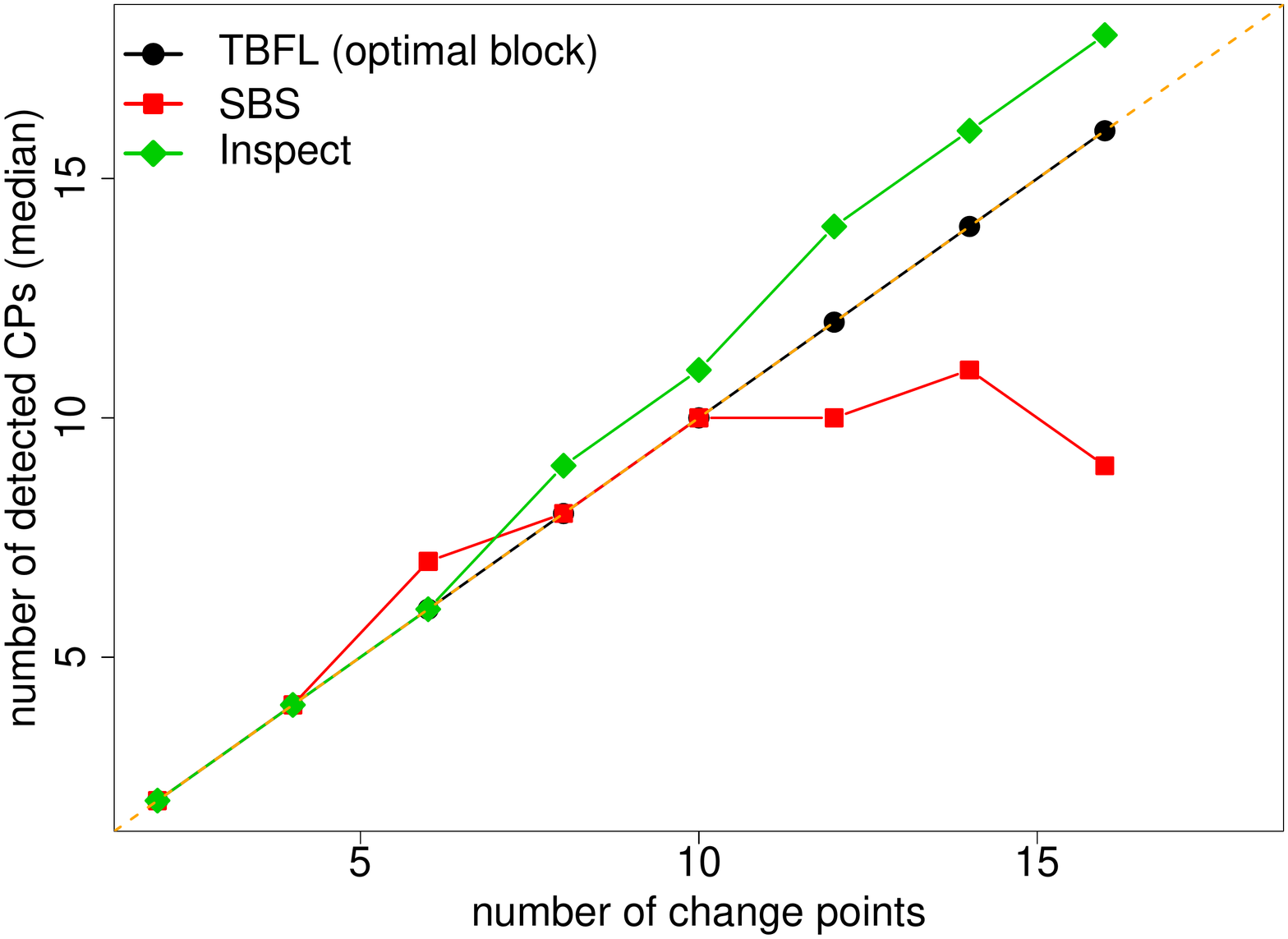}
\includegraphics[width=0.24\linewidth, clip=TRUE, trim=0mm 0mm 00mm 10mm]{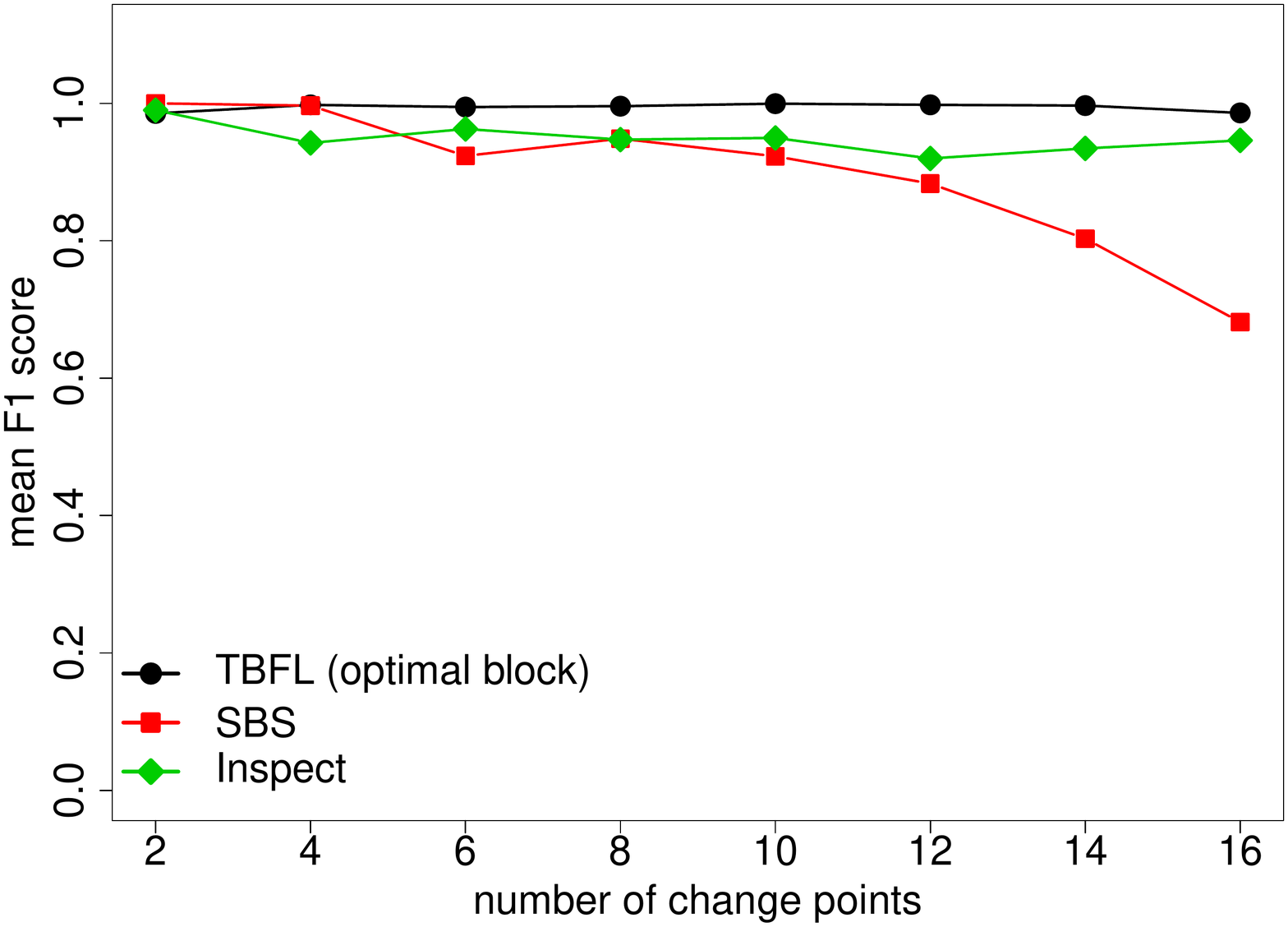}
\caption{
(a)  Hausdorff distance $d_H \left(  \widetilde{\mathcal{A}}_n^f, \mathcal{A}_n \right)$ for the TBFL, SBS and Inspect methods;
(b) median of number of detected change points for three methods; (c) F1 score.}
\label{fig:comp_performance_mean}
\end{center}
\end{figure}

Before describing the simulation settings, we need to list/define certain measures to compare detection performance among competing methods. First, the Hausdorff distance $d_H \left(  \widetilde{\mathcal{A}}_n^f, \mathcal{A}_n \right)$ is used as the measure for estimation accuracy of the location of break points. Moreover, following \cite{hushchyn2020online}, we define a set of correctly detected change-points as True Positive Change Points (TPCP): 
\[\text{TPCP} =   \left \{t_j  \Big| \exists \widetilde{t}_{j'}^f \text{ such that }\widetilde{t}_{j'}^f \in \left [ t_j - \frac{t_{j} - t_{j-1}}{5}, t_j + \frac{t_{j+1} - t_{j}}{5}  \right] , j = 1, \dots, m_0 \right\}. \]
Further, the Precision, Recall and F1-score are calculated as follows:
\begin{align*}
\mbox{Precision} = \frac{\vert \text{TPCP} \vert }{\widetilde{m}^f}, \quad \mbox{Recall} =\frac{\vert \text{TPCP} \vert }{m_0}, \quad \mbox{F1} = \frac{2 \cdot \mbox{Precision} \cdot \mbox{Recall}}{ \mbox{Precision} + \mbox{Recall}  },
\end{align*}
where $\vert \text{TPCP} \vert$ is the cardinality of set $\text{TPCP}$. The highest possible value of an F1 score is 1, indicating perfect precision and recall, and the lowest possible value is 0, if either the precision or the recall is zero. We select F1 score as another quantitative measurement to evaluate detection performances. Next, details of the simulation setting are explained.


\textit{ Setting A (Mean Shift Model)}. In the setting A, $n=5000$, $p=20$,  with the number of non-zero elements in $j$th segments $d_j= 2$, for all $j= 1, \dots, m_0 + 1$. 
The mean coefficient $\mu$ are chosen to be multivariate with  random sparse structure and random entries sampled from $\mbox{Uniform}(-1,-0.5)\mathbbm{1}_{\{j \text{ is odd}\}} + \mbox{Uniform}(0.5,1)\mathbbm{1}_{\{j \text{ is even}\}}$, for each $j = 1, \dots, m_0 + 1$.  We consider different setting of $m_0$ starting from 2 to 16.

\begin{table}[!ht]
\caption{\label{table_sim_E}Results of difference between $\widetilde{m}^f$ and $m_0$ for TBFL, SBS and Inspect methods in simulation scenario A. }
\tiny
\centering
\begin{tabular}{lccccccccc} 
  \hline
   method&$\left\vert \widetilde{m}^f -m_0\right\vert$ &  $m_0= 2$ & $m_0= 4$& $m_0= 6$& $m_0= 8$& $m_0= 10$& $m_0= 12$& $m_0= 14$& $m_0= 16$ \\
   \hline
   \multirow{4}{*}{TBFL} & $0$ & 94& \textbf{98}& \textbf{93}& \textbf{93}& \textbf{99} & \textbf{97}& \textbf{95} & \textbf{74}\\
   & $1$& 4& 2 & 7 &7 & 1 & 3 & 5 & 16\\
   & $2$& 2 & 0 & 0 & 0 &0 & 0& 0 & 6\\
   & $>2$& 0 & 0 &0 &0 &0 &0 &0 & 4\\
   \hline
   \multirow{4}{*}{SBS} & $0$ &\textbf{100}& 96& 30& 57 & 45& 8 & 1 & 0\\
   & $1$& 0 & 3 & 60 & 40 & 28 & 25& 11 & 2\\
   & $2$& 0 & 0 & 10 & 3 & 26 & 40 & 18 & 1\\
   & $>2$&0 & 1 & 0 & 0 & 1 & 27 & 70 & 97\\
   \hline
   \multirow{4}{*}{Inspect} & $0$ &95& 56 & 64& 38& 31 & 3 & 7 & 11\\
   & $1$& 5& 35& 24 & 37 & 41 & 28 & 22 & 32\\
   & $2$& 0 & 8 &9 & 19 & 18 & 36 & 42 & 30\\
   & $>2$&0 & 1 & 3 & 6 & 10 & 33 & 29 & 27\\
  \hline
  \end{tabular}
  \end{table}    


The detection results of three methods TBFL, SBS, and Inspect are summarized in Figure~\ref{fig:comp_performance_mean} and Table~\ref{table_sim_E}. As shown in Figure~\ref{fig:comp_performance_mean} (left panel), the Hausdorff distance between the set of estimated change points and true change points increases significantly for the SBS method when $m_0$ increases while TBFL and Inspect seem to be more stable. In the middle panel, median of number of detected change points is plotted for all three methods. It can be seen from this plot that Inspect (SBS) over-estimates (under-estimates) the true number of change points while TBFL correctly identifies $m_0$. Further, the right panel of Figure~\ref{fig:comp_performance_mean} depicts the F1 score in which it can be seen that for small $m_0$, all models perform reasonably well while TBFL outperforms SBS and Inspect for larger $m_0$. Overall, TBFL performs better than these two competing methods both in terms of estimating the number of change points and their locations. Finally, as shown in Table~\ref{table_sim_E}, among 100 replicates, our method can correctly estimate $m_0$ over 90\% replicates when $m_0 = 2$ to 14, while the SBS (Inspect) tends to underestimate (overestimate) the $m_0$ starting from $m_0 = 6$. Note that for $m_0=16$, TBFL only selects the true number of change points in $74\%$ replicates which implies that with model specifications in this simulation setting, TBFL has reached its detection limit.

\section{An Application to Electroencephalogram (EEG) Data}\label{sec:eeg}




In this section, 
we apply TBFL method and SA method \citep{bybee2018change} to an EEG data set analyzed in \cite{ANS9Q1_2019}. 
In this database, EEG signals from active electrodes for 72 channels are recorded at a sampling frequency of 256Hz,
 for a total of $\sim 3$ min. 
 The stimulus procedure tested on the selected subject comprised of three 1-min duration interleaved sessions with eyes open and closed.
To speed up the computations, we construct a subset of the EEG data observation by selecting one in every $16$ record. After de-trending and scaling the data, the final total time points is reduced to $n= 2,922$. The data is also pre-processed to remove the temporal structure pattern (more details are provided in supplementary material~\ref{sec:eeg_more}).

\begin{figure}[!ht]
\begin{center}
\includegraphics[width=0.32\linewidth, clip=TRUE, trim=0mm 0mm 00mm 0mm]{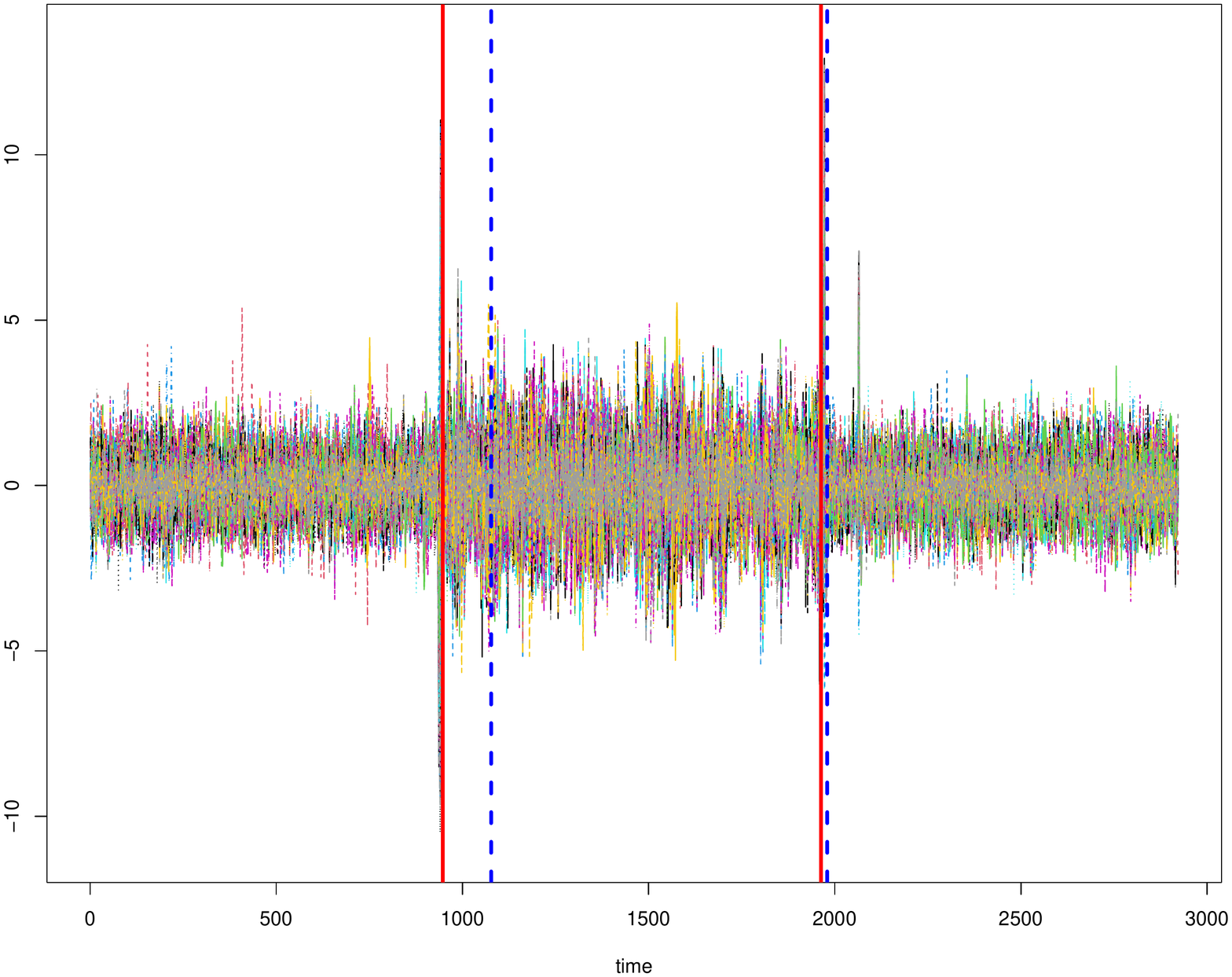}
\includegraphics[width=0.32\linewidth, clip=TRUE, trim=0mm 0mm 00mm 10mm]{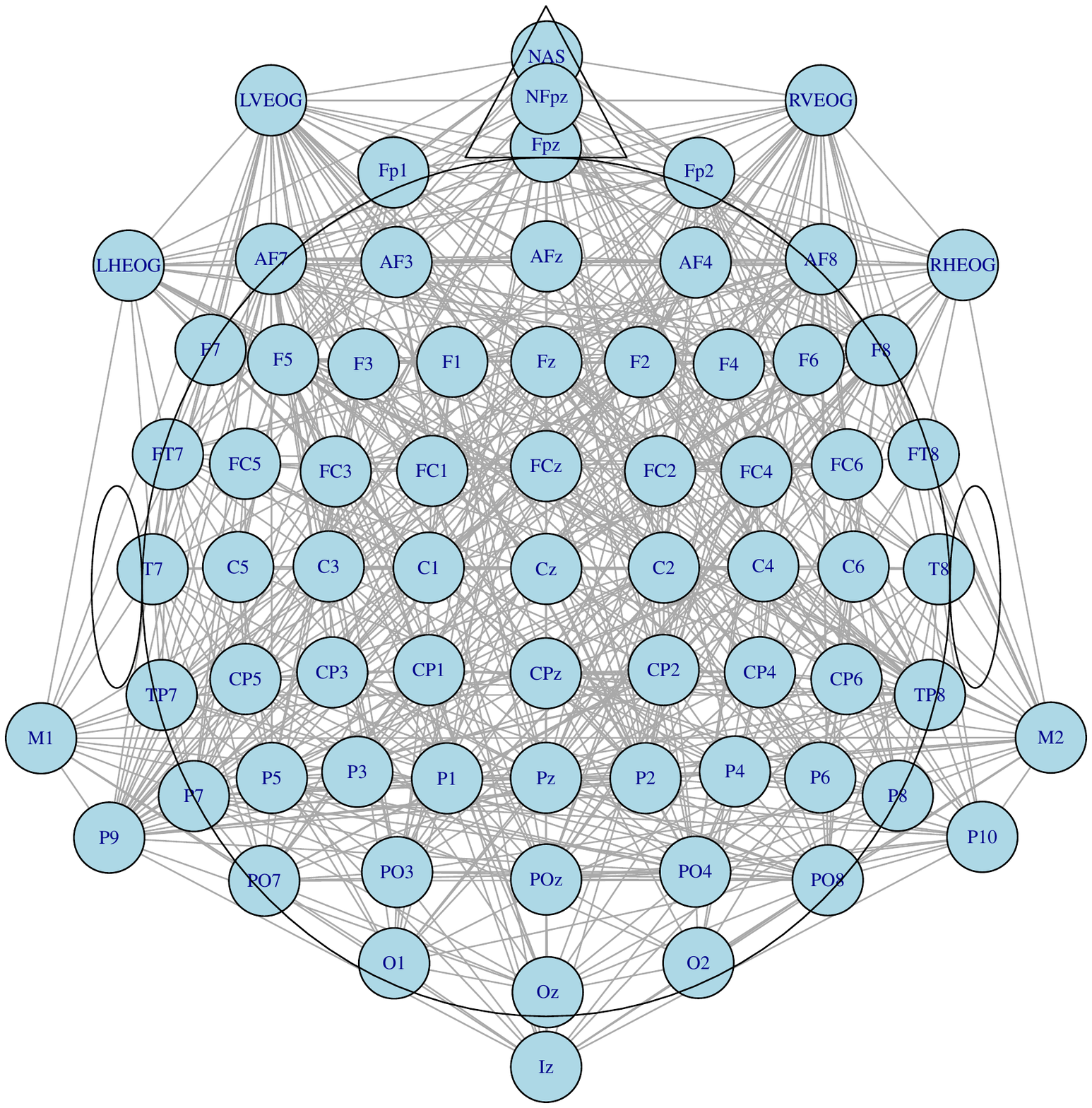}
\includegraphics[width=0.32\linewidth, clip=TRUE, trim=0mm 0mm 00mm 10mm]{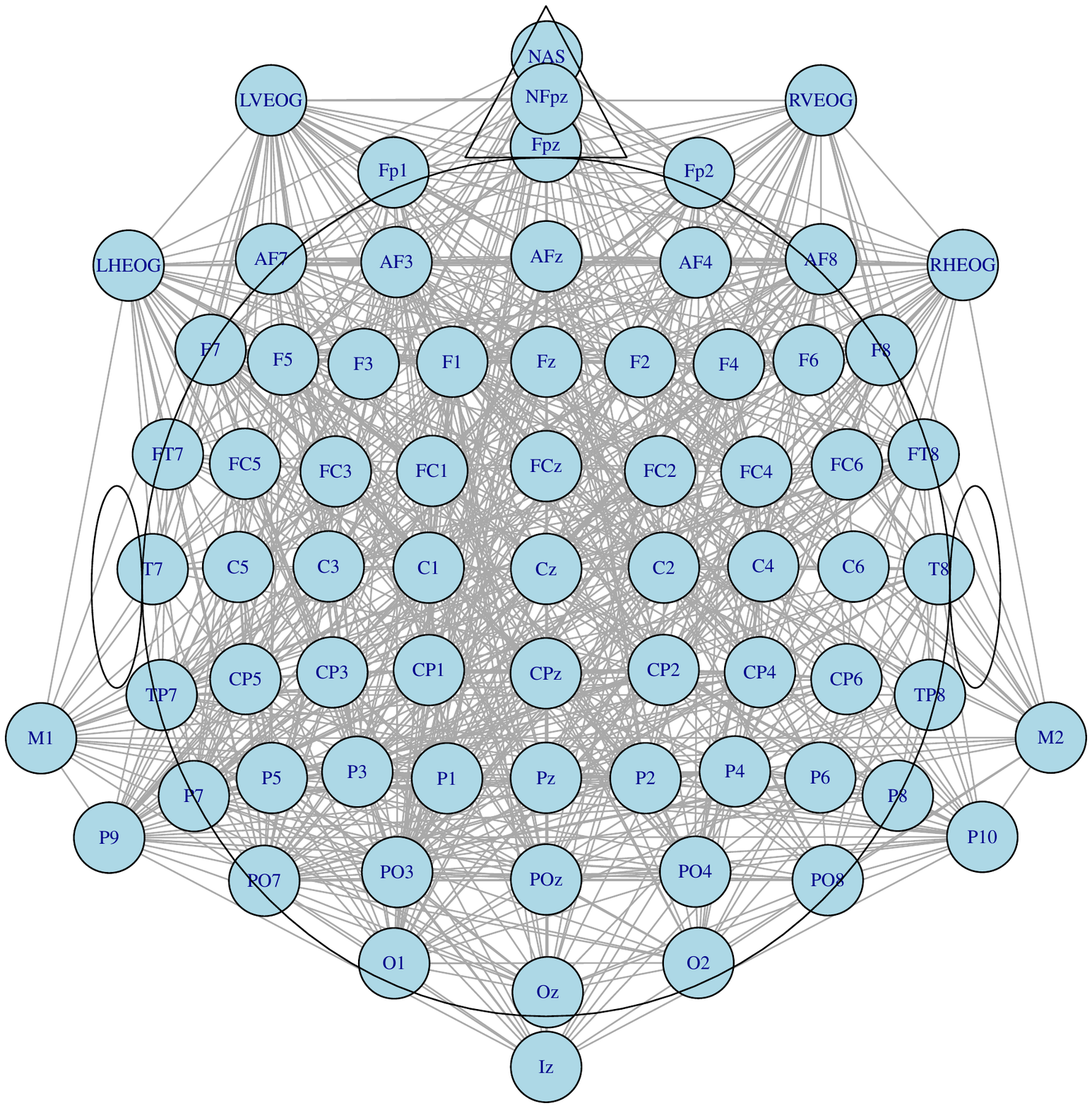}
\caption{(left) The de-trended EEG Data  with 72 channels. The blue dashed line indicates the detected change point while the red vertical line indicates the true change point. The first and third part correspond to eyes open status while the second part corresponds to eyes closed status; (middle and right)  Directed graph of EEG channels connectivity before (middle panel) and after (right panel) the first change point.  
}
\label{fig:eeg_3_truth}
\end{center}
\end{figure}

{
{We considered the Gaussian graphical model with breaks for this data set and applied TBFL with optimal block size procedure under search domain $b_n = 80, 90, 100, 110, 120$ to detect change points and estimate model parameters. The selected optimal block size  is $b_n= 90$. 
As shown in left panel of Figure~\ref{fig:eeg_3_truth}, our method detects two break points at 
$\widetilde{t}_1^f = 1077$ and $\widetilde{t}_2^f = 1980$,
which are close to the open-eye and closed-eye times identified by neurologists ($t_1 = 947$ and $t_2 = 1963$).} 
We also applied the Simulated Annealing (SA) method \citep{bybee2018change} on this EEG dataset. The SA method only detects one change point close to boundaries ($30$) 
 for which there are no recorded stimuli, but no estimated change points close to true change points.
 To demonstrate the changes between eye-open and eye-closed segments, we focus on the first two segments and estimated the model parameters in both segments using the thresholded estimator defined in \eqref{est_coef_threshold_ggm}, i.e. ${\widetilde{\bm{A}}}_{1}$ (segment 1, open-eye) and ${\widetilde{\bm{A}}}_{2}$ (segment 2, closed-eye). Network edges corresponding to non-zero coefficients in these two estimated parameters are depicted in middle and right panel of Figure~\ref{fig:eeg_3_truth}. It can be seen that during the second segment (which is the eyes-closed state), the overall network connectivity has increased. 
Specifically, the total number of edges in the eyes-open (EO) state is 724, while the total number of edges in the  eyes-closed (EC) state is 857.
Among channels which had the most connectivity changes, i.e. their degree (number of edges) between the two segments have changed the most, there are 6 EEG channels PO4, POz, PO3,  Pz,  P3 and CP2, which are located in the visual cortex in the brain \citep{EEGvisual}.
This result confirms the satisfactory variable selection performance of model parameter estimation as stated in Theorem~\ref{thm:estimation}, after detecting break points in the TBFL procedure. Such estimations can produce insights to scientists to study further the channels which have been affected the most by the stimulus procedure. 
}

\section{Concluding Remarks}\label{sec:conclustion}

In this paper, we introduced a novel unified framework that can consistently identify structural breaks and estimate model parameters for  general sparse multivariate linear models with high-dimensional covariates.
We developed a regularized estimation procedure to simultaneously detect the structural break points, and estimate the model parameters. 
Key technical developments include the calibration of the block size and the introduction of hard-thresholding for screening out redundant candidate change points. 
Note that our method could also handle Vector auto-regressive (VAR) model.
Extension of the current framework to nonlinear model constitutes an interesting future research direction.

\appendix

\section{Details about Sub-Gaussian}\label{sec:subGaussian}

We first introduce the definitions of sub-Gaussian random variable and sub-Gaussian random vector similar to that in \cite{vershynin2010introduction} and \cite{loh2012}. 
A random variable $X$ that satisfies
$(\mathbb{E} \vert X \vert^p)^{1/p} \leq K \sqrt{p}$ for all $p\geq 1$
is called a sub-Gaussian random variable with parameter $\Vert X \Vert_{\psi_2} := \sup_{p \geq 1} p^{-1/2}(\mathbb{E} \vert X \vert^p)^{1/p}  $. 
A random vector  $\bm{X} \in \mathbb{R}^{p}$ is said to be sub-Gaussian with parameters $(\Sigma,  \sigma^2)$ if:
\begin{itemize}
\item $\bm{X} \in \mathbb{R}^{p}$ is generated from a zero-mean distribution with covariance $\Sigma$;
\item for any unit vector $\bm{u} \in \mathbb{R}^{p}$, the random variable $\bm{u}^\prime \bm{X}$ is sub-Gaussian with parameter at most $\sigma$.
\end{itemize}

\section{Details about Algorithms}\label{sec:algo_detail}
\setcounter{equation}{0}
We introduce the following detailed algorithms (see Algorithms \ref{algo:1} and \ref{algo:2}) that correspond to each step in the three-step strategy outlined in the main paper. Details about specific data-driven procedures are also provided.
\begin{algorithm}[!ht]
    \DontPrintSemicolon
    \KwInput{The time series data $\{(\bm{x}_t, \bm{y}_t)\}, t = 1, 2, \dots,n$; 
    validation index set $\mathcal{T}$; sets of parameters $\Lambda_1 = \{\lambda_{1,1}, \dots, \lambda_{1,L_1} \}$ and $\Lambda_2 = \{\lambda_{2,1}, \dots, \lambda_{2,L_2} \}$ ; 
    }
    
    Split the time series data $\{(\bm{x}_t, \bm{y}_t)\},\ t = 0,1,\dots,n$ into training data and validation data; 
    
    
    \For{$l_2 = 1, \dots, L_2$}{
    \For{$l_1 = 1, \dots, L_1$}
    {   Set the  the index of estimates $\widehat{\bm{\theta}}$ as   $l = (l_2-1) L_1 + l_1 $.
    
        \KwInit{ If $l_1 = 1$, initially set $\widehat{\bm{\theta}}^{(l)} = \bm{0}$; otherwise, initially set $\widehat{\bm{\theta}}^{(l)} = \widehat{\bm{\theta}}^{(l-1)}$. }
            
        Estimate the sparse components by using the training data $\{(\bm{x}_t, \bm{y}_t)\},\ t \notin \mathcal{T}$:
        \begin{align*}
            &\widehat{\bm{\theta}}^{(l)} := \argmin_{\bm{\theta} \in \R^{\pi_n}} \left\{\frac{1}{n- \vert \mathcal{T} \vert} \left\| \mathbf{y} -\mathbf{Z}\bm{\theta} \right\|_2^2 + \lambda_{1,l_1} \left\|\bm{\theta} \right\|_1 + \lambda_{2,l_2}\sum_{k=1}^{k_n} \left \Vert\sum_{j=1}^k\bm{\Theta}_j \right\Vert_1  \right\} ,
        \end{align*}
        Predict the time series in validation dataset  $\{(\bm{x}_t, \bm{y}_t)\},\ t \in \mathcal{T}$ using the estimated parameter $\widehat{\bm{\theta}}^{(l)}$ and compute the mean squared prediction error (MSPE):
        \[ \text{MSPE}\left(\lambda_{1,l_1}, \lambda_{2,l_2}, \widehat{\bm{\theta}}^{(l)} \right)  =\frac{1}{\vert \mathcal{T} \vert} \left\| \mathbf{y} -\mathbf{Z}\widehat{\bm{\theta}}^{(l)} \right \|_2^2 \]
    }
    }
    Choose the values of  $\lambda_{1,l_1}$, $\lambda_{2,l_2}$ and estimated coefficient $\widehat{\bm{\theta}}^{(l)}$  which minimizes the mean squared prediction error (MSPE) over $\mathcal{T}$, denoted by $\lambda_1^\star$, $\lambda_2^\star$ and  $\widehat{\bm{\theta}}$:
    \[\left(\lambda_1^\star, \lambda_2^\star, \widehat{\bm{\theta}} \right):= \argmin_{ (\lambda_{1,l_2}, \lambda_{2,l_2}) \in \Lambda_1 \times \Lambda_2, \widehat{\bm{\theta}}^{(l)}\in \left\{\widehat{\bm{\theta}}^{(1)}, \dots, \widehat{\bm{\theta}}^{ (L_1L_2)} \right\} } \text{MSPE}\left(\lambda_{1,l_1}, \lambda_{2,l_2}, \widehat{\bm{\theta}}^{(l)} \right).      \]

    \textbf{Hard-Thresholding}:  Denote the indices of candidate blocks  as $\widetilde{I}_n = \left \{i :\left\|\widehat{\bm{\Theta}}_i \right\|_F^2 > c,\  i = 2,\dots, k_n \right\}$ and the set of  candidate change points as 
    $\widetilde{A}_n = \left\{\widetilde{t}_1 ,\dots \widetilde{t}_{\widetilde{m} } \right\} = \left \{r_{i-1} : i \in  \widetilde{I}_n\right\}$
    .

    \KwOutput{$\widetilde{A}_n$,  $\widetilde{I}_n$ and the estimated parameters  $\widehat{\bm{\Theta}}(k_n) \in \R^{k_np_x \times  p_y}$. }
    \caption{\textbf{Block Fused Lasso with Hard-thresholding Procedure}}
    \label{algo:1}
\end{algorithm}

\begin{algorithm}[!ht]
    \DontPrintSemicolon
    \KwInput{The time series data $\{(\bm{x}_t, \bm{y}_t)\}, t = 1, 2, \dots,n$; the candidate change points among blocks $\widetilde{\mathcal{A}}_n = \left\{\widetilde{t}_1 ,\dots \widetilde{t}_{\widetilde{m} }\right\}  $; the estimated parameter matrix $\widehat{\bm{\Theta}}_j$ in each segment, $j = 1, 2, \ldots, \widehat{m} $ after the first step algorithm.
    }
    
    \textbf{Block clustering}:  Partition the $\widetilde{m}$ candidate change points $\widetilde{\mathcal{A}}_n = \left\{\widetilde{t}_1 ,\dots \widetilde{t}_{\widetilde{m} } \right\}$ into $\widetilde{m}^f (\leq \widetilde{m})$ subsets $C = \{C_1, C_2, \dots , C_{\widetilde{m}^f}\}$ so as to minimize the within-cluster distance,
    such that each cluster $R_j$ has a diameter at most $b_n$. 
    
    \For{$i = 1, \dots, \widetilde{m}^f$ }
    {   
        \uIf{$\vert R_j \vert = 1$}{
        \ Set $l_i = (R_j-b_n)$ and $u_i = (R_j+b_n)$. 
        Apply the exhaustive search method for each time point $s$ in the search domain $(l_i, u_i)$ and estimate the change point $\widetilde{t}_i$ by using the observations within the interval $[R_j - b_n, R_j + b_n )$:
        \begin{align*}
           \widetilde{t}_i^f = \argmin_{s \in (l_i, u_i )} \left \{\sum_{t= R_j - b_n}^{s-1} \left\| \bm{y}_{t}  - \widehat{\bm{B}}_{i}\bm{x}_{t} \right\|_2^2 +  \sum_{t= s}^{R_j + b_n -1 } \left\| \bm{y}_{t}  - \widehat{\bm{B}}_{i+1}\bm{x}_{t} \right\|_2^2 \right\}, 
        \end{align*}
        where $ {\widehat{\bm{B}}}_{i} = \sum_{k=1}^{{\lfloor\frac{1}{2}\left(\max(J_{i-1})+\min(J_{i})\right) \rfloor}} \widehat{\bm{\Theta}}_k, \text{ for } i = 1, \dots, \widetilde{m}^f + 1.$
        }
        \uIf{$\vert R_j \vert > 1$}{
        \  Set $l_i = \min (R_j)$ and $u_i = \max (R_j)$.
        Apply the exhaustive search method for each time point $s$ in the search domain $(l_i, u_i)$ and estimate the change point $\widetilde{t}_i$ by using the observations within the interval $[\min(R_j) - b_n, \max(R_j) + b_n$):
        \begin{align*}
           \widetilde{t}_i^f = \argmin_{s \in (l_i, u_i )} \left \{\sum_{t= \min(R_j) - b_n}^{s-1} \left\| \bm{y}_{t}  - \widehat{\bm{B}}_{i}\bm{x}_{t} \right\|_2^2 +  \sum_{t= s}^{\max(R_j) + b_n -1 } \left\| \bm{y}_{t}  - \widehat{\bm{B}}_{i+1}\bm{x}_{t} \right\|_2^2 \right \},
        \end{align*}
       where $ {\widehat{\bm{B}}}_{i} = \sum_{k=1}^{{\lfloor\frac{1}{2}\left(\max(J_{i-1})+\min(J_{i})\right) \rfloor}} \widehat{\bm{\Theta}}_k,  \text{ for } i = 1, \dots, \widetilde{m}^f + 1.$ 
        }
    }
    
    \KwOutput{The final estimated change points $ \widetilde{\mathcal{A}}_n^f = \left\lbrace \widetilde{t}_1^f, \ldots, \widetilde{t}_{\widetilde{m}^f}^f  \right\rbrace $ and estimated coefficient parameters $\left \{\widehat{\bm{B}}_{1}, \dotsm \widehat{\bm{B}}_{\widetilde{m}^f+1} \right\}$. }
    \caption{\textbf{Exhaustive Search Procedure}}
     \label{algo:2}
\end{algorithm}

\subsection{Details about hard-thresholding value selection}\label{subsec:hard-thresholding-tuning}
The main idea of the procedure is to combine the $K$-means clustering method \citep{hartigan1979algorithm} with the BIC criterion \citep{schwarz1978estimating} to cluster the changes in the parameter matrix into two subgroups. 
The detailed steps are:
\begin{itemize}
    \item Step 1 (initial state): Denote the jumps for each block by setting $v_i = {\left\|\widehat{\bm{\Theta}}_i \right\|_F} $, $i = 2, \cdots, k_n$ and $v_1 = 0 $. 
    Set $V = (v_1,  \cdots, v_{k_n})$.
    Denote the set of selected blocks with large jumps as $J$ (initially, this is an empty set) and set $\text{BIC}^{old} = \infty $.   
     \item Step 2 (recursion state): 
     Apply $K$-means clustering to the jump vector $V$ with two centers. Denote the sub-vector with a smaller center as the small subgroup, $V_S$ , and the other sub-vector as the large subgroup, $V_L$.
     Add the corresponding blocks in the large subgroup into $J$. 
    {
    Use the estimated parameters  $\widehat{\bm{\Theta}}_{i} $ for each block $i \in J$ to compute the BIC and denote it by $\text{BIC}^{new}$.
    By adding more blocks into $J$, we increase the number of parameters estimated by the model.
    }
     Compute the difference $\text{BIC}^{\text{diff}}= \text{BIC}^{new} - \text{BIC}^{old}$. Update $\text{BIC}^{old}= \text{BIC}^{new}$ and $V = V_S$. Repeat this step until $\text{BIC}^{\text{diff}} \geq 0$.

     \item Step 3 (output state): Set $\widetilde{I}_n = J$. It contains indices of blocks with large jumps.
\end{itemize}

\subsection{Details about block clustering}\label{subsec:blockcluster}

The block clustering step is based on a data-driven procedure to partition the $\widetilde{m}$ candidate change points into $\widetilde{m}^f$ clusters.  
In particular, we select the optimal number of cluster which maximize the Gap statistics. 
We only accept the optimal solution if the diameter of clusters ($\max(R_j) - \min (R_j)$) is at most $\kappa_1 b_n$.
If not, we continue choose the next optimal solution based on the Gap statistics till we find the solution under the constraint that the diameter of all selected clusters are less than or equal to $\kappa_1 b_n$. 
After that, we decrease the number of clusters if the distance between any two contiguous clusters is too close (i.e.,  $\min(R_j) - \max (C_{i-1})$ less than or equal to $\kappa_2 b_n$).
In practice, we choose a larger $\kappa_1$ and $\kappa_2$ for small $b_n$, and a smaller $\kappa_1$ and $\kappa_2$ for large $b_n$. Specifically, 
If $b_n \leq {\sqrt{n}}/{4}$, then $\kappa_1 = 9$, $\kappa_2 = 7$; if ${\sqrt{n}}/{4} < b_n \leq {\sqrt{n}}/{2}$, then $\kappa_1 = 7$, $\kappa_2 = 5$;  otherwise, we set $\kappa_1 = 5$, $\kappa_2 = 3$.

\subsection{Discussion about the choice of $\widehat{B}_j$}\label{subsec:para_est}

It is worth-noting that we did not threshold any $\widehat{\Theta}_j$ to be 0 in the  hard-thresholding procedure (Step II). 
In fact, the only thing we do in the Step II is selecting some block-end time point whose jumps are large enough (above a threshold value $\omega_n$).
Specifically, the set of estimated change points after hard-thresholding is given by
 \[ \widetilde{A}_n  = \left\{\widetilde{t}_1 ,\dots,  \widetilde{t}_{\widetilde{m} } \right\} = \left \{r_{i-1} : \left\|\widehat{\bm{\Theta}}_i\right \|_F  > \omega_n,\  i = 2,\dots, k_n \right\},
 \]
where $\omega_n$ is the hard-threshold value.
Those $\widehat{\Theta}_i$ with small values (i.e., $\left\|\widehat{\bm{\Theta}}_i\right \|_F  \leq \omega_n$) are  still used in (3.4) to estimate the coefficient $\widehat{B}_j$.

Now, suppose we set some of $\widehat{\Theta}_i$'s  (which have smaller norm values) to 0 in Step II. In this case, we cannot guarantee the same theoretical results of consistent estimation of segment-specific model parameters anymore.

Consider a simple case where there is only one change points $t_{1} = \lfloor n/2 \rfloor$ in the middle, i.e., $m_0 = 1$. 
Suppose there is only one block time point $s_{1}$ within the interval $(t_{1}- b_n, t_{1}+b_n)$. Denote this $s_{1}$ by $\widehat{t}_1$. 
By the first part of the proof in Lemma~\ref{lemma_lm_theta_2}, we have $\widehat{t}_{1} \in \widehat{\mathcal{A}}_n$. 
In that case, we have
{\footnotesize
\begin{align} 
&\left\lVert \widehat{\bm{\Theta}}_{i^
\star} \right\rVert_F = \left\lVert \sum_{k=1}^{i^\star}\widehat{\bm{\Theta}}_k - \sum_{k=1}^{i^\star-1}\widehat{\bm{\Theta}}_k \right\rVert_F   = \left\lVert \left(\sum_{k=1}^{i^\star}\widehat{\bm{\Theta}}_k -\bm{B}_{2}^\star \right) -  \left(\sum_{k=1}^{i^\star-1}\widehat{\bm{\Theta}}_k  - \bm{B}_{1}^\star \right) + \left(\bm{B}_{2}^\star - \bm{B}_{1}^\star \right) \right\rVert_F  \nonumber\\
\geq &\left\lVert \bm{B}_{2}^\star-\bm{B}_{1}^\star\right\rVert_F - \left(\left\lVert \sum_{k=1}^{i^\star}\widehat{\bm{\Theta}}_k - \bm{B}_{2}^\star \right\rVert_F +\left\lVert \sum_{k=1}^{i^\star-1}\widehat{\bm{\Theta}}_k -\bm{B}_{1}^\star\right\rVert_F \right) \nonumber\\
\geq & \nu_n - O_p \Bigg(\sqrt{\frac{d^\star_n\log (p_x p_y \vee n)}{b_n}}\Bigg), \nonumber
\end{align}}where $\widehat{t}_1 = r_{i^\star-1}$ is the selected block-end time point,
$\nu_n =\left\lVert \bm{B}_{2}^\star-\bm{B}_{1}^\star \right\rVert_F $ is the jump size.
Setting $\omega_n = \frac{1}{2}v_n$, we have
the set of indices of candidate blocks $\widetilde{I}_n = \{ i^\star\} = \{ \lfloor k_n/2 \rfloor\}  $.
For any other $i \notin \widetilde{I}_n$, we have 
${\left\Vert \widehat{\bm{\Theta}}_{i}\right\Vert_F = O_p\left(\sqrt{\frac{d^\star_n\log (p_x p_y \vee n)}{b_n}}\right). } $




Now, we define a new local coefficient parameter estimates 
for the second segment as
\begin{equation}
 \Hat{\Hat{{\bm{B}}}}_{2} = \sum_{i\in \left\{1, \dots, U_2 \right\}  \cap \widetilde{I}_n}
 \widehat{\bm{\Theta}}_i, \nonumber
\end{equation}
where $U_2 = \left\lfloor\frac{1}{2}\left(\max(J_{1})+\min(J_{2})\right) \right\rfloor = \left\lfloor\frac{1}{2}\left(i^\star +k_n)\right) \right\rfloor= \left\lfloor\frac{3}{4} k_n\right \rfloor$.

Suppose  for any $i\in \left\{1, \dots, U_2 \right\}  \cap \widetilde{I}_n^c$,  we have 
$\widehat{\bm{\Theta}}_i =  \widehat{\bm{\Theta}}_1 = c\sqrt{\frac{d^\star_n\log (p_x p_y \vee n)}{b_n}}$, where $c$ is some positive constant.  Then we have
{\small\begin{align*}
\left \Vert{\Hat{\Hat{{\bm{B}}}}}_{2} -\bm{B}_2^\star \right\Vert_F &\leq \left \Vert{\Hat{\Hat{{\bm{B}}}}}_{2} - {\widehat{\bm{B}}}_{2} \right\Vert_F  + \left \Vert{{\Hat{{\bm{B}}}}}_{2}  -\bm{B}_2^\star \right\Vert_F \nonumber \\
 &= \left \Vert \sum_{i\in \left\{1, \dots, U_2 \right\}  \cap \widetilde{I}_n}
 \widehat{\bm{\Theta}}_i - \sum_{i\in \left\{1, \dots, U_2 \right\} }
 \widehat{\bm{\Theta}}_i \right\Vert_F  + O_p\left ( 
 \sqrt{\frac{d^\star_n\log (p_x p_y \vee n)}{b_n}} \right )  \nonumber \\
& = \left \Vert \sum_{i\in \left\{1, \dots, U_2 \right\}  \cap \widetilde{I}_n^c}
 \widehat{\bm{\Theta}}_i \right\Vert_F + O_p\left ( 
 \sqrt{\frac{d^\star_n\log (p_x p_y \vee n)}{b_n}} \right ) \nonumber \\
 & = \sum_{i\in \left\{1, \dots, U_2 \right\}  \cap \widetilde{I}_n^c}\left \Vert
 \widehat{\bm{\Theta}}_i \right\Vert_F  + O_p\left ( 
 \sqrt{\frac{d^\star_n\log (p_x p_y \vee n)}{b_n}} \right ) \nonumber \\
 & = \sum_{i\in \left\{1, \dots, U_2 \right\}  \cap \widetilde{I}_n^c}\left \Vert
 \widehat{\bm{\Theta}}_1 \right\Vert_F + O_p\left ( 
 \sqrt{\frac{d^\star_n\log (p_x p_y \vee n)}{b_n}} \right ) \nonumber  \\
 & = O_p\left ( k_n
 \sqrt{\frac{d^\star_n\log (p_x p_y \vee n)}{b_n}} \right ).
\end{align*}}
Note that $k_n = \lfloor \frac{n}{b_n} \rfloor$, which goes to infinity as $n \rightarrow \infty$ (as described in  assumption A4).
Thus,  the consistency result of estimation of segment-specific model parameters does not hold anymore.



Define the thresholded estimate ${\widetilde{\widetilde{\bm{B}}}}_{2}$ as  
$
{\widetilde{\widetilde{\bm{B}}}}_{2} = \Hat{\Hat{{\bm{B}}}}_{2}\mathbbm{1}_{ \left\{ \vert \Hat{\Hat{{\bm{B}}}}_{2} \vert > \eta_{n, 2}  \right\}}.
$
Let $S$ denote the support of $\bm{B}_2^\star$.
To derive the upper bound on the number of false positives selected by thresholded lasso, note that 
{\footnotesize
\begin{align*}
    \left \vert  \text{supp}(\widetilde{\widetilde{\bm{B}}}_2) \backslash  \text{supp}(\bm{B}_2^\star) \right \vert  = \sum_{s \notin S}\mathbbm{1}_{\{ \vert \Hat{\Hat{{\bm{B}}}}_{2,s} \vert > {\eta_n} \}} \leq \sum_{s \notin S} \vert \Hat{\Hat{{\bm{B}}}}_{2, s} \vert/{\eta_{n,2}} \leq \frac{1}{{\eta_{n,2}}} \sum_{s \notin S} \vert v_s \vert  \nonumber \\
    \leq
    \frac{3}{{\eta_{n,2} }} \sum_{s \in S} \vert v_s \vert 
    \leq
    \frac{3\Vert v \Vert_1}{{\eta_{n,2}}} \leq
    {\frac{12 \sqrt{d_n^\star} \Vert v \Vert_F}{{\eta_{n,2}}} }  \nonumber\\
    = O_p\left ( k_n d^\star_n \right ),
\end{align*}} 
where $v = \Hat{\Hat{{\bm{B}}}}_j - \bm{B}_j^\star$ and 
{$\eta_{n,2} = {C_2}\sqrt{\frac{{\log (p_x p_y \vee n)} }{b_n}}$ for some positive constant $C_2$}.
Therefore, the variable selection result by thresholding dose not hold anymore as $k_n$ goes to infinity.


We now present a simulation to investigate the numerical performance of two different parameter estimations.
Here, we consider a multiple linear model with $ n = 2,000 $, $ p = 150 $, $ m_0 = 3 $.  The number of non-zero elements of coefficient vectors in $j$th segments $d_j= 15$, for all $j= 1, \dots, m_0 + 1$.
 The coefficient vector are chosen to have the random sparse structure in each segment, with different entries $ -3 $, $ 5 $, $-3$ and $3$, respectively. 
 The error variance  is $\Sigma = I$.
The  change points are equally spaced: 
    $ t_1 = \lfloor \frac{n}{4} \rfloor  = 500 $, $ t_2 =  \lfloor \frac{2n}{4} \rfloor  = 1000$,
    $ t_3 =  \lfloor \frac{3n}{4} \rfloor  =1500$,
    and $b_n= 40$. 
    For parameter estimation, we evaluated the performance of ${\widehat{\bm{B}}}_{j}$ (not setting zero) and $\Hat{\Hat{{\bm{B}}}}_{j}$ (setting zero) by  reporting 
the mean and standard deviation of relative estimation error (REE), the true positive rate (TPR) and the false positive rate (FPR) (defined in \eqref{REE}). 

Specifically, for TPR and FPR, we use the median number of nonzero and zero elements among 100 replicates.
As shown in Table~\ref{tab:beta_est_comp}, the  ${\widehat{\bm{B}}}_{j}$ has better performance in terms of the parameter estimation and variable selection.
    \begin{table}[!ht]
    \caption{\label{tab:beta_est_comp} Results of mean and standard deviation of relative estimation error (REE), true positive rate (TPR), and false positive rate (FPR) for estimated coefficients.}
        \centering
        \small
        {\begin{tabular}{cc ccccc}
        \hline
        method & mean ( sd) & TPR & FPR\\
         \hline
        ${\widehat{\bm{B}}}_{j}$     & 0.0981 (0.0829) & 1 & 0\\
         $\Hat{\Hat{{\bm{B}}}}_{j}$    &0.2879 (0.1114) & 0.95 & 0.0083\\
          \hline
        \end{tabular}}
    \end{table}


\subsection{Details about thresholding value selection for parameter estimation}\label{subsec:thresholding-tuning}
Similar to the hard-thresholding part, this part is also based on a data-driven procedure for selecting the threshold value. The idea is to use the BIC criterion to grid search the threshold value. We choose the value that has the lowest BIC. 
Here, we simplify the notation $\eta_{n,j}$ as $\eta_{j}$. 
The main steps are:
\begin{itemize}
    \item Step 1 (initial construction): 
    Define the $\left\{ \widehat{\bm{B}}_{j}\mathbbm{1}_{\{ \vert \widehat{\bm{B}}_{j} \vert > \eta_j \}}  \right \}$ as element-wise thresholding such that $ \widehat{\bm{B}}_{j,hl}  = 0$ if $\left\vert \widehat{\bm{B}}_{j,hl}  \right\vert  \leq {\eta_j}$ and unchanged otherwise, for all $j = 1, \dots, m_0+1, h = 1,\dots, p_y, l = 1, \dots, p_x$. 
    Following \cite{friedman2010regularization},  for each $j = 1, \dots, m_0+1$, we construct a sequence of $K$ values for ${\eta_j}$, decreasing from $\eta_{j,\text{max}}$ to $\eta_{j, \text{min}}$ on the log scale, where the maximum value  ${\eta_{j,\text{max}} }$ is the smallest value for which the entire estimated parameter $\widehat{\bm{B}}_{j}$ equals to zero; the minimum value $\eta_{j,\text{min}}$ is the largest value for which the entire estimated parameters $\widehat{\bm{B}}_j$ remain unchanged. 
In practice, we choose  $K=25$ when $\eta_{j,\text{max}}/\eta_{j,\text{min}} < 10^{4}$  and $K = 50$ otherwise. 
     \item Step 2 (grid search): For each segment $j = 1, \dots, m_0+1$,  
     compute the BIC by using the estimated parameters $\widetilde{\bm{B}}_j= \left\{ \widehat{\bm{B}}_{j}\mathbbm{1}_{\{ \vert \widehat{\bm{B}}_{j} \vert > \eta_{j,k} \}}  \right \}$ and denote it by $\text{BIC}_{k}$, $k = 1,\dots, K$.
     Choose the value $\eta_{j, k^\star}$ that minimize the BIC value, i.e., $ k^\star = \argmin_{k \in \{1,\dots, K\}} \mbox{BIC}_k$.
     \item Step 3 (thresholded variant): We define a thresholded variant $\widetilde{\bm{B}}_j$  as the final estimated parameters, where  
$\widetilde{\bm{B}}_{j} =\left\{ \widehat{\bm{B}}_{j}\mathbbm{1}_{\{ \vert \widehat{\bm{B}}_{j} \vert > \eta_{j, k^\star} \}}  \right \}$, for all $j = 1, \dots, m_0+1$. 
\end{itemize}

\subsection{Details about search domain selection}\label{subsec:domain}
Based on Assumption~A3, the search domain $S$ can be selected as follows: if $\sqrt{n} > p_xp_y$, we set the range of block size from $\log n (\log p_x + \log p_y)$ to $\min (\sqrt{n}, n/20)$; if  $\sqrt{n} \leq p_xp_y$, we set the range of block size from $\log n (\log p_x + \log p_y)$ to $\min (\sqrt{n}(\log p_x + \log p_y),  n/20)$. In the next section, we use the optimal block size when comparing with other competing methods.

\section{High-dimensional Multiple Linear Regression Model}\label{sec:MLR}
\setcounter{equation}{0}
We consider the classic univariate multiple linear regression model that the values of coefficient vector change over time. 
In this case,  setting the parameters $\bm{B}_{j}^\star =  {{{\bm{\beta}}_{j}^\star}}^\prime$, $p_x = p$, $p_y = 1$ in  the model representation  in \eqref{eq:model_1}, the  structural break multiple regression model is given by 
\begin{equation}\label{eq:model_mlr}
    y_t = \sum_{j = 1}^{m_0+1}\left (\bm{x}_t^\prime{\bm{\beta}}_{j}^\star + {\varepsilon}_{j,t} \right ) \mathbbm{1}_{\{t_{j-1}\leq t <t_j \}} , \   t =  1,\dots, n,
\end{equation}
where $y_t \in \R$ is the response at time $t$; $\bm{x}_t \in \R^p$ is the predictor vector at time $t$; $\bm{\beta}_{j}^\star\in \R^{p}$ is the sparse coefficient vector during the $j$th segment; and  $\varepsilon_{j,t} \in \R$ is
a white noise during the $j$th segment at time $t$, uncorrelated with $\bm{x}_t$, with mean $0$ and finite variance $\sigma_{j}$.

Define $\bm{\theta}_1 = {{\bm{\beta}}_{1}^\star}$ and 
\[\bm{\theta}_i = \begin{cases}
      {{\bm{\beta}}_{j+1}^\star} -{{\bm{\beta}}_{j}^\star} , & \text{when}\ t_j \in [r_{i-1}, r_{i})\  \text{for some}\ j  \\
      \bm{0}, & \text{otherwise,}
    \end{cases}\]
for $i = 2,3,\dots, k_n$.
In this case, the linear regression model in terms of $\Theta$ can be rewrite as
{\footnotesize
\begin{equation}\label{block_model_mlr}
\underbrace{\left(\begin{array}{c}
\bm{y}_{(1: r_1-1)}\\
\bm{y}_{(r_1: r_2-1)}\\
\vdots \\
\bm{y}_{(r_{k_n-1}:r_{k_n}-1)}\\
\end{array}\right)}_{\mathcal{Y}}= 
\underbrace{\left(\begin{array}{c cccc}
\bm{x}_{(1: r_1-1)}  & \mathbf{0} &\dots & \mathbf{0}\\
\bm{x}_{(r_1: r_2-1)} & \bm{x}_{(r_1: r_2-1)} &\dots &\mathbf{0}\\
\vdots & \vdots & \ddots & \vdots \\
\bm{x}_{(r_{k_n-1}:r_{k_n}-1)} & \bm{x}_{(r_{k_n-1}:r_{k_n}-1)}& \dots  &\bm{x}_{(r_{k_n-1}:r_{k_n}-1)}\\
\end{array}\right)}_{\mathcal{X}}
\underbrace{
\begin{pmatrix}
{\bm{\theta}_{1}} \\
{\bm{\theta}_{2}}\\
\vdots \\
{\bm{\theta}_{k_n}}\\
\end{pmatrix}}_{{\Theta}}
+\underbrace{
\left(\begin{array}{c}
{\bm{\zeta}_{(1: r_1-1)}}\\
{\bm{\zeta}_{(r_1: r_2-1)}}\\
\vdots \\
{\bm{\zeta}_{(r_{k_n-1}:r_{k_n}-1)} }\\
\end{array}\right)}_{E},   
\end{equation}}
where $\bm{y}_{(a:b)}:= ({y}_a, \dots, {y}_{b})^\prime$, $\bm{x}_{(a:b)}:= (\bm{x}_a, \dots, \bm{x}_{b})^\prime$,
{$\bm{\zeta}_{(a:b)}:= ({\zeta}_a, \dots, {\zeta}_{b})^\prime$}; 
$\mathcal{Y} \in \R^{n } $, $\mathcal{X} \in \R^{n \times k_n p} $, ${\Theta} \in \R^{k_np }$ and $E \in \R^{n} $.

The estimated coefficient parameters from block fused lasso is given by 
\begin{equation}\label{est_coef_mlr}
{\widehat{\bm{\beta}}}_{j} = \sum_{i=1}^{\frac{1}{2}\left(\max(J_{j-1})+\min(J_{j})\right)} \widehat{\bm{\theta}}_j, \text{ for } j = 1, \dots, \widetilde{m}^f+1. 
\end{equation}
Define the thresholded variant estimate $\widetilde{\bm{\beta}}_{j}$ as  
\begin{equation}\label{est_coef_threshold_mlr}
{\widetilde{\bm{\beta}}}_{j} = \widehat{\bm{\beta}}_{j}\mathbbm{1}_{ \left\{ \vert \widehat{\bm{\beta}}_{j} \vert > \eta_{n,j}  \right\}}, \text{ for } j = 1, \dots, \widetilde{m}^f+1. 
\end{equation}

To establish consistency properties of the estimation procedure, the following assumptions are needed:

\begin{itemize}
 \item[(E1.)]For the $j$-th segment, where $j = 1, 2, \dots, m_0 +1$, the process $y_{j,t} = {{\bm{x}_{j,t}}^\prime\bm{\beta}_{j}^\star} + {\varepsilon}_{j,t}$ is a linear regression model,  where the $p$-dimensional variables $\{\bm{x}_{j,t}\}$ are sub-Gaussian random vectors with parameters $(\Sigma_{x,j}, \sigma_{x,j}^2)$ and the errors $\{{\varepsilon}_{j,t}\}$ are i.i.d. sub-Gaussian variables with parameter $ \sigma_{\varepsilon,j}^2$  (see the details of sub-Gaussian definition in Appendix~\ref{sec:subGaussian}). Further,
    \begin{align*}
     &1/C_1 \leq  \min_{1\leq j \leq m_0+1}\Lambda_{\min}(\Sigma_{\bm{x},j}) \leq \max_{1\leq j \leq m_0+1}  \Lambda_{\text{max}}(\Sigma_{\bm{x},j}) \leq C_1,  \\
     &1/C_2  < \min_{1\leq j \leq m_0+1} \sigma_{\bm{x},j}^2 \leq \max_{1\leq j \leq m_0+1} \sigma_{\bm{x},j}^2 <C_2,\\
     &\text{ and } 
     1/C_3  < \min_{1\leq j \leq m_0+1} \sigma_{{\varepsilon},j}^2 \leq \max_{1\leq j \leq m_0+1} \sigma_{{\varepsilon},j}^2 <C_3,   
    \end{align*}
where $C_1$, $C_2$  and $C_3$ are positive constants.
    
    \item[(E2.)]The coefficient vectors  $\bm{\beta}_{j}^\star$  are sparse. More specifically, for all $j = 1,2,\dots, m_0+1$, $d_{j} \ll p$, i.e., $d_{j}/ p = o(1)$. Moreover, there exists a positive constant $M_{\bm{\beta}}  > 0$ such that
    \[\text{max}_{1 \leq j \leq m_0 +1}\lVert \bm{\beta}_{j}^\star \rVert_{\infty} \leq M_{\bm{\beta}} .\]
    
    \item[(E3.)]
    Let $\nu_n = \min_{1\leq j \leq m_0} \lVert \bm{\beta}_{j+1}^\star-\bm{\beta}_{j}^\star \rVert_2 $.  
     There exists a  positive sequence $b_n$ such that, as $n \rightarrow \infty$,
\[{\frac{\min_{1\leq j \leq m_0+1} \vert t_{j} - t_{j-1}\vert}{b_n }\rightarrow +\infty,   
\ d^\star_n \frac{\log ( p\vee n)}{b_n} \rightarrow 0 \text{ and }
\nu_n = \Omega\left(\sqrt{\frac{d^\star_n\log (p\vee n)}{b_n}}\right).}
\]
\item[(E4.)] {The regularization
parameters $\lambda_{1,n}$ and $\lambda_{2,n}$  satisfy
$\lambda_{1,n} = C_1 \sqrt{{ \log (p\vee n)}/{n}} \sqrt{{b_n}/{n}}$
, and 
$\lambda_{2,n} = C_2\sqrt{{\log  (p\vee n)}/{n}}\sqrt{{b_n}/{n} }$ 
for some large constant $C_1, C_2 > 0$.}
\end{itemize}

Note that in the Assumption E1, the sub-Gaussian distribution is only a sufficient but not necessary condition. 
Assumption E1 is a typical assumption for high-dimensional linear regression \citep{bickel2009simultaneous}.
As long as the distribution satisfies both  restricted eigenvalue conditions (A1) and a deviation condition (A2), we could establish consistent results.
Assumptions~E2-E4 are special cases of Assumptions A3-A5.

The next proposition is about the estimation consistency in high-dimensional multiple linear regression.

\begin{proposition}[Results of high-dimensional multiple linear regression]\label{thm:mlr} 
  Suppose the Assumptions E1-E4 hold. Then there exists a large enough constant $K>0$ such that , as $n \to +\infty$,
  \[\mathbb{P} \left( \widetilde{m}^f = m_0, \max_{1\leq j \leq m_0} \left|\widetilde{t}_j - t_j \right| \leq \frac{K {d_n^\star} \log( p\vee n)}{\nu_n^2} \right) \to 1.\]
Also, the solution $\widehat{\bm{B}}_j$ from \eqref{est_coef_mlr} satisfies 
    \begin{equation*}
    \max_{1\leq j \leq m_0+1}\left\Vert  \widehat{\bm{\beta}}_j - \bm{\beta}_j^\star\right\Vert_F =  O_p \left(\sqrt{\frac{d^\star_n\log (p\vee n)}{b_n}}\right)
    \end{equation*}
where $\bm{\beta}_j^\star$ is the true value coefficient parameter matrix at $j$th stationary segment. Further,
if {$\eta_{n,j} = {C_j}\sqrt{\frac{{\log (p  \vee n)} }{b_n}}$ for some positive constant $C_j$},
the thresholded variant $\widetilde{\bm{\beta}}_j$ from \eqref{est_coef_threshold_mlr}  satisfies
\begin{equation*}
  \max_{ 1 \leq j \leq m_0 + 1 }  \left \vert  \text{supp}(\widetilde{\bm{\beta}}_j) \backslash  \text{supp}( {\bm{\beta}}_j^\star) \right\vert =  O_p \left(d^\star_n\right).
    \end{equation*} 
\end{proposition}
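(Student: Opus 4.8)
The plan is to treat the regression model~\eqref{eq:model_mlr} as the special case of the general model~\eqref{eq:model_1} obtained by setting $p_y=1$, $p_x=p$, and $\bm{B}_j^\star=(\bm{\beta}_j^\star)^\prime$, and then to read off all three conclusions from the master theorems of Section~\ref{sec:thm}. Under this identification $p_xp_y=p$, the Frobenius jump $\|\bm{B}_{j+1}^\star-\bm{B}_j^\star\|_F$ coincides with the Euclidean jump $\|\bm{\beta}_{j+1}^\star-\bm{\beta}_j^\star\|_2$ of E3, so $\nu_n$ and $d_n^\star$ match their meanings in A3--A4. Since Assumptions E2--E4 are verbatim specializations of A3--A5, the only substantive work is to show that the sub-Gaussian structure posited in E1 implies the restricted eigenvalue conditions A1 and the deviation bound A2. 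Once A1--A5 are in force, the detection claim $\widetilde m^f=m_0$ follows from Theorem~\ref{thm_selection_block}, the localization rate $K d_n^\star\log(p\vee n)/\nu_n^2$ from Theorem~\ref{thm:final_consistency_rate}, the estimation rate from the first part of Theorem~\ref{thm:estimation}, and the support-overestimation bound from its thresholded part with the prescribed $\eta_{n,j}$.

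To verify A1 I would fix a within-segment interval $[l,u)$ with $u-l>a_n$ and note that under E1 the vectors $\{\bm{x}_{j,t}\}$ in that interval are sub-Gaussian with covariance $\Sigma_{\bm{x},j}$ whose eigenvalues lie in $[1/C_1,C_1]$. A standard concentration argument for sub-Gaussian designs (the restricted-eigenvalue results of \cite{loh2012} for independent rows, or \cite{Basu_2015} if temporal dependence is permitted) then shows that the sample Gram matrix $(u-l)^{-1}\sum_{t=l}^{u-1}\bm{x}_t\bm{x}_t^\prime$ obeys the lower and upper inequalities~\eqref{eq:RE_bound}--\eqref{eq:upper-RE_bound} with $\alpha_1,\alpha_2$ proportional to $\Lambda_{\min},\Lambda_{\max}$ and with the stated $\tau$, failing only with probability exponentially small in $u-l$. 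The point requiring care is that A1 demands this \emph{uniformly} over all such intervals; I would obtain the uniform statement by a union bound over the $O(n^2)$ choices of $(l,u)$, which is affordable precisely because $a_n=\Omega(\log(p_xp_y\vee n))$ forces every admissible interval to carry enough samples for its exponential tail to beat the polynomial union bound.

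For A2 I would use that $\varepsilon_{j,t}$ is sub-Gaussian and uncorrelated with $\bm{x}_{j,t}$, so each coordinate product $x_{t,k}\varepsilon_t$ is sub-exponential; a Bernstein-type tail bound for its average over $[l,u)$ yields a coordinatewise deviation of order $\sqrt{\log(p\vee n)/(u-l)}$, and union-bounding over the $p$ coordinates and over the $O(n^2)$ intervals delivers~\eqref{eq:dev_bound} with the rate $\sqrt{\log(p\vee n)/a_n}$. Collecting these two displays certifies A1--A2, and the proposition is thereby reduced entirely to the already-established general theorems, with E2--E4 plugged directly into A3--A5.

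The main obstacle is the uniform control underlying the previous two paragraphs: the $\inf$/$\sup$ over all within-segment subintervals of length exceeding $a_n$ means the concentration inequalities must be made simultaneously valid across a polynomially growing family of intervals, and it is exactly the calibration $a_n=\Omega(\log(p_xp_y\vee n))$ that balances each interval's exponential tail against the $O(n^2)$ union bound. Everything downstream is bookkeeping, since the conclusions are then read off verbatim from Theorems~\ref{thm_selection_block}, \ref{thm:final_consistency_rate}, and \ref{thm:estimation}.
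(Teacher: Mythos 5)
Your proposal matches the paper's own proof: the paper likewise reduces Proposition~\ref{thm:mlr} to Theorems~\ref{thm_selection_block}, \ref{thm:final_consistency_rate} and \ref{thm:estimation} by noting E2--E4 are special cases of A3--A5 and then verifying A1--A2 from E1, using the restricted-eigenvalue machinery of \cite{loh2012} (their Lemmas 13 and 15) for the Gram matrix and a polarization-plus-Bernstein argument for the sub-exponential products $x_{t,k}\varepsilon_t$ in the deviation bound. If anything you are slightly more explicit than the paper about the union bound over the polynomially many within-segment intervals needed to make the $\inf/\sup$ statements uniform, which is exactly where the calibration $a_n=\Omega(\log(p\vee n))$ is consumed.
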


Proposition~\ref{thm:mlr} shows that the TBFL method achieves a better consistency rate in terms of the  localization error than the Binary Segmentation through Estimated CUSUM statistics (BSE) developed in \cite{wang2019statistically} -as shown in Theorem 1  and Remark 3 of \cite{wang2019statistically}, which is $O_p \left( m_0{d_n^\star}   \log p \right)$. 
Both two methods guarantees on the number of estimated change points.

\section{Additional Details about Gaussian Graphical Model}\label{sec:ggm_more}
\setcounter{equation}{0}

Let  $\Sigma({-l,-k})$ denote the sub-matrix of $\Sigma$ with its $l$-th row and $k$-th column removed, $\Sigma({l,k})$ denote  the entry of matrix $\Sigma$ that lies in the $l$-th row and $k$-th column,  $\Sigma({l,\cdot})$ denote  the $l$-th row of matrix $\Sigma$ and  $\Sigma({\cdot, k})$ denote  the $k$-th column of matrix $\Sigma$. 
Consider the $p$-dimensional multivariate Gaussian distributed random variable
\[\bm{X} = (X^1,\dots,X^p)^\prime \sim \mathcal{N}(\mu,\Sigma ),\]
where $\mu$ is the unknown mean parameter and $\Sigma$ is the non-singular covariance matrix. Let $\Omega:= (\Sigma)^{-1}$ denote the precision matrix, with elements $(\Omega({l,k}))$, $1 \leq l,k \leq p$.
The conditional independence structure of the distribution can be represented by a graphical model $G= (V, E)$, where $V =[p]$  is the set of nodes corresponding to the $p$ coordinates and $E \subset V\times V$ is the set of edges in  capturing conditional independencies among these nodes.
Every pair of variables is not contained in the edge set $E$ if and only if the two variables are conditionally independent, given all remaining variables, and corresponds to a zero entry in the precision matrix $\Omega $, i.e.,
\begin{equation*}
  \Omega({l,k}) = 0  \Leftrightarrow (l,k) \notin E.  
\end{equation*}
Note that for Gaussian graphical models that the elements of
\begin{equation*}
\bm{a}^{l} = \argmin_{\bm{a} \in \R^{p}: a_l = 0} \mathbb{E}  ( X^l - \bm{a}^\prime \bm{X})^2 , 
\end{equation*}
are given by $\bm{a}^{l}_k = -\frac{\Omega({l,k})}{\Omega({l,l})}$, $k = 1, \dots, p$ and $k \neq l$.
 The set of nonzero coefficients of $\bm{a}^{l}$  is identical to the set $\{k \in [p] \backslash \{l\} : \Omega({l,k})\neq 0\}$ of nonzero entries in the precision matrix.
By building on the neighbourhood selection procedure, we could estimate the pattern of the precision matrix.

\cite{meinshausen2006high} proposed a simple approach for covariance selection that can be used for very large Gaussian graphs. 
They estimate a sparse graphical model by
estimating (individually) the neighborhood of each variable. Specifically, they fit a lasso model to each variable, using the others as predictors.
The component precision matrix $\Omega$ is estimated to be non-zero if either the estimated coefficient of variable $X^l$ on $X^k$, or the
estimated coefficient of variable $X^k$ on $X^l$, is non-zero. Given $n$ independent and identically distributed observations of $\bm{X}$, denoted by $\mathcal{D} = \{\bm{x}_1,\dots , \bm{x}_n \}$, then for each variable $X^l$, the Lasso estimate $\widehat{\bm{a}}^{l, \lambda}$ of $\bm{a}^l$ is given by
\begin{equation*}
\widehat{\bm{a}}^{l, \lambda} = \argmin_{\bm{a} \in \R^{p}: a_l = 0} \frac{1}{n} \sum_{i= 1}^n ( x_{i}^l - \bm{a}^\prime\bm{x}_{i} )^2  + \lambda \Vert \bm{a}\Vert_1, 
\end{equation*}
where $l = 1, \dots, p$, $\bm{x}_i = \left(x_{i}^1, \dots, x_{i}^p \right)^\prime \in \R^{p}$. They proved that for i.i.d. sample, the non-zero coefficients of $\widehat{\bm{a}}^{i, \lambda} $  consistently estimate the neighborhood of the node $i$, under a suitably chosen penalty parameter $\lambda$.

It is well known that if $\bm{X}$ follows a multivariate normal distribution $\mathcal{N} (0, \Sigma)$, then the conditional distribution of $X^l$ given $X^{-l}$ remains normally distributed, that is,
\[X^l \vert X^{-l} = \bm{x}^{-l}  \sim \mathcal{N}\left( \Sigma({l,-l})(\Sigma({-l,-l}))^{-1}\bm{x}^{-l},\Sigma({l,l}) -  \Sigma({l,-l}) (\Sigma({-l,-l}))^{-1} \Sigma({-l,l}) \right).\]

On the other hand, by block matrix inversion and take $l=1$ as example, we have
{\footnotesize
\begin{align}\label{eq:omega}
    \Omega = \left(\begin{array}{cc}
\Sigma({1,1}) & \Sigma({1,-1})\\
\Sigma({-1,1}) & \Sigma({-1,-1})
\end{array}\right)^{-1}
= \left(\begin{array}{cc}
c_1 & -c_1 \Sigma({1,-1}) (\Sigma({-1,-1}))^{-1} \\
-c_1(\Sigma({-1,-1}))^{-1} \Sigma({-1,1}) & \left(\Sigma({-1,-1}) - \Sigma({-1,1}) (\Sigma({1,1}))^{-1} \Sigma({1,-1})\right)^{-1}
\end{array}\right),
\end{align}}
where $c_1 = \left(\Sigma({1,1}) - \Sigma({1,-1}) (\Sigma({-1,-1}))^{-1} \Sigma({-1,1})\right)^{-1}$.
Therefore, we have 
\begin{equation}
  (\Sigma({-1,-1}))^{-1} \Sigma({-1,1}) = -(\Omega({1,1}))^{-1}\Omega({-1,1} )  \nonumber
\end{equation}

Applying \eqref{eq:omega}, we have
{\footnotesize
\begin{align}
&\Sigma({1,\cdot}) \left(\Sigma \right)^{-1} \Sigma({\cdot, 2}) \nonumber\\
=& \left(\begin{array}{cc}\Sigma({1, 1}) &  \Sigma({1,-1})\end{array}\right)
 \left(\begin{array}{cc}
c_1 & -c_1 \Sigma({1,-1}) \left(\Sigma({-1,-1})\right)^{-1} \\
-c_1(\Sigma({-1,-1}))^{-1} \Sigma({-1,1}) & \left(\Sigma({-1,-1}) - \Sigma({-1,1}) (\Sigma({1,1}))^{-1} \Sigma({1,-1})\right)^{-1}
\end{array}\right) \nonumber\\
& \quad \Sigma({\cdot, 2} )
\left(\begin{array}{c}\Sigma({1,2}) \\  \Sigma({-1,2})\end{array}\right)\nonumber\\
 =& 
\left(c_1 \Sigma({1,1})   -c_1 \Sigma({1,-1})(\Sigma({-1,-1}))^{-1} \Sigma({-1,1} )\right)\Sigma({1,2}) \nonumber\\
& \quad \quad \quad \quad - \left(c_1 \Sigma({1,1}) \Sigma({1,-1}) (\Sigma({-1,-1}))^{-1} - \Sigma({1,-1})\left(\Sigma({-1,-1}) - (\Sigma({1,1}))^{-1}\Sigma({-1,1} ) \Sigma({1,-1})\right)^{-1} \right)\Sigma({-1,2})
\nonumber\\
 =& 
\Sigma({1,2}) - (\left(1 - (\Sigma({1,1}))^{-1}\Sigma({1,-1}) (\Sigma({-1,-1}))^{-1} \Sigma({-1,1})\right)^{-1}  \Sigma({1,-1}) (\Sigma({-1,-1}))^{-1} \nonumber\\
& \quad \quad \quad - \Sigma({1,-1}) \left(\Sigma({-1,-1}) - (\Sigma({1,1}))^{-1}\Sigma({-1,1} ) \Sigma({1,-1})\right)^{-1} )\Sigma({-1,2})
\nonumber\\
=& 
\Sigma({1,2}) - ( \Sigma({1,-1})\left(\Sigma({-1,-1}) - (\Sigma({1,1}))^{-1}\Sigma({-1,1}) \Sigma({1,-1}) \right)^{-1}   \nonumber\\
& \quad \quad \quad  - \Sigma({1,-1})\left(\Sigma({-1,-1}) - (\Sigma({1,1}))^{-1}\Sigma({-1,1} ) \Sigma({1,-1})\right)^{-1} )\Sigma({-1,2})
\nonumber\\
=& 
\Sigma({1,2})  \nonumber
\end{align}}
And thus, we get
{\footnotesize
\begin{align}
    \text{Cov} \left( \varepsilon_{t}^l, x_{t}^k \right) &=   \text{Cov} \left( x_{t}^l - {\bm{x}_{t}^{-l}}^\prime \bm{\beta}_{-l}^l  , x_{t}^k \right) \nonumber\\
  & =   \text{Cov} \left( x_{t}^l +(\Omega({l,l}))^{-1} {\bm{x}_{t}^{-l}}^\prime \Omega({-l,l})  , x_{t}^k \right) \nonumber\\
  & =   \text{Cov}\left ( x_{t}^l   , x_{t}^k \right) +  \left(\Omega({l,l})\right)^{-1}\text{Cov} \left( {\bm{x}_{t}^{-l}}^\prime \Omega({-l,l}), x_{t}^k \right) \nonumber\\
  & =   \Sigma({l,k}) +  (\Omega({l,l}))^{-1}\E \left( x_{t}^k  {\bm{x}_{t}^{-l}}^\prime \Omega({-l,l})\right) \nonumber\\
  & =   \Sigma({l,k} )+ \sum_{i = 1 , i\neq l}^p \frac{\Omega({i,i})}{\Omega({l,l})} \Sigma({l,k}) \nonumber\\
  & =   \Sigma({l,k}) +  \left(\Omega({l,l})\right)^{-1} \Omega({l,-l}) \Sigma({-l,k}) \nonumber\\
  & =   \Sigma({l,k}) -\Sigma({l,-l}) \left(\Sigma({-l,-l}) \right)^{-1} \Sigma({-l,k}) \nonumber\\
  & =  0, \nonumber
\end{align}}
where the last equation holds by the fact that $\Sigma({l,-l}) \left(\Sigma({-l,-l}) \right)^{-1} \Sigma({-l,k}) =\Sigma({l,k})$. 
Therefore, the $l$-th component in $\bm{\varepsilon}_{t}$  is independent of $k$-th component in $\bm{x}_{t}$ for any $k \in [p] \backslash \{l\} $.



\section{Technical Lemmas}\label{sec:lemma}
\setcounter{equation}{0}

\begin{lemma}\label{lemma_lm_theta_1}
Suppose A1-A5 hold. 
For any $\widehat{t}_j$ in $\widehat{\mathcal{A}}_n$ such that $\min_{j'= 1,\dots,m_0}\left\vert \widehat{t}_j - t_{j'} \right\vert \geq b_n$, $t_{j_0 -1}< \widehat{t}_j  <  t_{j_0}$ and $\widehat{t}_j = r_i$ for some $i \in \{1,\dots, k_n - 1\}$, the following holds:
\[{\left\Vert \bm{B}_{j_0}^\star - \widehat{\bm{L}}_{j}\right\Vert_F =  O_p \left(\sqrt{\frac{d^\star_n\log (p_x p_y \vee n)}{b_n}}\right), \,
\left\Vert \bm{B}_{j_0}^\star - \widehat{\bm{L}}_{j+1}\right\Vert_F =  O_p \Bigg(\sqrt{\frac{d^\star_n\log (p_x p_y \vee n)}{b_n}}\Bigg),}\]
and 
\[{\left\Vert \widehat{\bm{\Theta}}_{i+1}\right\Vert_F = O_p\left(\sqrt{\frac{d^\star_n\log (p_x p_y \vee n)}{b_n}}\right), } \]
where $\widehat{\bm{L}}_{j} = \sum_{k=1}^{\widehat{i}_{j}-1}\widehat{\bm{\Theta}}_k$ and $\widehat{\bm{L}}_{j+1}   = \sum_{k=1}^{\widehat{i}_{j}}\widehat{\bm{\Theta}}_k$ are the partial sums of $\widehat{\bm{\Theta}}_k$;  $\widehat{i}_j$ is the  corresponding indice of candidate point  $\widehat{t}_j$.
\end{lemma}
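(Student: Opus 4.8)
The plan is to reduce all three bounds to a single \emph{local} normal equation for the fitted coefficient matrix on the block that sits inside segment $j_0$, and then invert that equation with the restricted eigenvalue condition. First I would record the structural reduction. Under the lower-triangular design $\mathcal{X}$ of \eqref{eq:estimation_block}, the cumulative sums $\widehat{\bm{L}}_{j}=\sum_{k\le i}\widehat{\bm{\Theta}}_k$ and $\widehat{\bm{L}}_{j+1}=\sum_{k\le i+1}\widehat{\bm{\Theta}}_k$ (where $\widehat t_j=r_i$) are exactly the fitted coefficient matrices on blocks $[r_{i-1},r_i)$ and $[r_i,r_{i+1})$. Hence the third bound is immediate from the first two, since $\widehat{\bm{\Theta}}_{i+1}=\widehat{\bm{L}}_{j+1}-\widehat{\bm{L}}_{j}$ and the triangle inequality. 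Moreover, because $\widehat t_j=r_i$ is at distance at least $b_n$ from both $t_{j_0-1}$ and $t_{j_0}$, both blocks lie entirely inside segment $j_0$ (as $r_{i-1}\ge t_{j_0-1}$ and $r_{i+1}\le t_{j_0}$), so the true coefficient there is the constant $\bm{B}_{j_0}^\star$ and the model reads $\bm{y}_t=\bm{B}_{j_0}^\star\bm{x}_t+\bm{\varepsilon}_t$ on these blocks.

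The central device is the subgradient (KKT) optimality of the global minimizer of \eqref{eq:estimation_block}, together with \emph{differencing the stationarity conditions} for the two consecutive parameters $\bm{\Theta}_i$ and $\bm{\Theta}_{i+1}$. Since the fitted value on block $k$ depends on the parameter only through the partial sum $\sum_{j\le k}\bm{\Theta}_j$, the stationarity condition for $\bm{\Theta}_i$ aggregates residuals over all blocks $k\ge i$; subtracting the condition for $\bm{\Theta}_{i+1}$ cancels this long tail and leaves a relation involving only the block-$i$ residual. Writing $\Delta_i=\widehat{\bm{L}}_{j}-\bm{B}_{j_0}^\star$ and $\widehat{\Sigma}_{b_i}=\tfrac{1}{b_n}\sum_{t=r_{i-1}}^{r_i-1}\bm{x}_t\bm{x}_t^\prime$, this differenced condition takes the form $I_{p_y}\otimes\widehat{\Sigma}_{b_i}\,\mathrm{vec}(\Delta_i)=\tfrac{1}{b_n}\sum_{t=r_{i-1}}^{r_i-1}\mathrm{vec}(\bm{x}_t\bm{\varepsilon}_t^\prime)+\tfrac{n}{2b_n}\,(\text{subgradient terms})$, after rescaling by $b_n/n$. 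The deviation bound A2, applied to the block of length $b_n$, controls the noise term by $c_2\sqrt{\log(p_xp_y\vee n)/b_n}$ in $\ell_\infty$, while the subgradient vectors have $\ell_\infty$ norm at most $2\lambda_{1,n}+\lambda_{2,n}$, which under the tuning rates A5 also contributes $O_p(\sqrt{\log(p_xp_y\vee n)/b_n})$. Thus $\Vert I_{p_y}\otimes\widehat{\Sigma}_{b_i}\,\mathrm{vec}(\Delta_i)\Vert_{\infty}=O_p(\sqrt{\log(p_xp_y\vee n)/b_n})$.

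To turn this $\ell_\infty$ control into a Frobenius bound I would first establish a cone condition. Taking the inner product of the local equation with $\Delta_i$ and using that the lasso subgradient equals $\mathrm{sign}(\widehat{\bm{L}}_{j})$ off the support $S=\mathrm{supp}(\bm{B}_{j_0}^\star)$ (with $|S|\le d_n^\star$ by A3) yields $\Vert\Delta_{i,S^c}\Vert_1\le 3\Vert\Delta_{i,S}\Vert_1$, hence $\Vert\Delta_i\Vert_1\le 4\sqrt{d_n^\star}\,\Vert\Delta_i\Vert_F$. Then Hölder gives $\Delta_i^\prime\widehat{\Sigma}_{b_i}\Delta_i\le\Vert\Delta_i\Vert_1\cdot O_p(\sqrt{\log/b_n})$, and the lower-RE bound A1 (applicable since the block lies in one segment and has length $b_n>a_n$) gives $\Delta_i^\prime\widehat{\Sigma}_{b_i}\Delta_i\ge\alpha_1\Vert\Delta_i\Vert_F^2-\tau\Vert\Delta_i\Vert_1^2$. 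Absorbing $\tau\Vert\Delta_i\Vert_1^2\lesssim\tau d_n^\star\Vert\Delta_i\Vert_F^2$ into the left-hand side via A4 (so that $d_n^\star\log/b_n\to0$) produces $\Vert\Delta_i\Vert_F=O_p(\sqrt{d_n^\star\log(p_xp_y\vee n)/b_n})$, which is the first claim. The same argument applied to the conditions for $\bm{\Theta}_{i+1},\bm{\Theta}_{i+2}$ gives the bound for $\widehat{\bm{L}}_{j+1}$ on block $i+1$ (with the last block handled by the un-differenced stationarity condition), and the third claim follows by subtraction.

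The main obstacle is making the cone condition close. The fused penalty contributes a subgradient-difference term of the \emph{same} order $\lambda_{1,n}\asymp\lambda_{2,n}$ as the sparsity penalty, so one must verify that the magnitude (lasso) penalty still dominates; this is where the freedom in the large constants $C_1,C_2$ of A5 must be used, choosing $C_2$ sufficiently large relative to $C_1$ and the deviation constant $c_2$. Simultaneously, the restricted-eigenvalue slack $\tau\Vert\cdot\Vert_1^2$ must be negligible, which forces the RE threshold $a_n$ to be taken comparable to $b_n$ so that $\tau d_n^\star\asymp d_n^\star\log(p_xp_y\vee n)/b_n\to0$. Getting these two budgets—penalty constants and the $\tau$–sparsity tradeoff—to close at once, while keeping the differencing-of-KKT cancellation exact, is the delicate core of the argument.
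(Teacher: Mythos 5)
Your proof is correct and reaches the paper's bounds by a route that is mechanically different but statistically equivalent. The paper does not difference KKT conditions: it constructs a perturbed parameter sequence $\bm{\Psi}$ that replaces the fitted coefficient by $\bm{B}_{j_0}^\star$ on the whole stretch between consecutive estimated change points (length at least $b_n$, handled with a case split on whether $\widehat{t}_{j-1}$ falls to the left or right of $s_{j_0-1}$), and invokes the basic inequality that the objective at $\widehat{\bm{\theta}}$ is no larger than at $\bm{\psi}$; rearranging yields essentially the same display you obtain by taking the inner product of your differenced stationarity condition with $\Delta_i$. Your localization to the single block $[r_{i-1},r_i)$ is legitimate --- the hypothesis $\min_{j'}|\widehat{t}_j-t_{j'}|\ge b_n$ guarantees both blocks adjacent to $\widehat{t}_j$ sit inside segment $j_0$, and $b_n\gg \log(p_xp_y\vee n)$ under A4 so the RE and deviation bounds apply at length $b_n$ --- and it buys a small simplification: you avoid the paper's case analysis on the location of the neighboring estimated change point, at the cost of not exploiting the (possibly much longer) inter-change-point interval, which is irrelevant here since the lemma only claims the $\sqrt{d_n^\star\log(p_xp_y\vee n)/b_n}$ rate. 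The two delicate points you flag are exactly the ones the paper leans on: the fifth inequality in the paper's chain is precisely the requirement that $\tfrac{n\lambda_{2,n}}{2b_n}$ dominate $\tfrac{2n\lambda_{1,n}}{b_n}+C\sqrt{\log(p_xp_y\vee n)/b_n}$ (i.e., $C_2$ chosen large relative to $C_1$ and the deviation constant), and the absorption of $\tau\Vert\Delta\Vert_1^2\le 16\,\tau d_n^\star\Vert\Delta\Vert_F^2$ into the lower-RE term uses $d_n^\star\log(p_xp_y\vee n)/b_n\to 0$ from A4. Your reduction of the third bound to the first two via $\widehat{\bm{\Theta}}_{i+1}=\widehat{\bm{L}}_{j+1}-\widehat{\bm{L}}_{j}$ and the triangle inequality is also exactly how the paper concludes.
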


\begin{proof}

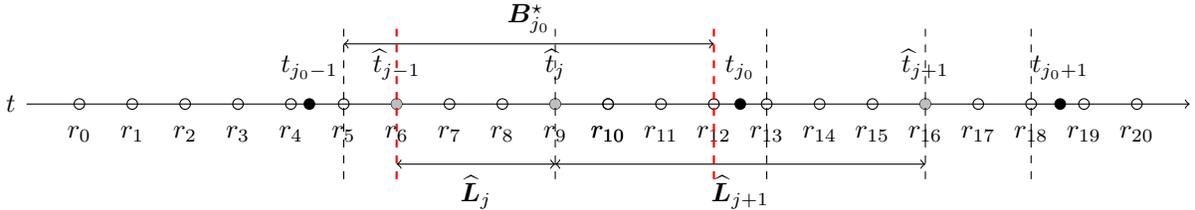
\begin{figure}[ht!]
    \centering
\begin{tikzpicture}
\centering
    \FPeval{\x}{clip(20)}
  \begin{scope}[font=\footnotesize,x=\x pt]
  \foreach \mypt in {0,...,\bound}{
    \FPeval{\result}{clip(\bound+\mypt)}
    \draw(\mypt,0)circle[radius=2pt];
    \draw(-\mypt,0)circle[radius=2pt];
    \node[at={(\mypt,0)},label=below:{$r_{\result}$}]{};
    \ifnum \mypt <1
        \node[at={(\mypt-\bound,0)},label=below:{$r_0$}]{};
    \else
        \node[at={(\mypt-\bound,0)},label=below:{$r_{\mypt}$}]{};
    \fi
  }
  \draw [->] (-\bound-1,0)--(\bound+1,0) node[pos=0,left]{$t$};
  
  \FPeval{\pp}{clip(-1)}
  \fill[gray!50]( \pp * \x pt ,0) circle[radius=2pt];
  \node[at={( \pp * \x pt,0)},above=5pt]{$\widehat{t}_{j}$};
  \FPeval{\ppp}{clip(-4)}
  \fill[gray!50]( \ppp * \x pt ,0) circle[radius=2pt];
  \node[at={( \ppp * \x pt,0)},above=5pt]{$\widehat{t}_{j-1}$};
  \draw [<->] (\ppp * \x pt, -0.8)--(\pp * \x pt,-0.8) node[pos= 1/2,below]{$\widehat{\bm{L}}_{j}$};
  \draw [red, dashed, thick] ( \ppp * \x pt,-1) -- ( \ppp * \x pt,1);
  
  \FPeval{\ppp}{clip(6)}
  \fill[gray!50]( \ppp * \x pt ,0) circle[radius=2pt];
  \node[at={( \ppp * \x pt,0)},above=5pt]{$\widehat{t}_{j+1}$};
  \draw [<->] (\ppp * \x pt, -0.8)--(\pp * \x pt,-0.8) node[pos= 1/2,below]{$\widehat{\bm{L}}_{j+1}$};
  \draw [dashed] ( \pp * \x pt,-1) -- ( \pp * \x pt,1);
  \draw [ dashed] ( \ppp * \x pt,-1) -- ( \ppp * \x pt,1);

  \FPeval{\pp}{clip(-6)}
  \FPeval{\xx}{clip( \pp * \x + 7 )}
  \FPeval{\dxx}{clip( (\pp + 1) * \x  )}
  \filldraw( \xx pt ,0)circle[radius=2pt];
  \node[at={(\xx pt,0)},above=5pt]{$t_{j_0-1}$};
  \draw [dashed] ( \dxx pt,-1) -- ( \dxx pt,1);
  
   \FPeval{\pp}{clip(2)}
   \draw [<->] (\dxx pt, 0.8)--( \pp * \x pt,0.8) node[pos= 1/2,above]{$\bm{B}_{j_0}^\star$};
  \FPeval{\xx}{clip( \pp * \x + 10 )}
  \FPeval{\dxx}{clip( \pp * \x  )}
   \filldraw( \xx pt ,0)circle[radius=2pt];
  \node[at={(\xx pt,0)},above=5pt]{$t_{j_0}$};
  \draw [red, dashed, thick] ( \dxx pt,-1) -- ( \dxx pt,1);
  \draw [dashed] ( \dxx + \x pt,-1) -- ( \dxx +\x pt,1);
  \FPeval{\pp}{clip(8)}
  \FPeval{\xx}{clip( \pp * \x + 11 )}
  \FPeval{\dxx}{clip( \pp * \x  )}
  \filldraw( \xx pt ,0)circle[radius=2pt];
  \node[at={(\xx pt,0)},above=5pt]{$t_{j_0+1}$};
  \draw [dashed] ( \dxx pt,-1) -- ( \dxx pt,1);
  
\end{scope}
\end{tikzpicture}
    \caption{Consider an estimated change point $\widehat{t}_j$ lies within the interval $[t_{j_0-1},t_{j_0} ]$ which is isolated from all the true change points, i.e.,
$t_{j_0-1} < \widehat{t}_j < t_{j_0}$, with $\vert \widehat{t}_j - t_{j_0} \vert \geq b_n$ and $\vert \widehat{t}_j - t_{j_0-1} \vert \geq b_n$.
The idea is to show the estimated coefficient parameters  $\widehat{\bm{L}}_j$ and $\widehat{\bm{L}}_{j+1}$  in intervals $[s_{j_0-1}\vee \widehat{t}_{j-1} ,\widehat{t}_j]$ and $[\widehat{t}_j , s_{j_0} \wedge \widehat{t}_{j+1}]$ both converge in $\ell_2$ to $\bm{B}_{j_0}^\star$.}
    \label{fig:interval_1}
\end{figure}

We consider an estimated change point $\widehat{t}_j$ lies within the interval $[t_{j_0-1},t_{j_0} ]$ which is isolated from all the true change points, i.e.,
$t_{j_0-1} < \widehat{t}_j < t_{j_0}$, with $\vert \widehat{t}_j - t_{j_0} \vert \geq b_n$ and $\vert \widehat{t}_j - t_{j_0-1} \vert \geq b_n$.
The idea is to show the estimated coefficient parameters  $\widehat{\bm{L}}_j$ and $\widehat{\bm{L}}_{j+1}$  in intervals $[t_{j_0-1}\vee \widehat{t}_{j-1} ,\widehat{t}_j]$ and $[\widehat{t}_j , t_{j_0} \wedge \widehat{t}_{j+1}]$ both converge in $\ell_2$ to $\bm{B}_{j_0}^\star$.

Denote the closest $r_i$ to the right side of $t_{j_0-1}$ by  $s_{j_0 -1}$, and denote the closest $r_i$ to the left side of $t_{j_0}$ by  $s_{j_0}$. Note that $s_{j_0 -1}-t_{j_0-1} < b_n $ and $t_{j_0 }-s_{j_0} < b_n $. If $\left\vert \widehat{t}_j - t_{j_0-1} \right\vert = b_n$, then $t_{j_0-1} = s_{j_0-1} $ and $\left\vert \widehat{t}_j - s_{j_0-1} \right\vert = b_n$; similarly, if $\left\vert \widehat{t}_j - t_{j_0} \right\vert = b_n$, then $t_{j_0} = s_{j_0} $ and $\left\vert \widehat{t}_j - s_{j_0} \right\vert = b_n$. If $\left\vert \widehat{t}_j - t_{j_0-1} \right\vert > b_n$ and $\left\vert \widehat{t}_j - t_{j_0} \right\vert > b_n$, then $\left\vert \widehat{t}_j - t_{j_0-1} \right\vert \geq 2b_n$ and $\vert \widehat{t}_j - t_{j_0} \vert \geq 2b_n$, which means that $\left\vert \widehat{t}_j - s_{j_0-1} \right\vert \geq b_n$ and $\left\vert \widehat{t}_j - s_{j_0} \right\vert \geq b_n$. Therefore, we have $\widehat{t}_j - s_{j_0-1} \vee \widehat{t}_{j-1} \geq  \min ( \widehat{t}_j - s_{j_0-1}, \widehat{t}_j - \widehat{t}_{j-1}) \geq b_n$ and $s_{j_0} \wedge \widehat{t}_{j+1}- \widehat{t}_{j}
 \geq  \min \left(s_{j_0} - \widehat{t}_{j}, \widehat{t}_{j+1} - \widehat{t}_{j} \right) \geq b_n$ . In other words, the lengths of the intervals $[s_{j_0-1}\vee \widehat{t}_{j-1} ,\widehat{t}_j]$ and $[\widehat{t}_j , s_{j_0} \wedge \widehat{t}_{j+1}]$  are large enough to verify restricted eigenvalue and deviation bound inequalities.

We first focus on the interval $[s_{j_0-1}\vee \widehat{t}_{j-1} ,\widehat{t}_j]$. Define a new parameter sequence $\bm{\Psi}_k$'s, $k= 1,\dots, k_n$ with $\bm{\Psi}_k = \widehat{\bm{\Theta}}_k$ except for two time points $k=s_{j_0-1}\vee \widehat{t}_{j-1}$ and $k=\widehat{t}_j$. If  $\widehat{t}_{j-1} > s_{j_0-1} $, set $\bm{\Psi}_{\widehat{i}_{j-1}} = \bm{B}_{j_0}^\star - \widehat{\bm{L}}_{j-1} $ and $\bm{\Psi}_{\widehat{t}_j} = \widehat{\bm{L}}_{j+1} -\bm{B}_{j_0}^\star $; if  $\widehat{t}_{j-1} \leq s_{j_0-1} $, set $\bm{\Psi}_{s_{j_0-1}} = \bm{B}_{j_0}^\star - \widehat{\bm{L}}_{j} $ and $\bm{\Psi}_{\widehat{t}_j} = \widehat{\bm{L}}_{j+1} -\bm{B}_{j_0}^\star $, where $\widehat{\bm{L}}_{j-1} = \sum_{k=1}^{\widehat{i}_{j-1} -1}\widehat{\bm{\Theta}}_k$, 
$\widehat{\bm{L}}_{j} = \sum_{k=1}^{\widehat{i}_{j-1}}\widehat{\bm{\Theta}}_k  = \sum_{k=1}^{\widehat{i}_{j}-1}\widehat{\bm{\Theta}}_k$ and $\widehat{\bm{L}}_{j+1}   = \sum_{k=1}^{\widehat{i}_{j}}\widehat{\bm{\Theta}}_k$ .
where $\widehat{i}_{j-1}$ and $\widehat{i}_j$ are the  corresponding indices of candidate points $\widehat{t}_{j-1}$ and $\widehat{t}_j$ .
By the definition of $\widehat{\bm{\theta}}$ in \eqref{eq:estimation_block}, the value of the function in \eqref{eq:estimation_block} is minimized at $\widehat{\bm{\theta}}$. Denoting $\bm{\psi} = \text{vec}(\bm{\Psi}_1, \dots, \bm{\Psi}_{k_n}) \in \mathbb{R}^{\pi_n \times 1}$, where $\pi_n = k_np_xp_y$, we have
{\footnotesize\begin{equation} 
\frac{1}{n}\left\Vert \mathbf{y} - \mathbf{Z}\widehat{\bm{\theta}}\right\Vert_{2}^{2} + \lambda_{1,n}\left\Vert\widehat{\bm{\theta}}\right\Vert_1 + \lambda_{2,n}\sum_{k=1}^{k_n}\left\Vert\sum_{j=1}^k\widehat{\bm{\Theta}}_j\right\Vert_1 \leq 
\frac{1}{n}\left\Vert \mathbf{y} - \bm{Z}\bm{\psi}\right\Vert_{2}^{2} + \lambda_{1,n}\left\Vert\bm{\psi}\right\Vert_1 + \lambda_{2,n}\sum_{k=1}^{k_n}\left\Vert\sum_{j=1}^k\bm{\Psi}_j\right\Vert_1.
\label{eq:obj_lm_1}
\end{equation}}

When $\widehat{t}_{j-1} > s_{j_0-1} $, some rearrangement of equation \eqref{eq:obj_lm_1} leads to

{\footnotesize
\begin{align} 
0 
&\leq \frac{1}{\widehat{t}_j - \widehat{t}_{j-1}} \sum_{l= \widehat{t}_{j-1} }^{\widehat{t}_j - 1} \bm{x}^{\prime}_{l}\left(\bm{B}_{j_0}^\star - \widehat{\bm{L}}_{j}\right)^{\prime}\left(\bm{B}_{j_0}^\star - \widehat{\bm{L}}_{j}\right)\bm{x}_{l}\nonumber\\
&\leq \frac{2}{\widehat{t}_j -  \widehat{t}_{j-1}} \sum_{l= \widehat{t}_{j-1} }^{\widehat{t}_j - 1} \bm{x}^{\prime}_{l}\left(\bm{B}_{j_0}^\star - \widehat{\bm{L}}_{j}\right)^\prime\bm{\varepsilon}_l\nonumber\\
& + \frac{n\lambda_{1,n}}{\widehat{t}_j -  \widehat{t}_{j-1}} 
\left(\left\Vert \bm{B}_{j_0}^\star - \widehat{\bm{L}}_{j+1}\right\Vert_1 + \left\Vert \bm{B}_{j_0}^\star - \widehat{\bm{L}}_{j-1}\right\Vert_1 -\left\Vert \widehat{\bm{L}}_{j+1}-\widehat{\bm{L}}_{j}\right\Vert_1  -\left\Vert \widehat{\bm{L}}_{j}-\widehat{\bm{L}}_{j-1}\right\Vert_1  \right) 
\nonumber\\
& +  \frac{n\lambda_{2,n}}{b_n}\left(\left\Vert \bm{B}_{j_0}^\star \right\Vert_1 -\left\Vert \widehat{\bm{L}}_{j} \right\Vert_1 \right) \nonumber\\
& \leq  \frac{2}{\widehat{t}_j -  \widehat{t}_{j-1}} 
\left\Vert \sum_{l = \widehat{t}_{j-1} }^{t_j - 1} \bm{x}_{l} \bm{\varepsilon}_l^\prime  \right\Vert_\infty \left\Vert \bm{B}_{j_0}^\star - \widehat{\bm{L}}_{j}\right\Vert_1
 + \frac{2n\lambda_{1,n}}{\widehat{t}_j - \widehat{t}_{j-1}} 
\left\Vert \bm{B}_{j_0}^\star - \widehat{\bm{L}}_{j}\right\Vert_1
 + \frac{n \lambda_{2,n}}{b_n}\left(\left\Vert \bm{B}_{j_0}^\star \right\Vert_1 -\left\Vert \widehat{\bm{L}}_{j} \right\Vert_1 \right ) \nonumber\\
& \leq \left(\frac{2n\lambda_{1,n}}{\widehat{t}_j - \widehat{t}_{j-1}} + 
 C \sqrt{\frac{{\log (p_x  p_y \vee n)} }{b_n}}\right)
\left\Vert \bm{B}_{j_0}^\star - \widehat{\bm{L}}_{j}\right\Vert_1
 +  \frac{n\lambda_{2,n}}{b_n}\left(\left\Vert \bm{B}_{j_0}^\star \right\Vert_1 -\left\Vert \widehat{\bm{L}}_{j} \right\Vert_1 \right ) \nonumber\\
 & \leq  \frac{n\lambda_{2,n}}{2b_n}
\left\Vert \bm{B}_{j_0}^\star - \widehat{\bm{L}}_{j}\right\Vert_1
 + \frac{n \lambda_{2,n}}{b_n}\left(\left\Vert \bm{B}_{j_0}^\star \right\Vert_1 -\left\Vert \widehat{\bm{L}}_{j} \right\Vert_1 \right) \nonumber\\
 & \leq  \frac{n\lambda_{2,n}}{2b_n}
\left\Vert \bm{B}_{j_0}^\star - \widehat{\bm{L}}_{j}\right\Vert_{1,\mathcal{I}} +  \frac{n\lambda_{2,n}}{2b_n}
\left\Vert \bm{B}_{j_0}^\star - \widehat{\bm{L}}_{j}\right\Vert_{1,\mathcal{I}^c}
 +\frac{ n \lambda_{2,n}}{b_n}\left( \left\Vert \bm{B}_{j_0}^\star - \widehat{\bm{L}}_{j}\right\Vert_{1,\mathcal{I}} -
 \left\Vert \bm{B}_{j_0}^\star - \widehat{\bm{L}}_{j}\right\Vert_{1,\mathcal{I}^c}\right) \nonumber\\
 & \leq \frac{3n \lambda_{2,n}}{2b_n}
\left\Vert \bm{B}_{j_0}^\star - \widehat{\bm{L}}_{j}\right\Vert_{1,\mathcal{I}}
 - \frac{n\lambda_{2,n}}{2b_n}
\left\Vert \bm{B}_{j_0}^\star - \widehat{\bm{L}}_{j}\right\Vert_{1,\mathcal{I}^c}.
 \label{eq:lemma_1_1}
\end{align}}

The third inequality holds due to the H\"{o}lder's inequality and the triangle inequality. The fourth inequality holds by 
deviation bound condition in \eqref{eq:dev_bound}. The fifth inequality is based on the selection of $\lambda_{1,n}$ and $\lambda_{2,n}$.  

Similarly, when $\widehat{t}_{j-1} \leq s_{j_0-1} $, some rearrangement of equation \eqref{eq:obj_lm_1} leads to
{\footnotesize\begin{align}
0
&\leq \frac{1}{\widehat{t}_j - s_{j_0-1}} \sum_{l= s_{j_0-1} }^{\widehat{t}_j - 1} \bm{x}^{\prime}_{l}\left(\bm{B}_{j_0}^\star - \widehat{\bm{L}}_{j}\right)^{\prime}\left(\bm{B}_{j_0}^\star - \widehat{\bm{L}}_{j}\right)\bm{x}_{l}\nonumber\\
&\leq \frac{2}{\widehat{t}_j -  s_{j_0-1}} \sum_{l= s_{j_0-1} }^{\widehat{t}_j - 1}\bm{x}^{\prime}_{l}\left(\bm{B}_{j_0}^\star - \widehat{\bm{L}}_{j}\right)^{\prime}\bm{\varepsilon}_l\nonumber\\
& + \frac{n\lambda_{1,n}}{\widehat{t}_j -  s_{j_0-1}} 
\left(\left\Vert \bm{B}_{j_0}^\star - \widehat{\bm{L}}_{j+1}\right\Vert_1 + \left\Vert \bm{B}_{j_0}^\star - \widehat{\bm{L}}_{j} \right\Vert_1-\left\Vert \widehat{\bm{L}}_{j+1}-\widehat{\bm{L}}_{j}\right\Vert_1  \right) 
\nonumber\\
& +  \frac{n\lambda_{2,n}}{b_n}\left(\left\Vert \bm{B}_{j_0}^\star \right\Vert_1 -\left\Vert \widehat{\bm{L}}_{j} \right\Vert_1 \right) \nonumber\\
&\leq \left(\frac{2n\lambda_{1,n}}{\widehat{t}_j - s_{j_0-1}} + 
 C \sqrt{\frac{{\log (p_x p_y \vee n)} }{b_n}}\right)
\left\Vert \bm{B}_{j_0}^\star - \widehat{\bm{L}}_{j}\right\Vert_1
 + \frac{n\lambda_{2,n}}{b_n}\left(\left\Vert \bm{B}_{j_0}^\star \right\Vert_1 -\left\Vert \widehat{\bm{L}}_{j} \right\Vert_1 \right) \nonumber\\
 & \leq \frac{3n\lambda_{2,n}}{2b_n}
\left\Vert \bm{B}_{j_0}^\star - \widehat{\bm{L}}_{j}\right\Vert_{1,\mathcal{I}}
 -\frac{n\lambda_{2,n}}{2b_n}
\left\Vert \bm{B}_{j_0}^\star - \widehat{\bm{L}}_{j}\right\Vert_{1,\mathcal{I}^c}
\label{eq:lemma_1_2} 
\end{align}}

Based on \eqref{eq:lemma_1_1} and \eqref{eq:lemma_1_2}, we have 
{\footnotesize\begin{align}  
\left\Vert \bm{B}_{j_0}^\star - \widehat{\bm{L}}_{j}\right\Vert_{1}  \leq 4
\left\Vert \bm{B}_{j_0}^\star - \widehat{\bm{L}}_{j}\right\Vert_{1,\mathcal{I}}\leq 4
\sqrt{d^\star_n}\left\Vert \bm{B}_{j_0}^\star - \widehat{\bm{L}}_{j}\right\Vert_{F},\nonumber
\end{align} }
which leads to
{\footnotesize\begin{align}\label{ineq_l1_l2} 
\left\Vert \bm{B}_{j_0}^\star - \widehat{\bm{L}}_{j}\right\Vert_{1}^2  \leq 16
d^\star_n\left\Vert \bm{B}_{j_0}^\star - \widehat{\bm{L}}_{j}\right\Vert_{F}^2. 
\end{align} }

Combine \eqref{ineq_l1_l2} with  the restricted eigenvalue condition in \eqref{eq:RE_bound} and the fact that ${\ d^\star_n \frac{\log (p_x  p_y\vee n)}{b_n} \rightarrow 0}$, 
{there exist constants $\alpha, \tau >0$ such that   }
{\footnotesize\begin{equation} 0 
\leq {\alpha \left\Vert \bm{B}_{j_0}^\star - \widehat{\bm{L}}_{j}\right\Vert_F^2 - \tau  \left\Vert \bm{B}_{j_0}^\star - \widehat{\bm{L}}_{j}\right\Vert_1^2 }
\leq  \frac{1}{\widehat{t}_j - s_{j_0-1}\vee \widehat{t}_{j-1}} \sum_{l= s_{j_0-1}\vee \widehat{t}_{j-1} }^{\widehat{t}_j - 1} \bm{x}^{\prime}_{l}\left(\bm{B}_{j_0}^\star - \widehat{\bm{L}}_{j}\right)^{\prime}\left(\bm{B}_{j_0}^\star - \widehat{\bm{L}}_{j}\right)\bm{x}_{l}.
\label{eq:lemma_2_RE}
\end{equation} }

Combine \eqref{eq:lemma_1_1}, \eqref{eq:lemma_1_2}, \eqref{eq:lemma_2_RE}  with the selection of $\lambda_{2,n}$, 
{there exist a constant $c >0$ such that   }
{\footnotesize\begin{align}  c \left\Vert \bm{B}_{j_0}^\star - \widehat{\bm{L}}_{j}\right\Vert_F^2  &\leq  {\frac{2n\lambda_{2,n}}{b_n}}
\left\Vert \bm{B}_{j_0}^\star - \widehat{\bm{L}}_{j}\right\Vert_{1,\mathcal{I}}
 \nonumber\\
&\leq  {\frac{2n\lambda_{2,n}}{b_n}}
\left\Vert \bm{B}_{j_0}^\star - \widehat{\bm{L}}_{j}\right\Vert_{1}\nonumber\\
&\leq  {\frac{8n\lambda_{2,n}}{b_n}}
\sqrt{d^\star_n}\left\Vert \bm{B}_{j_0}^\star - \widehat{\bm{L}}_{j}\right\Vert_{F}\nonumber
\end{align} }
which leads to
{\footnotesize\begin{align} 
\left\Vert \bm{B}_{j_0}^\star - \widehat{\bm{L}}_{j}\right\Vert_F^2  
\leq  {\frac{64n^2\lambda_{2,n}^2}{c^2b_n^2}}
d^\star_n= {\frac{64C_2^2}{c^2}} \frac{d^\star_n{\log (p_x  p_y \vee n)) }}{b_n}\nonumber
\end{align} }

This implies that
{\footnotesize\begin{equation}
\left\Vert \bm{B}_{j_0}^\star - \widehat{\bm{L}}_{j}\right\Vert_F = { O_p \left(\sqrt{\frac{d^\star_n \log (p_x  p_y \vee n) }{b_n}}\right),}
\label{eq:small_o_1}
\end{equation}}
which means that $\left\Vert \bm{B}_{j_0}^\star - \widehat{\bm{L}}_{j}\right\Vert_F$ converges to zero in probability based on Assumption A3.

Same procedure can be applied to the interval $[\widehat{t}_j , s_{j_0} \wedge \widehat{t}_{j+1}]$ which lead to 
{\footnotesize\begin{equation}
\left\Vert \bm{B}_{j_0}^\star - \widehat{\bm{L}}_{j+1}\right\Vert_F ={  O_p \left(\sqrt{\frac{d^\star_n \log (p_x  p_y \vee n)}{b_n}}\right),}
\label{eq:small_o_2}
\end{equation}}
which means that $\left\Vert \bm{B}_{j_0}^\star - \widehat{\bm{L}}_{j+1}\right\Vert_F$ converges to zero in probability based on Assumption A3.

By triangular inequality, we have
{\footnotesize\begin{align} 
&\left\lVert \widehat{\bm{\Theta}}_{\widehat{t}_j} \right\rVert_F = \left\lVert \widehat{\bm{L}}_{j+1}-\widehat{\bm{L}}_{j} \right\rVert_F 
=\left\lVert \widehat{\bm{L}}_{j+1}-\bm{B}_{j_0}^\star+\bm{B}_{j_0}^\star-\widehat{\bm{L}}_{j} \right\rVert_F  \nonumber\\
\leq &\left\lVert \widehat{\bm{L}}_{j+1} - \bm{B}_{j_0}^\star \right\rVert_F +\left\lVert \widehat{\bm{L}}_{j} -\bm{B}_{j_0}^\star\right\rVert_F \nonumber\\
=  & {\ O_p \left(\sqrt{\frac{d^\star_n\log \left(p_xp_y \vee n \right)}{b_n}}\right).}
\end{align}}
This completes the proof of the lemma.

\end{proof}

\begin{lemma}\label{lemma_lm_theta_2} Suppose A1-A4 hold. Choose {
$\lambda_{1,n} = C_1 \sqrt{{ \log (p_x p_y \vee n)}/{n}} \sqrt{{b_n}/{n}}$
, and 
$\lambda_{2,n} = C_2\sqrt{{\log( p_x p _y \vee n)}/{n}}\sqrt{{b_n}/{n} }$ 
for some large constant $C_1, C_2 > 0$.} Then, for any $t_{j_0}$ in $\mathcal{A}_n$, there exist at least one $\widehat{t}_j$ in $\widehat{\mathcal{A}}_n$ such that $\left\vert \widehat{t}_j - t_{j_0} \right\vert < b_n$ and $\widehat{t}_j = r_i$ for some $i \in \{1,\dots, k_n - 1\}$.
Moreover,
\[{\left\lVert \widehat{\bm{\Theta}}_{i+1}\right\rVert_F \geq \frac{1}{2}\nu_n - O_p \left(\sqrt{\frac{d^\star_n\log (p_x  p_y \vee n) }{b_n}}\right). }\]
\end{lemma}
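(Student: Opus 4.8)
The plan is to establish both conclusions simultaneously by controlling the two block-level jumps that straddle $t_{j_0}$. First I would locate $t_{j_0}$: since $1=r_0<\dots<r_{k_n}=n+1$ partitions time into blocks of length $b_n$, there is a unique index $i^\star$ with $t_{j_0}\in[r_{i^\star-1},r_{i^\star})$, and the two nearest block ends $r_{i^\star-1}$ and $r_{i^\star}$ both lie within $b_n$ of $t_{j_0}$. Assumption~A4 (that $\Delta_n/b_n\to\infty$) guarantees that the adjacent blocks $[r_{i^\star-2},r_{i^\star-1})$ and $[r_{i^\star},r_{i^\star+1})$ are each of length $b_n$ and lie entirely inside the stationary segments with coefficients $\bm B_{j_0}^\star$ and $\bm B_{j_0+1}^\star$ respectively; in particular they are long enough and ``clean'' enough to invoke the restricted eigenvalue condition \eqref{eq:RE_bound} and the deviation bound \eqref{eq:dev_bound}.

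Next I would show that the estimated (constant) coefficients on these two clean blocks, namely $\bm A:=\sum_{k=1}^{i^\star-1}\widehat{\bm\Theta}_k$ and $\bm C:=\sum_{k=1}^{i^\star+1}\widehat{\bm\Theta}_k$, converge to the local truth. Concretely, following the argument in the proof of Lemma~\ref{lemma_lm_theta_1} — construct a perturbed sequence $\bm\Psi$ that equals $\widehat{\bm\Theta}$ except at the two ends of the relevant clean block, where it is reset so that the cumulative sum equals $\bm B_{j_0}^\star$ (resp.\ $\bm B_{j_0+1}^\star$) while leaving all downstream partial sums intact, then feed this into the basic inequality coming from optimality of $\widehat{\bm\theta}$ in \eqref{eq:estimation_block} — one obtains
\[\left\Vert \bm A-\bm B_{j_0}^\star\right\Vert_F=O_p\!\left(\sqrt{\tfrac{d_n^\star\log(p_xp_y\vee n)}{b_n}}\right),\qquad \left\Vert \bm C-\bm B_{j_0+1}^\star\right\Vert_F=O_p\!\left(\sqrt{\tfrac{d_n^\star\log(p_xp_y\vee n)}{b_n}}\right).\]
The chosen rates of $\lambda_{1,n},\lambda_{2,n}$ together with the fact that $d_n^\star\log(p_xp_y\vee n)/b_n\to 0$ (A4) are what make these error terms $o_p(1)$.

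I would then combine the two bounds. Since the intermediate increments telescope, $\widehat{\bm\Theta}_{i^\star}+\widehat{\bm\Theta}_{i^\star+1}=\bm C-\bm A$, so by the triangle inequality
\begin{align*}
\left\Vert\widehat{\bm\Theta}_{i^\star}\right\Vert_F+\left\Vert\widehat{\bm\Theta}_{i^\star+1}\right\Vert_F
&\ge \left\Vert \bm C-\bm A\right\Vert_F \\
&\ge \left\Vert \bm B_{j_0+1}^\star-\bm B_{j_0}^\star\right\Vert_F-O_p\!\left(\sqrt{\tfrac{d_n^\star\log(p_xp_y\vee n)}{b_n}}\right) \\
&\ge \nu_n-O_p\!\left(\sqrt{\tfrac{d_n^\star\log(p_xp_y\vee n)}{b_n}}\right).
\end{align*}
By A4, $\nu_n=\Omega(\sqrt{d_n^\star\log(p_xp_y\vee n)/b_n})$ is of the same order as the error, so once the constant hidden in A4 is large enough this lower bound is eventually strictly positive; hence at least one of $\widehat{\bm\Theta}_{i^\star},\widehat{\bm\Theta}_{i^\star+1}$ is nonzero, i.e.\ one of $r_{i^\star-1},r_{i^\star}$ belongs to $\widehat{\mathcal A}_n$ and lies within $b_n$ of $t_{j_0}$, proving the first claim. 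Writing this selected end as $\widehat t_j=r_i$ (so that its block jump is $\widehat{\bm\Theta}_{i+1}$), a pigeonhole argument on the displayed sum gives $\|\widehat{\bm\Theta}_{i+1}\|_F\ge\frac12\|\bm C-\bm A\|_F\ge\frac12\nu_n-O_p(\sqrt{d_n^\star\log(p_xp_y\vee n)/b_n})$, which is the second claim.

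The main obstacle is the convergence step on the clean blocks adjacent to $t_{j_0}$: although the block ends $r_{i^\star-1}$ and $r_{i^\star}$ sit within $b_n$ of the true change point (so Lemma~\ref{lemma_lm_theta_1}, which assumes isolation from all true change points, cannot be quoted verbatim), the estimated coefficient is still constant on each neighboring block, and the perturbation/basic-inequality machinery of Lemma~\ref{lemma_lm_theta_1} must be re-run carefully on these one-block windows — in particular checking that resetting $\bm\Psi$ at the block ends does not disturb the fit on the contaminated block $[r_{i^\star-1},r_{i^\star})$ and that the fused penalty terms cancel as in \eqref{eq:lemma_1_1}--\eqref{eq:lemma_1_2}.
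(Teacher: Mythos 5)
Your proposal is correct, and it uses the same underlying machinery as the paper (the perturbation/basic-inequality argument of Lemma~\ref{lemma_lm_theta_1}, restricted eigenvalues, the deviation bound, and a triangle/pigeonhole step), but it organizes the proof differently. The paper proves the existence claim by contradiction: it assumes $t_{j_0}$ is isolated from \emph{all} estimated change points, takes the nearest estimated change points $\widehat{t}_j$ and $\widehat{t}_{j+1}$ on either side (which may be far away), and shows the single constant estimate $\widehat{\bm L}_{j+1}$ on the straddling interval converges to both $\bm B_{j_0}^\star$ and $\bm B_{j_0+1}^\star$, contradicting A4; it then handles the quantitative jump bound separately via a case analysis on whether one or two block ends fall in $(t_{j_0}-b_n,\,t_{j_0}+b_n)$. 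You instead argue directly and locally: you run the Lemma-\ref{lemma_lm_theta_1}-style perturbation on the two clean length-$b_n$ blocks immediately adjacent to the contaminated block $[r_{i^\star-1},r_{i^\star})$, and extract both conclusions from the single telescoping identity $\widehat{\bm\Theta}_{i^\star}+\widehat{\bm\Theta}_{i^\star+1}=\bm C-\bm A$. This buys a cleaner unified treatment (existence and the $\tfrac12\nu_n$ bound fall out of one display, and the various $\widehat{t}_{j-1}\gtrless s_{j_0-1}$ subcases disappear), at the cost of having to re-verify the basic inequality on one-block windows rather than between consecutive estimated change points — which you correctly flag as the main technical point, and which does go through since the estimate is constant on each block, the perturbation at the two block ends leaves all downstream partial sums (hence the fit on the contaminated block) unchanged, and a length-$b_n$ window satisfies the $|u-l|>a_n$ requirement in A1--A2. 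The only caveats, shared with the paper's own proof, are that positivity of $\nu_n-O_p(\sqrt{d_n^\star\log(p_xp_y\vee n)/b_n})$ requires the implicit constant in A4 to dominate the one in the $O_p$ term, and that one should note $i^\star$ is neither $1$ nor $k_n$ (guaranteed since $\Delta_n/b_n\to\infty$), so the two adjacent clean blocks exist.
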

\begin{proof}

\begin{figure}[ht!]
    \centering
\begin{tikzpicture}
\centering
    \FPeval{\x}{clip(20)}
  \begin{scope}[font=\footnotesize,x=\x pt]
  \foreach \mypt in {0,...,\bound}{
    \FPeval{\result}{clip(\bound+\mypt)}
    \draw(\mypt,0)circle[radius=2pt];
    \draw(-\mypt,0)circle[radius=2pt];
    \node[at={(\mypt,0)},label=below:{$r_{\result}$}]{};
    \ifnum \mypt <1
        \node[at={(\mypt-\bound,0)},label=below:{$r_0$}]{};
    \else
        \node[at={(\mypt-\bound,0)},label=below:{$r_{\mypt}$}]{};
    \fi
  }
  \draw [->] (-\bound-1,0)--(\bound+1,0) node[pos=0,left]{$t$};
  
  \FPeval{\pp}{clip(-1)}
  \fill[gray!50]( \pp * \x pt ,0) circle[radius=2pt];
  \node[at={( \pp * \x pt,0)},above=5pt]{$\widehat{t}_{j}$};
  \FPeval{\ppp}{clip(-4)}
  \fill[gray!50]( \ppp * \x pt ,0) circle[radius=2pt];
  \draw [dashed] ( \ppp * \x pt,-1) -- ( \ppp * \x pt,1);
  
  \FPeval{\ppp}{clip(6)}
  \fill[gray!50]( \ppp * \x pt ,0) circle[radius=2pt];
  \node[at={( \ppp * \x pt,0)},above=5pt]{$\widehat{t}_{j+1}$};
  \draw [<->] (\ppp * \x pt, -0.8)--(\pp * \x pt,-0.8) node[pos= 1/2,below]{$\widehat{\bm{L}}_{j+1}$};
  \draw [dashed, red, thick] ( \pp * \x pt,-1) -- ( \pp * \x pt,1);
  \draw [dashed, red, thick] ( \ppp * \x pt,-1) -- ( \ppp * \x pt,1);

  \FPeval{\pp}{clip(-6)}
  \FPeval{\xx}{clip( \pp * \x + 7 )}
  \FPeval{\dxx}{clip( (\pp + 1) * \x  )}
  \filldraw( \xx pt ,0)circle[radius=2pt];
  \node[at={(\xx pt,0)},above=5pt]{$t_{j_0-1}$};
  \draw [dashed] ( \dxx pt,-1) -- ( \dxx pt,1);

   \FPeval{\pp}{clip(2)}
   \draw [<->] (\dxx pt, 0.8)--(\pp * \x pt,0.8) node[pos= 1/2,above]{$\bm{B}_{j_0}^\star$};
  \FPeval{\xx}{clip( \pp * \x + 10 )}
  \FPeval{\dxx}{clip( \pp * \x  )}
   \filldraw( \xx pt ,0)circle[radius=2pt];
  \node[at={(\xx pt,0)},above=5pt]{$t_{j_0}$};
  \draw [dashed, red, thick] ( \dxx pt,-1) -- ( \dxx pt,1);
  \draw [dashed, red, thick] ( \dxx + \x pt,-1) -- ( \dxx +\x pt,1);
  \FPeval{\pp}{clip(8)}
  \draw [<->] (\dxx + \x pt, 0.8)--( \pp * \x pt ,0.8) node[pos=1/2,above]{$\bm{B}_{j_0+1}^\star$};
  \FPeval{\xx}{clip( \pp * \x + 11 )}
  \FPeval{\dxx}{clip( \pp * \x  )}
  \filldraw( \xx pt ,0)circle[radius=2pt];
  \node[at={(\xx pt,0)},above=5pt]{$t_{j_0+1}$};
  \draw [dashed] ( \dxx pt,-1) -- ( \dxx pt,1);
  
\end{scope}
\end{tikzpicture}
    \caption{Suppose there exists a true change point $t_{j_0}$ which is isolated from all the estimated points, i.e., $\min_{ 1\leq j \leq \widehat{m}}\vert \widehat{t}_j -t_{j_0}\vert \geq b_n $. 
Denote the closest estimated change point to the left side of $t_{j_0}$ by  $\widehat{t}_j$, denote the closest estimated change point  to the right side of $t_{j_0}$ by  $\widehat{t}_{j+1}$.
The idea is to show the estimated coefficient parameter  $\widehat{\bm{L}}_j$  in interval $[t_{j_0-1}\vee \widehat{t}_{j}  , t_{j_0+1} \wedge \widehat{t}_{j+1}]$ converges in $\ell_2$ to both $\bm{B}_{j_0}^\star$ and $\bm{B}_{j_0+1}^\star$, which contradicts Assumption A4. }
    \label{fig:interval_2}
\end{figure}
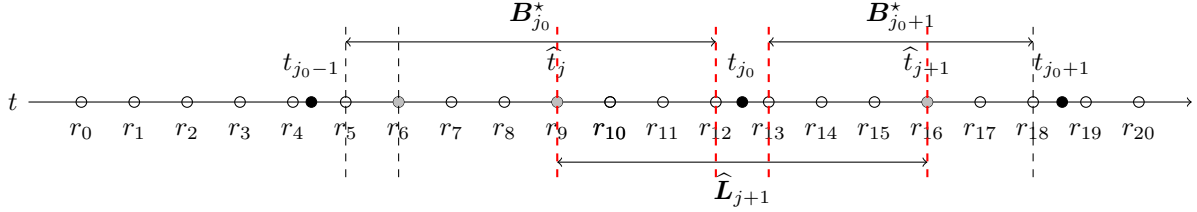

We first focus on the first part. Suppose there exists a true change point $t_{j_0}$ which is isolated from all the estimated points, i.e., $\min_{ 1\leq j \leq \widehat{m}}\vert \widehat{t}_j -t_{j_0}\vert \geq b_n $. 
Denote the closest estimated change point to the left side of $t_{j_0}$ by  $\widehat{t}_j$, denote the closest estimated change point  to the right side of $t_{j_0}$ by  $\widehat{t}_{j+1}$.
By assumption, these two consecutive estimated change points $\widehat{t}_j$ and $\widehat{t}_{j+1}$ are located far away from the true change point $t_{j_0}$,  i.e, $t_{j_0} -t_{j_0-1} \vee \widehat{t}_j \geq b_n$ and $t_{j_0+1}\wedge \widehat{t}_{j+1} -t_{j_0}  \geq b_n$. 
The idea is to show the estimated coefficient parameter  $\widehat{\bm{L}}_{j+1}$  in interval $[t_{j_0-1}\vee \widehat{t}_{j}  , t_{j_0+1} \wedge \widehat{t}_{j+1}]$ converges in $\ell_2$ to both $\bm{B}_{j_0}^\star$ and $\bm{B}_{j_0+1}^\star$, which contradicts Assumption A3.     



Denote the closest $r_i$ to the right side of $t_{j_0-1}$ by  $s_{j_0 -1}$, denote the closest $r_i$ to the left side of $t_{j_0}$ by  $s_{j_0}$, and denote the closest $r_i$ to the right side of $t_{j_0}$ by  $s'_{j_0}$. Note that $s_{j_0 -1}-t_{j_0-1} < b_n $ , $t_{j_0 }-s_{j_0} < b_n $ and $s'_{j_0} -t_{j_0 }  < b_n $. If $\vert t_{j_0} - \widehat{t}_j  \vert = b_n$, then $t_{j_0} = s_{j_0} $ and $\vert s_{j_0} - \widehat{t}_j \vert = b_n$; similarly, if $\vert \widehat{t}_{j+1} - t_{j_0} \vert = b_n$, then $t_{j_0} = s'_{j_0} $ and $\vert \widehat{t}_{j+1} - s'_{j_0} \vert = b_n$. If $t_{j_0} - \widehat{t}_j > b_n$ and $\widehat{t}_{j+1} -t_{j_0} > b_n$, then $\vert t_{j_0} - \widehat{t}_j  \vert \geq 2b_n$ and $\vert\widehat{t}_{j+1} -t_{j_0} \vert \geq 2b_n$, which means that $\vert s_{j_0} - \widehat{t}_j \vert \geq b_n$ and $\vert \widehat{t}_{j+1} -s'_{j_0}\vert \geq b_n$. 
Therefore, we have $s_{j_0} - s_{j_0-1} \vee \widehat{t}_{j} \geq  \min ( s_{j_0} - s_{j_0-1}, s_{j_0} - \widehat{t}_{j}) \geq b_n$ and $s_{j_0+1} \wedge \widehat{t}_{j+1}- s'_{j_0}
 \geq  \min (s_{j_0+1} - s'_{j_0}, \widehat{t}_{j+1} - s'_{j_0} ) \geq b_n$. In other words, the lengths of the intervals $[s_{j_0-1}\vee \widehat{t}_{j}  , s_{j_0}]$ and $[s'_{j_0} , s_{j_0+1} \wedge \widehat{t}_{j+1}]$  are large enough to verify restricted eigenvalue and deviation bound inequalities.  

First, we focus on the interval $[s_{j_0-1}\vee \widehat{t}_{j}  , s_{j_0}]$. Define a new parameter sequence $\bm{\Psi}_k$'s, $k= 1,\dots, k_n$ with $\bm{\Psi}_k = \widehat{\bm{\Theta}}_k$ except for two time points $k=s_{j_0-1}\vee \widehat{t}_{j}$ and $k=s_{j_0}$. If $\widehat{t}_{j} > s_{j_0-1} $, set $\bm{\Psi}_{\widehat{t}_j} = \bm{B}_{j_0}^\star - \widehat{\bm{L}}_{j} $ and $\bm{\Psi}_{s_{j_0}} = \widehat{\bm{L}}_{j+1} -\bm{B}_{j_0}^\star $ ;
if $\widehat{t}_{j} \leq s_{j_0-1} $, set $\bm{\Psi}_{s_{j_0-1}} = \bm{B}_{j_0}^\star - \widehat{\bm{L}}_{j+1} $ and $\bm{\Psi}_{\widehat{t}_j} = \widehat{\bm{L}}_{j+1} -\bm{B}_{j_0}^\star $, 
where $\widehat{\bm{L}}_{j+1} = \sum_{k=1}^{\widehat{i}_{j}}\widehat{\bm{\Theta}}_k$,  $\widehat{i}_j$ is the  corresponding index of candidate point $\widehat{t}_j$ .

By the definition of $\widehat{\bm{\theta}}$ in \eqref{eq:estimation_block}, the value of the function in \eqref{eq:estimation_block} is minimized at $\widehat{\bm{\theta}}$.
Denoting $\bm{\psi} = \text{vec}(\bm{\Psi}_1, \dots, \bm{\Psi}_{k_n}) \in \mathbb{R}^{\pi_n \times 1}$, where $\pi_n = k_np_xp_y$, we have
{\footnotesize\begin{equation} 
\frac{1}{n}\left\Vert \mathbf{y} - Z\widehat{\bm{\theta}}\right\Vert_{2}^{2} + \lambda_{1,n}\left\Vert\widehat{\bm{\theta}}\right\Vert_1 + \lambda_{2,n}\sum_{k=1}^{k_n}\left\Vert\sum_{j=1}^k\widehat{\bm{\Theta}}_j\right\Vert_1 \leq 
\frac{1}{n}\left\Vert \mathbf{y} - Z\bm{\psi}\right\Vert_{2}^{2} + \lambda_{1,n}\left\Vert\bm{\psi}\right\Vert_1 + \lambda_{2,n}\sum_{k=1}^{k_n}\left\Vert\sum_{j=1}^k\bm{\Psi}_j\right\Vert_1
\label{eq:obj_var_2}
\end{equation}}

When $\widehat{t}_{j} > s_{j_0-1} $, some rearrangement of equation \eqref{eq:obj_var_2} leads to

{\footnotesize\begin{align}
0 
&\leq \frac{1}{s_{j_0}- \widehat{t}_{j}} \sum_{l= \widehat{t}_{j} }^{s_{j_0}- 1} \bm{x}^{\prime}_{l}\left(\bm{B}_{j_0}^\star - \widehat{\bm{L}}_{j+1} \right)^\prime\left(\bm{B}_{j_0}^\star - \widehat{\bm{L}}_{j+1}\right)\bm{x}_{l}\nonumber\\
&\leq \frac{2}{s_{j_0}-  \widehat{t}_{j}} \sum_{l= \widehat{t}_{j} }^{s_{j_0}- 1} \bm{x}^{\prime}_{l}\left(\bm{B}_{j_0}^\star - \widehat{\bm{L}}_{j+1}\right)^\prime\bm{\varepsilon}_l\nonumber\\
& + \frac{n\lambda_{1,n}}{s_{j_0}-  \widehat{t}_{j}} 
\left(\left\Vert \bm{B}_{j_0}^\star - \widehat{\bm{L}}_{j+1}\right\Vert_1 + \left\Vert \bm{B}_{j_0}^\star - \widehat{\bm{L}}_{j}\right\Vert_1 -\left\Vert \widehat{\bm{L}}_{j+1}-\widehat{\bm{L}}_{j}\right\Vert_1  \right) 
\nonumber\\
& +  \frac{n\lambda_{2,n}}{b_n}\left(\left\Vert \bm{B}_{j_0}^\star \right\Vert_1 -\left\Vert \widehat{\bm{L}}_{j+1} \right\Vert_1 \right) \nonumber\\
& \leq  \frac{2}{s_{j_0}- \widehat{t}_{j+1}} 
\left\Vert \sum_{l = \widehat{t}_{j} }^{s_{j_0}- 1} \bm{x}_{l} \bm{\varepsilon}_l^\prime  \right\Vert_\infty \left\Vert \bm{B}_{j_0}^\star - \widehat{\bm{L}}_{j+1}\right\Vert_1
 + \frac{2n\lambda_{1,n}}{s_{j_0}- \widehat{t}_{j}} 
\left\Vert \bm{B}_{j_0}^\star - \widehat{\bm{L}}_{j+1}\right\Vert_1
 + \frac{n \lambda_{2,n}}{b_n}\left(\left\Vert \bm{B}_{j_0}^\star \right\Vert_1 -\left\Vert \widehat{\bm{L}}_{j+1} \right\Vert_1 \right ) \nonumber\\
& \leq \left(\frac{2n\lambda_{1,n}}{s_{j_0}- \widehat{t}_{j}} + 
 C \sqrt{\frac{{\log (p_x p_y \vee n) }}{b_n}}\right)
\left\Vert \bm{B}_{j_0}^\star - \widehat{\bm{L}}_{j}\right\Vert_1
 +  \frac{n\lambda_{2,n}}{b_n}\left(\left\Vert \bm{B}_{j_0}^\star \right\Vert_1 -\left\Vert \widehat{\bm{L}}_{j+1} \right\Vert_1 \right ) \nonumber\\
 & \leq  \frac{n\lambda_{2,n}}{2b_n}
\left\Vert \bm{B}_{j_0}^\star - \widehat{\bm{L}}_{j+1}\right\Vert_1
 + \frac{n \lambda_{2,n}}{b_n}\left(\left\Vert \bm{B}_{j_0}^\star \right\Vert_1 -\left\Vert \widehat{\bm{L}}_{j+1} \right\Vert_1 \right) \nonumber\\
 & \leq  \frac{n\lambda_{2,n}}{2b_n}
\left\Vert \bm{B}_{j_0}^\star - \widehat{\bm{L}}_{j+1}\right\Vert_{1,\mathcal{I}} +  \frac{n\lambda_{2,n}}{2b_n}
\left\Vert \bm{B}_{j_0}^\star - \widehat{\bm{L}}_{j+1}\right\Vert_{1,\mathcal{I}^c}
 +\frac{ n \lambda_{2,n}}{b_n}\left( \left\Vert \bm{B}_{j_0}^\star - \widehat{\bm{L}}_{j+1}\right\Vert_{1,\mathcal{I}} -
 \left\Vert \bm{B}_{j_0}^\star - \widehat{\bm{L}}_{j+1}\right\Vert_{1,\mathcal{I}^c}\right) \nonumber\\
 & \leq \frac{3n \lambda_{2,n}}{2b_n}
\left\Vert \bm{B}_{j_0}^\star - \widehat{\bm{L}}_{j+1}\right\Vert_{1,\mathcal{I}}
 - \frac{n\lambda_{2,n}}{2b_n}
\left\Vert \bm{B}_{j_0}^\star - \widehat{\bm{L}}_{j+1}\right\Vert_{1,\mathcal{I}^c}.
  \label{eq:lemma_2_1}
\end{align}}

When $\widehat{t}_{j} \leq s_{j_0-1} $, some rearrangement of equation \eqref{eq:obj_var_2} leads to
{\footnotesize
\begin{align}
0
&\leq \frac{1}{s_{j_0}- s_{j_0-1}} \sum_{l= s_{j_0-1} }^{s_{j_0}- 1} \bm{x}^{\prime}_{l}\left(\bm{B}_{j_0}^\star - \widehat{\bm{L}}_{j+1} \right)^\prime\left(\bm{B}_{j_0}^\star - \widehat{\bm{L}}_{j+1}\right)\bm{x}_{l}\nonumber\\
&\leq \frac{2}{s_{j_0}- s_{j_0-1}} \sum_{l= s_{j_0-1} }^{s_{j_0}- 1}\bm{x}^{\prime}_{l}\left(\bm{B}_{j_0}^\star - \widehat{\bm{L}}_{j+1}\right)^\prime\bm{\varepsilon}_l +
\frac{2n\lambda_{1,n}}{s_{j_0}- s_{j_0-1}} \left\Vert \bm{B}_{j_0}^\star - \widehat{\bm{L}}_{j+1}\right\Vert_1    
 +  \frac{n\lambda_{2,n}}{b_n}\left(\left\Vert \bm{B}_{j_0}^\star \right\Vert_1 -\left\Vert \widehat{\bm{L}}_{j+1} \right\Vert_1 \right) \nonumber\\
&\leq \left(\frac{2n\lambda_{1,n}}{s_{j_0}- s_{j_0-1}} + 
 C \sqrt{\frac{{\log (p_x p_y \vee n) }}{b_n}}\right)
\left\Vert \bm{B}_{j_0}^\star - \widehat{\bm{L}}_{j+1}\right\Vert_1
 + \frac{n\lambda_{2,n}}{b_n}\left(\left\Vert \bm{B}_{j_0}^\star \right\Vert_1 -\left\Vert \widehat{\bm{L}}_{j+1} \right\Vert_1 \right) \nonumber\\
 & \leq \frac{3n\lambda_{2,n}}{2b_n}
\left\Vert \bm{B}_{j_0}^\star - \widehat{\bm{L}}_{j+1}\right\Vert_{1,\mathcal{I}}
 -\frac{n\lambda_{2,n}}{2b_n}
\left\Vert \bm{B}_{j_0}^\star - \widehat{\bm{L}}_{j+1}\right\Vert_{1,\mathcal{I}^c}
\label{eq:lemma_2_2}
\end{align}}

Based on \eqref{eq:lemma_2_1} and \eqref{eq:lemma_2_2}, we have 
{\footnotesize\begin{align}  
\left\Vert \bm{B}_{j_0}^\star - \widehat{\bm{L}}_{j+1}\right\Vert_{1}  \leq 4
\left\Vert \bm{B}_{j_0}^\star - \widehat{\bm{L}}_{j+1}\right\Vert_{1,\mathcal{I}}\leq 4
\sqrt{d^\star_n}\left\Vert \bm{B}_{j_0}^\star - \widehat{\bm{L}}_{j+1}\right\Vert_{F},\nonumber
\end{align} }
which leads to
{\footnotesize\begin{align}\label{ineq_l1_l2_2} 
\left\Vert \bm{B}_{j_0}^\star - \widehat{\bm{L}}_{j+1}\right\Vert_{1}^2  \leq 16
d^\star_n\left\Vert \bm{B}_{j_0}^\star - \widehat{\bm{L}}_{j+1}\right\Vert_{F}^2. 
\end{align} }

Combine \eqref{ineq_l1_l2_2} with  the restricted eigenvalue condition in \eqref{eq:RE_bound} and the fact that {$\ d^\star_n {\log (p_x  p_y \vee n)}/{b_n} \rightarrow 0$}, 
{there exist constants $\alpha, \tau >0$ such that   }
{\footnotesize
\begin{equation} 0 \leq
{\alpha \left\Vert
\bm{B}_{j_0}^\star- \widehat{\bm{L}}_{j+1}\right\Vert_F^2 - \tau \left\Vert
\bm{B}_{j_0}^\star- \widehat{\bm{L}}_{j+1}\right\Vert_1^2 }
\leq  \frac{1}{s_{j_0} -s_{j_0-1}\vee \widehat{t}_{j}  } \sum_{l= s_{j_0-1}\vee \widehat{t}_{j}   }^{s_{j_0} - 1} \bm{x}^{\prime}_{l}\left(\bm{B}_{j_0}^\star - \widehat{\bm{L}}_{j+1}\right)^{\prime}\left(\bm{B}_{j_0}^\star - \widehat{\bm{L}}_{j+1}\right)\bm{x}_{l}.
\label{eq:lemma_2}
\end{equation} }

Combine \eqref{eq:lemma_2_1}, \eqref{eq:lemma_2_2}, \eqref{eq:lemma_2}  with the selection of $\lambda_{2,n}$, 
{there exist a constant $c >0$ such that } 
{\footnotesize\begin{align}  c \left\Vert \bm{B}_{j_0}^\star - \widehat{\bm{L}}_{j+1}\right\Vert_F^2  
&\leq  {\frac{2n\lambda_{2,n}}{b_n}}
\left\Vert \bm{B}_{j_0}^\star - \widehat{\bm{L}}_{j+1}\right\Vert_{1,\mathcal{I}}\nonumber\\
&\leq  {\frac{2n\lambda_{2,n}}{b_n}}
\left\Vert \bm{B}_{j_0}^\star - \widehat{\bm{L}}_{j+1}\right\Vert_{1}\nonumber\\
&\leq  {\frac{8n\lambda_{2,n}}{b_n}}
\sqrt{d^\star_n}\left\Vert \bm{B}_{j_0}^\star - \widehat{\bm{L}}_{j+1}\right\Vert_{F}\nonumber
\end{align} }
which leads to
{\footnotesize\begin{align} 
\left\Vert \bm{B}_{j_0}^\star - \widehat{\bm{L}}_{j+1}\right\Vert_F^2  
\leq  {\frac{64n^2\lambda_{2,n}^2}{c^2b_n^2}}
d^\star_n={ \frac{64C_2^2}{c^2}} \frac{d^\star_n{\log\left( p_x  p_y \vee n\right)} }{b_n}\nonumber
\end{align} }

This implies that
{\footnotesize\begin{equation}
\Vert \bm{B}_{j_0}^\star - \widehat{\bm{L}}_{j+1}\Vert_F = { O_p \Bigg(\sqrt{\frac{d^\star_n\log( p_x p_y \vee n)}{b_n}}\Bigg) },
\label{eq:small_o_3_var}
\end{equation}}
which means that $\Vert \bm{B}_{j_0}^\star - \widehat{\bm{L}}_{j+1}\Vert_F$ converges to zero in probability based on Assumption A4.

Similarly, same procedure can be applied to the interval $[s'_{j_0} , s_{j_0+1} \wedge \widehat{t}_{j+1}]$ which lead to 
\begin{equation}
\Vert \bm{B}_{j_0+1}^\star - \widehat{\bm{L}}_{j+1}\Vert_F =  {O_p \Bigg(\sqrt{\frac{d^\star_n\log (p_x  p_y\vee n)}{b_n}}\Bigg)},
\label{eq:small_o_4_var}
\end{equation}
which means that $\left \Vert \bm{B}_{j_0+1}^\star - \widehat{\bm{L}}_{j+1} \right \Vert_F$ converges to zero in probability based on Assumption A4.

The results in \eqref{eq:small_o_3_var} and \eqref{eq:small_o_4_var}  yield a contradiction to the Assumption A4, and therefore, completes the first part of the proof.

The second part can be proved as follows.
On the one hand, based on the first part of the proof,  for any true change point $t_{j_0} \in \mathcal{A}_n$, there exists at least one estimated change point $\widehat{t_j} \in \widehat{\mathcal{A}}_n$ such that $\left\vert \widehat{t}_j - t_{j_0} \right\vert < b_n$. 
On the other hand, 
for any true change point $t_{j_0} \in \mathcal{A}_n$, 
there exists at most two estimated change point $\widehat{t_j} \in \widehat{\mathcal{A}}_n$ such that $\left\vert \widehat{t}_j - t_{j_0} \right\vert < b_n$. 

Denote the closest $r_i$ to the left side of $t_{j_0-1}$ by  $s_{j_0-1}$, Denote the closest $r_i$ to the left side of $t_{j_0}$ by  $s_{j_0}$, denote the closest $r_i$ to the right side of $t_{j_0}$ by  $s'_{j_0}$. Here, we consider two different cases: (a) there is only one block time point $s_{j_0}$ within the interval $(t_{j_0}- b_n,t_{j_0}+b_n)$; (b) there is two block time point $s_{j_0}$ and $s'_{j_0}$  within the interval $(t_{j_0}- b_n,t_{j_0}+b_n)$.

\textit{Case (a)}. If $t_{j_0} = s_{j_0} =s'_{j_0}$, then there is only one block time point $s_{j_0}$ within the interval $(t_{j_0}- b_n,t_{j_0}+b_n)$. Denote this $s_{j_0}$ by $\widehat{t}_j$. By the first part of the proof, we have $\widehat{t}_{j} \in \widehat{\mathcal{A}}_n$. 

We consider the previous consecutive estimated change point $\widehat{t}_{j-1} \in \widehat{\mathcal{A}}_n$. If $t_{j_0-1} + b_n \leq \widehat{t}_{j-1} \leq  t_{j_0} - b_n $ , by Lemma \ref{lemma_lm_theta_1}, 
$\left\Vert \bm{B}_{j_0}^\star - \widehat{\bm{L}}_{j}\right\Vert_F = { O_p \left(\sqrt{\frac{d^\star_n\log (p_x  p_y \vee n)}{b_n}}\right)}$;  
if  $t_{j_0-1} - b_n < \widehat{t}_{j-1} < t_{j_0-1} + b_n $, 
then $\widehat{t}_j \wedge s_{j_0}- s_{j_0-1} \vee \widehat{t}_{j-1} \geq  b_n$. 
Similar procedure as in Lemma \ref{lemma_lm_theta_1} can be applied to $\left[s_{j_0-1} \vee \widehat{t}_{j-1},\widehat{t}_j \wedge s_{j_0}\right]$ which lead to $\left\Vert \bm{B}_{j_0}^\star - \widehat{\bm{L}}_{j}\right\Vert_F =  {O_p \left(\sqrt{\frac{d^\star_n\log( p_x  p_y \vee n)}{b_n}}\right).}$    

Similarly, consider the following estimated change point $\widehat{t}_{j+1} \in \widehat{\mathcal{A}}_n$. If $t_{j_0} + b_n \leq \widehat{t}_{j+1} \leq  t_{j_0+1} - b_n $ , by Lemma \ref{lemma_lm_theta_1}, 
$\left\Vert \bm{B}_{j_0+1}^\star - \widehat{\bm{L}}_{j+1}\right\Vert_F =  {O_p \left( \sqrt{\frac{d^\star_n\log (p_x  p_y \vee n )}{b_n}}\right)}$. If not, then $t_{j_0+1} - b_n < \widehat{t}_{j+1} < t_{j_0+1} + b_n $. Similar procedure as in Lemma \ref{lemma_lm_theta_1}  can be applied to $\left[\widehat{t}_j \vee t_{j_0}, t_{j_0+1}\wedge \widehat{t}_{j+1}\right]$ which lead to $\left\Vert \bm{B}_{j_0+1}^\star - \widehat{\bm{L}}_{j+1}\right\Vert_F =  {O_p \left(\sqrt{\frac{d^\star_n\log (p_x p_y \vee n)}{b_n}}\right).}$     

By triangular inequality, we have
{\footnotesize
\begin{align} 
&\left\lVert \widehat{\bm{\Theta}}_{i+1} \right\rVert_F =\left\lVert \widehat{\bm{L}}_{j+1}-\widehat{\bm{L}}_{j} \right\rVert_F 
=\left\lVert \widehat{\bm{L}}_{j+1}-\bm{B}_{j_0+1}^\star+\bm{B}_{j_0+1}^\star-\bm{B}_{j_0}^\star+\bm{B}_{j_0}^\star-\widehat{\bm{L}}_{j} \right\rVert_F  \nonumber\\
\geq &\left\lVert \bm{B}_{j_0+1}^\star-\bm{B}_{j_0}^\star\right\rVert_F - \left\lVert\widehat{\bm{L}}_{j+1}-\bm{B}_{j_0+1}^\star+\bm{B}_{j_0}^\star-\widehat{\bm{L}}_{j} \right\rVert_F  \nonumber\\
\geq &\left\lVert \bm{B}_{j_0+1}^\star-\bm{B}_{j_0}^\star\right\rVert_F - \left(\left\lVert \widehat{\bm{L}}_{j+1} - \bm{B}_{j_0+1}^\star \right\rVert_F +\left\lVert \widehat{\bm{L}}_{j} -\bm{B}_{j_0}^\star\right\rVert_F \right) \nonumber\\
\geq & \nu_n -{ O_p \left(\sqrt{\frac{d^\star_n\log (p_x  p_y \vee n)}{b_n}}\right)},
\end{align} }
where $\widehat{t}_j = r_i$.

\textit{Case (b)}. If $s_{j_0} < t_{j_0} < s'_{j_0}$, then there is two block time point $s_{j_0}$ and $s'_{j_0}$  within the interval $(t_{j_0}- b_n,t_{j_0}+b_n)$. By the first part of the lemma, at least one block time point is in  $\widehat{\mathcal{A}}_n$. If there is only one point $\widehat{t}_{j} \in \widehat{\mathcal{A}}_n$ within the interval, we have the same result as with the case (a). If there are two estimated change points within the interval, which are denoted by $\widehat{t}_{j}$ and $\widehat{t}_{j-1}$ , then by triangular inequality and the first part of the proof, we have, 
{\footnotesize
\begin{align} 
&\text{max}\left(\left \lVert \widehat{\bm{\Theta}}_{i+1} \right\rVert_F, \left\lVert \widehat{\bm{\Theta}}_{i} \right\rVert_F \right) \geq  \left\lVert \frac{1}{2} \left(\widehat{\bm{L}}_{j+1}-\widehat{\bm{L}}_{j-1} \right) \right\rVert_F =\frac{1}{2}\left\lVert \widehat{\bm{L}}_{j+1}-\bm{B}_{j_0+1}^\star+\bm{B}_{j_0+1}^\star-\bm{B}_{j_0}^\star+\bm{B}_{j_0}^\star-\widehat{\bm{L}}_{j-1} \right\rVert_F  \nonumber\\
\geq &\frac{1}{2}\left (\left\lVert \bm{B}_{j_0+1}^\star-\bm{B}_{j_0}^\star\right\rVert_F - \left\lVert\widehat{\bm{L}}_{j+1}-\bm{B}_{j_0+1}^\star+\bm{B}_{j_0}^\star-\widehat{\bm{L}}_{j-1} \right\rVert_F \right)  \nonumber\\
\geq &\frac{1}{2}\left (\left\lVert \bm{B}_{j_0+1}^\star-\bm{B}_{j_0}^\star\right\rVert_F - \left(\left\lVert \widehat{\bm{L}}_{j+1} - \bm{B}_{j_0+1}^\star \right\rVert_F +\left\lVert \widehat{\bm{L}}_{j-1} -\bm{B}_{j_0}^\star\right\rVert_F \right)\right) \nonumber\\
\geq & \frac{\nu_n}{2} - { O_p \left(\sqrt{\frac{d^\star_n\log (p_x  p_y \vee n)}{b_n}}\right)},
\end{align}}
where $\widehat{t}_j = r_i$ and $\widehat{t}_{j-1} = r_{i-1}$.

\end{proof}

\section{Proof of Main Results}\label{sec:main_proof}
\setcounter{equation}{0}
\begin{proof}[Proof of Theorem~\ref{thm_selection_block}]
From Lemma \ref{lemma_lm_theta_1}, we know that any points $\widehat{t}_j \in \widehat{\mathcal{A}}_n$ isolated from all true change points will have small jump. In other words, suppose $\widehat{t}_j = r_i$ for some $i \in \{1, \dots, k_n - 1\}$  and  $\min_{j'= 1,\dots,m_0}\left\vert \widehat{t}_j - t_{j'} \right\vert \geq b_n$, we have  $\left\lVert \widehat{\bm{\Theta}}_{i+1}\right\rVert_F = { O_p \left(\sqrt{\frac{d^\star_n\log (p_x  p_y \vee n)}{b_n}}\right)}$. By the definition of $\widetilde{\mathcal{A}}_n$, for any $\widetilde{t}_{j}$ in $\widetilde{\mathcal{A}}_n$, there exist a true change point  $t_{j_0}$ in $\mathcal{A}_n$ such that  $\left\vert \widehat{t}_j - t_{j_0} \right\vert < b_n$.

On the other hand, from Lemma \ref{lemma_lm_theta_2}, we know that
for any $t_{j_0}$ in $\mathcal{A}_n$, there exist an $\widehat{t}_j$ in $\widehat{\mathcal{A}}_n$ such that $\vert \widehat{t}_j - t_{j_0} \vert < b_n$ and $\widehat{t}_j = r_i$ for some $i \in \{1,\dots, k_n - 1\}$, which satisfies 
$\left\lVert \widehat{\bm{\Theta}}_{i+1}\right\rVert_F \geq \frac{1}{2}\nu_n - { O_p \left(\sqrt{\frac{d^\star_n\log (p_x  p_y \vee n)}{b_n}}\right)}. $
Again, by the definition of $\widetilde{\mathcal{A}}_n$, 
for any $t_{j_0}$ in $\mathcal{A}_n$, there exist an $\widetilde{t}_j$ in $\widetilde{\mathcal{A}}_n$ such that $\left \vert \widetilde{t}_j - t_{j_0} \right\vert < b_n$.
Therefore, we have
\[\mathbb{P} \left(d_H \left((\widetilde{\mathcal{A}}_n, \mathcal{A}_n \right) < b_n \right) \to 1. \]
This completes the proof of the first part.

Based on the Lemma \ref{lemma_lm_theta_1} and Lemma \ref{lemma_lm_theta_2}, all the points in $\widetilde{\mathcal{A}}_n$ are very close to a true change points, i.e., in the $b_n$-neighborhood of a true change point. By Lemma \ref{lemma_lm_theta_2}, for any true change point, say $t_{j_0}$, there are at least one estimated change point in $\widetilde{\mathcal{A}}_n$  in the $b_n$-neighborhood of $t_{j_0}$. On the other hand, by the setting of the block fused lasso model, there are at most two estimated points in the $b_n$-neighborhood of $t_{j_0}$. This proves that $\mathbb{P}\left (m_0 \leq \left \vert \widetilde{\mathcal{A}}_n  \right\vert \leq 2m_0 \right) \rightarrow 1.$ 

Denote all the selected change points in $\widetilde{\mathcal{A}}_n$  in the $b_n$-neighborhood of $t_{j}$ by $R_{j}$, $j = 1, \dots, \widetilde{m}^f $. Note that the diameter of each $R_j$ is at most $b_n$. Based on the Assumption A3, all $R_j$'s are disjoint. Therefore, the collection of all $R_j$'s  form $\mbox{cluster}(\widetilde{\mathcal{A}}_n)$ which has cardinality equal to $m_0$. This proves that $\mathbb{P} ( \widetilde{m}^f  = m_0) \rightarrow 1$ and completes the proof of the last part.  
\end{proof}

\begin{proof}[Proof of Theorem~\ref{thm:final_consistency_rate}]

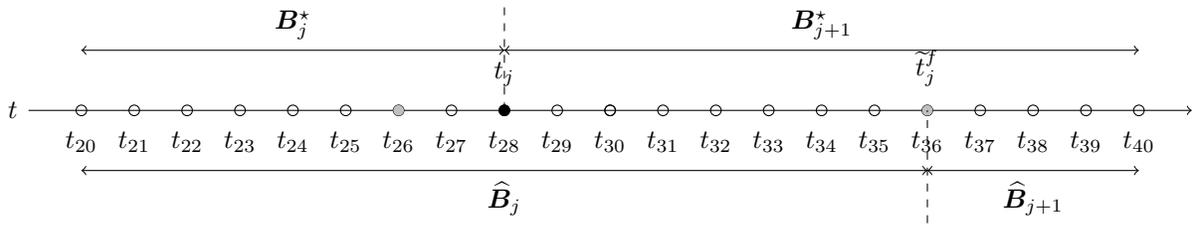
\begin{figure}[ht!]
    \centering
\begin{tikzpicture}
\centering
    \FPeval{\x}{clip(20)}
  \begin{scope}[font=\footnotesize,x=\x pt]
  \foreach \mypt in {0,...,\bound}{
    \FPeval{\result}{clip(\bound+\mypt + 20)}
    \draw(\mypt,0)circle[radius=2pt];
    \draw(-\mypt,0)circle[radius=2pt];
    \ifnum \mypt >0
        \node[at={(\mypt,0)},label=below:{$t_{\result}$}]{};
    \fi
    \FPeval{\result}{clip(\mypt + 20)}
    \node[at={(\mypt-\bound,0)},label=below:{$t_{\result}$}]{};
  }
  \draw [->] (-\bound-1,0)--(\bound+1,0) node[pos=0,left]{$t$};
  
  \FPeval{\pp}{clip(6)}
  \fill[gray!50]( \pp * \x pt ,0) circle[radius=2pt];
  \node[at={( \pp * \x pt,0)},above=5pt]{$\widetilde{t}^{f}_j$};
  \FPeval{\ppp}{clip(-4)}
  \fill[gray!50]( \ppp * \x pt ,0) circle[radius=2pt];
  
  \FPeval{\ppp}{clip(10)}
  \draw [<->] (\ppp * \x pt, -0.8)--(\pp * \x pt,-0.8) node[pos= 1/2,below]{$\widehat{\bm{B}}_{j+1}$};
  \draw [dashed] ( \pp * \x pt,-1.5) -- ( \pp * \x pt,0);
  \FPeval{\ppp}{clip(-10)}
\draw [<->] (\ppp * \x  pt, -0.8)--(\pp * \x pt,-0.8) node[pos= 1/2,below]{$\widehat{\bm{B}}_{j}$};

  \FPeval{\pp}{clip(-10)}
  \FPeval{\xx}{clip( \pp * \x  )}
  \FPeval{\dxx}{clip( (\pp) * \x  )}
 
   \FPeval{\pp}{clip(-2)}
   \draw [<->] (\dxx pt, 0.8)--(\pp * \x pt,0.8) node[pos= 1/2,above]{$\bm{B}_{j}^\star$};
  \FPeval{\xx}{clip( \pp * \x  )}
  \FPeval{\dxx}{clip( \pp * \x  )}
   \filldraw( \xx pt ,0)circle[radius=2pt];
  \node[at={(\xx pt,0)},above=5pt]{$t_{j}$};
  \draw [dashed] ( \dxx pt,0) -- ( \dxx pt,1.5);

  \FPeval{\pp}{clip(10)}
  \draw [<->] (\dxx  pt, 0.8)--( \pp * \x pt ,0.8) node[pos=1/2,above]{$\bm{B}_{j+1}^\star$};
  \FPeval{\xx}{clip( \pp * \x + 11 )}
  \FPeval{\dxx}{clip( \pp * \x  )}

\end{scope}
\end{tikzpicture}
    \caption{Here, we assume that $\widetilde{t}^{f}_j > t_j$, $b_n = 10$, $R_{j} = \{31\}$, the search domain is $(\min (R_{j})- b_n, \max (R_{j}) + b_n)= (21, 41)$. }
    \label{fig:interval_3}
\end{figure}

Suppose for any constant $K>0$, there exist some change point $t_j \in \mathcal{A}_n$ such that $ \left|\widetilde{t}^{f}_j - t_j \right| >{\frac{K {d_n^\star} \log (p_x  p_y \vee n)}{\nu_n^2}.}$
From Theorem \ref{thm_selection_block}, we know that for any estimated  change point $\widetilde{t}_{j'} \in \widetilde{\mathcal{A}}_n $, there exists a true change point $t_j$ lies in $(\widetilde{t}_{j'} - b_n, \widetilde{t}_{j'} + b_n )$. Therefore, there exists a true change point $t_j$ lies within interval $[\min (R_j) - b_n, \max (R_j) + b_n )$.  Here, without loss of generality,  we assume that $\widetilde{t}^{f}_j > t_j$. 

Based on exhaustive search algorithm, we  define the loss function $L_n(\widetilde{t}^{f}_j)$ as follow:
{\footnotesize\[
 \sum_{t= \min(R_j)-b_n}^{\widetilde{t}^{f}_j-1}\left\| \bm{y}_{t}  - \widehat{\bm{B}}_{j}\bm{x}_{t} \right\|_2^2 +  \sum_{t= \widetilde{t}^{f}_j}^{\max (R_j)+b_n-1} \left\| \bm{y}_{t}  - \widehat{\bm{B}}_{j+1}\bm{x}_{t}\right \|_2^2   \overset{\text{def}}{=} I_1 + I_2,
\]}
and the loss function $L_n(t_j)$ of true change point $t_j$  as follow:
{\footnotesize\[\sum_{t= \min (R_j) -b_n}^{t_j-1}\| \bm{y}_{t}  - \widehat{\bm{B}}_{j}\bm{x}_{t} \|_2^2 +  \sum_{t= t_j}^{\max (R_j)+b_n-1} \| \bm{y}_{t}  - \widehat{\bm{B}}_{j+1}\bm{x}_{t} \|_2^2 \overset{\text{def}}{=} I_3 + I_4.\]}

 where  $\widetilde{t}^{f}_j \in (l_j, u_j)$, $l_j = (R_j-b_n) \mathbbm{1}_{\{\vert R_j\vert = 1\}} + \min (R_j) \mathbbm{1}_{\{\vert R_j\vert > 1\}}$ and $u_j = (R_j+b_n) \mathbbm{1}_{\{\vert R_j\vert = 1\}} + \max (R_j) \mathbbm{1}_{\{\vert R_j\vert > 1\}}$;
$ {\widehat{\bm{B}}}_{j} = \sum_{k=1}^{{\lfloor\frac{1}{2}\left(\max(J_{j-1})+\min(J_{j})\right)\rfloor}} \widehat{\bm{\Theta}}_k,  \text{ for } j = 1, \dots, m_0 + 1.$ $\widehat{\bm{B}}_{j}$ are the local coefficient parameter estimates for the $j$-th segment, derived from the threshold block fused lasso step.


{\footnotesize
\begin{align}
        I_1
        &= \sum_{t=\min (R_j)-b_n}^{t_j-1}\left\|\bm{y}_{t}  - \widehat{\bm{B}}_{j}\bm{x}_{t}\right\|_2^2 + \sum_{t=t_j}^{\widetilde{t}^{f}_j-1}\left\|\bm{y}_{t}  - \widehat{\bm{B}}_{j}\bm{x}_{t}\right\|_2^2 \nonumber \\  &\geq  I_3 
        + \sum_{t=t_j}^{\widetilde{t}^{f}_j-1}\|\bm{\varepsilon}_t\|_2^2
        + \sum_{t=t_j}^{\widetilde{t}^{f}_j -1}\left\|  \left(\widehat{\bm{B}}_{j}  - \bm{B}_{j+1}^\star \right) \bm{x}_{t}\right\|_2^2 
        - 2\left| \sum_{t=t_j}^{\widetilde{t}^{f}_j-1}\bm{x}^{\prime}_{t}\left(\widehat{\bm{B}}_{j} - \bm{B}_{j+1}^\star\right)\bm{\varepsilon}_t\right|\nonumber \\
        &=  I_3 
        + \sum_{t=t_j}^{\widetilde{t}^{f}_j-1}\|\bm{\varepsilon}_t\|_2^2
        + \sum_{t=t_j}^{\widetilde{t}^{f}_j -1}\left\|  \left(\widehat{\bm{B}}_{j}  - \bm{B}_{j}^\star +   \bm{B}_{j}^\star - \bm{B}_{j+1}^\star \right) \bm{x}_{t}\right\|_2^2 
        - 2\left| \sum_{t=t_j}^{\widetilde{t}^{f}_j-1}\bm{x}^{\prime}_{t}\left(\widehat{\bm{B}}_{j}  -\bm{B}_{j}^\star + \bm{B}_{j}^\star-  \bm{B}_{j+1}^\star\right)\bm{\varepsilon}_t\right|\nonumber \\
        &\geq   I_3 
        + \sum_{t=t_j}^{\widetilde{t}^{f}_j-1}\|\bm{\varepsilon}_t\|_2^2
        + \sum_{t=t_j}^{\widetilde{t}^{f}_j -1}\left\|  \left(\widehat{\bm{B}}_{j}  - \bm{B}_{j}^\star 
        \right) \bm{x}_{t}\right\|_2^2 
        + \sum_{t=t_j}^{\widetilde{t}^{f}_j -1}\left\|  \left(
         \bm{B}_{j}^\star - \bm{B}_{j+1}^\star \right) \bm{x}_{t}\right\|_2^2
         \nonumber\\
         & \quad \quad  \quad - 2\left|\sum_{t=t_j}^{\widetilde{t}^{f}_j -1}\bm{x}^\prime_{t} \left(\widehat{\bm{B}}_{j}  - \bm{B}_{j}^\star 
         \right)^\prime  \left( \bm{B}_{j}^\star - \bm{B}_{j+1}^\star \right) \bm{x}_{t} \right | 
       - 2\left| \sum_{t=t_j}^{\widetilde{t}^{f}_j-1}\bm{x}^{\prime}_{t}\left(\widehat{\bm{B}}_{j}  -\bm{B}_{j}^\star \right)\bm{\varepsilon}_t\right|
       - 2\left| \sum_{t=t_j}^{\widetilde{t}^{f}_j-1}\bm{x}^{\prime}_{t}\left( \bm{B}_{j}^\star-  \bm{B}_{j+1}^\star\right)\bm{\varepsilon}_t\right|\nonumber \\
       &\overset{\text{(i)}}{\geq}  c^\prime  \left\vert \widetilde{t}^{f}_j - t_j \right\vert \left(\left\|\widehat{\bm{B}}_{j} - \bm{B}_{j}^\star \right\|_F^2 + \left\| \bm{B}_{j}^\star - \bm{B}_{j+1}^\star \right\|_F^2 \right)
         - c^{\prime\prime} \left\vert \widetilde{t}^{f}_j - t_j \right\vert\left \Vert \bm{B}_{j}^\star - \bm{B}_{j+1}^\star\right \Vert_{F} \left \Vert\widehat{\bm{B}}_{j} - \bm{B}_{j}^\star\right \Vert_F\nonumber\\
       & \quad \quad \quad   - c^{\prime\prime\prime}\sqrt{\left\vert \widetilde{t}^{f}_j - t_j \right\vert {\log (p_x  p_y \vee n)} }\left(\left\|   \widehat{\bm{B}}_{j}- \bm{B}_{j}^\star\right\|_1 + \left\|   \bm{B}_{j}^\star- \bm{B}_{j+1}^\star\right\|_1 \right) + I_3 
        + \sum_{t=t_j}^{\widetilde{t}^{f}_j-1}\|\bm{\varepsilon}_t\|_2^2
       \nonumber \\
       &\overset{\text{(ii)}}{\geq}   I_3 
        + \sum_{t=t_j}^{\widetilde{t}^{f}_j-1}\|\bm{\varepsilon}_t\|_2^2 + c^\prime  \left\vert \widetilde{t}^{f}_j - t_j \right\vert \left\|   \bm{B}_{j}^\star- \bm{B}_{j+1}^\star\right\|_F^2 
        - c^{\prime\prime} \left\vert \widetilde{t}^{f}_j - t_j \right\vert\left \Vert \bm{B}_{j}^\star - \bm{B}_{j+1}^\star\right \Vert_{F} \left \Vert\widehat{\bm{B}}_{j} - \bm{B}_{j}^\star\right \Vert_F \nonumber \\
       & \quad \quad \quad \quad    -c^{\prime\prime\prime}\sqrt{\left\vert \widetilde{t}^{f}_j - t_j \right\vert {\log( p_x  p_y \vee n)} } \left\|   \bm{B}_{j}^\star- \bm{B}_{j+1}^\star\right\|_1 
       \nonumber \\
       &\overset{}{\geq}   I_3 
        + \sum_{t=t_j}^{\widetilde{t}^{f}_j-1}\|\bm{\varepsilon}_t\|_2^2  
        + C^\prime \left\vert \widetilde{t}^{f}_j - t_j \right\vert\left \Vert \bm{B}_{j}^\star - \bm{B}_{j+1}^\star\right \Vert_{F} \left( \left \Vert \bm{B}_{j}^\star - \bm{B}_{j+1}^\star\right \Vert_{F}- \left \Vert\widehat{\bm{B}}_{j} - \bm{B}_{j}^\star\right \Vert_F  \right)\nonumber \\
       & \quad \quad \quad \quad    -c^{\prime\prime\prime}\sqrt{\left\vert \widetilde{t}^{f}_j - t_j \right\vert {\log (p_x  p_y \vee n)} } \left\|   \bm{B}_{j}^\star- \bm{B}_{j+1}^\star\right\|_1 
       \nonumber \\
       &\overset{\text{(iii)}}{\geq}  I_3 
        + \sum_{t=t_j}^{\widetilde{t}^{f}_j-1}\|\bm{\varepsilon}_t\|_2^2
        + c_1 \left\vert \widetilde{t}^{f}_j - t_j \right\vert \left\|    \bm{B}_{j}^\star- \bm{B}_{j+1}^\star\right\|_F \left( \left\|   \bm{B}_{j}^\star- \bm{B}_{j+1}^\star\right\|_F - \sqrt{\frac{d_n^\star {\log (p_x p_y \vee n) }}{\left\vert \widetilde{t}^{f}_j - t_j \right\vert}} \right) \nonumber \\
       &\geq  I_3 
        + \sum_{t=t_j}^{\widetilde{t}^{f}_j-1}\|\bm{\varepsilon}_t\|_2^2
        + C \left\vert \widetilde{t}^{f}_j - t_j \right\vert \left\|    \bm{B}_{j}^\star- \bm{B}_{j+1}^\star\right\|_F \left(\left\|    \bm{B}_{j}^\star- \bm{B}_{j+1}^\star\right\|_F
        -\nu_n\sqrt{\frac{1}{K}}   \right)
       \nonumber\\
         &\overset{\text{(iv)}}{\geq} I_3+  \sum_{t=t_j}^{\widetilde{t}^{f}_j-1}\|\bm{\varepsilon}_t\|_2^2
        + K_1\left\vert \widetilde{t}^{f}_j - t_j \right\vert\nu_n^2 ,
    \end{align}}
   where the first term in inequality (i) holds  by {lower}-RE condition in {\eqref{eq:RE_bound}} and the fact that
{\footnotesize\begin{align}
\left\Vert \bm{B}_{j}^\star - \widehat{\bm{B}}_{j} \right\Vert_{1}^2  \leq 16
d^\star_n\left\Vert \bm{B}_{j}^\star - \widehat{\bm{B}}_{j}\right\Vert_{F}^2;  \nonumber
\end{align} } 
the  second term inequality (i) holds  by  H\"{o}lder's inequality;
the  third term inequality (i) holds  by deviation bound condition in \eqref{eq:dev_bound};  the third term in inequality (ii) holds  by the fact that $ \widehat{\bm{B}}_{j}$ converges to $\bm{B}_{j}^\star$, $\left\| \bm{B}_{j+1}^\star -\bm{B}_{j}^\star \right\|_F^2 \geq \nu_n^2 $ and Assumption A4 that $\nu_n = {\Omega\left(\sqrt{\frac{d^\star_n\log (p_x p_y \vee n)}{b_n}}\right)}$;  
the inequality (iii) holds  by  $\left\lVert \bm{B}_{j+1}^\star - \bm{B}_{j}^\star \right\rVert_{1} \leq  \sqrt{2d_n^\star}\left\lVert \bm{B}_{j+1}^\star - \bm{B}_{j}^\star \right\rVert_{F}$;   
and the inequality (iv) holds by ${ \left|\widetilde{t}^{f}_j - t_j \right| >\frac{K {d_n^\star} \log( p_x  p_y \vee n)}{\nu_n^2}}$ and choosing large enough constant $K>0$ such that $\sqrt{\frac{1}{K}} < \frac{1}{4}$.

Similarly, we have
{\footnotesize
  \begin{align}
        I_4
        &= \sum_{t= t_j}^{\max (R_j)+b_n-1} \left\| \bm{y}_{t}  - \widehat{\bm{B}}_{j+1}\bm{x}_{t} \right\|_2^2\nonumber \\
        &= \sum_{t= t_j}^{\widetilde{t}^{f}_j-1} \left\| \bm{y}_{t}  - \widehat{\bm{B}}_{j+1}\bm{x}_{t} \right\|_2^2
        + \sum_{t= \widetilde{t}^{f}_j}^{\max (R_j)+b_n-1} \left\| \bm{y}_{t}  - \widehat{\bm{B}}_{j+1}\bm{x}_{t} \right\|_2^2\nonumber \\
        &\leq I_2 
        + \sum_{t=t_j}^{\widetilde{t}^{f}_j-1}\|\bm{\varepsilon}_t\|_2^2
        + \sum_{t=t_j}^{\widetilde{t}^{f}_j-1}
        \left\|  \left(\widehat{\bm{B}}_{j+1} - \bm{B}_{j+1}^\star \right) \bm{x}_{t}\right\|_2^2 
        + 2\left| \sum_{t=t_j}^{\widetilde{t}^{f}_j-1}\bm{x}^{\prime}_{t}\left(\widehat{\bm{B}}_{j+1} - \bm{B}_{j+1}^\star\right)\bm{\varepsilon}_t\right| \nonumber \\
        &\overset{\text{(i)}}\leq  I_2
        + \sum_{t=t_j}^{\widetilde{t}^{f}_j-1}\|\bm{\varepsilon}_t\|_2^2
        + c^\prime b_n \left\|\widehat{\bm{B}}_{j+1} - \bm{B}_{j+1}^\star \right\|_F^2 
        + c^{\prime\prime}\sqrt{b_n {\log( p_x  p_y\vee n) }}\left\|\widehat{\bm{B}}_{j+1} - \bm{B}_{j+1}^\star\right\|_1  \nonumber \\
        &\leq I_2 
        + \sum_{t=t_j}^{\widetilde{t}^{f}_j-1}\|\bm{\varepsilon}_t\|_2^2
        + c^\prime b_n \left\|\widehat{\bm{B}}_{j+1} - \bm{B}_{j+1}^\star \right\|_F\left( \left\|\widehat{\bm{B}}_{j+1} - \bm{B}_{j+1}^\star \right\|_F + \frac{c^{\prime\prime}}{c} \sqrt{\frac{ d_{j+1} {\log( p_x p_y \vee n) }}{b_n}}\right) \nonumber \\
        &\overset{\text{(ii)}}\leq I_2 
        + \sum_{t=t_j}^{\widetilde{t}^{f}_j-1}\|\bm{\varepsilon}_t\|_2^2
        +{ K_2d_{j+1}\log( p_x p_y \vee n) },\nonumber
    \end{align}}
where the second term in inequality (i) holds  by upper-RE condition in \eqref{eq:upper-RE_bound} and the fact that
{\footnotesize\begin{align}
\left\Vert \bm{B}_{j+1}^\star - \widehat{\bm{B}}_{j+1} \right\Vert_{1}^2  \leq 16
d^\star_n\left\Vert \bm{B}_{j+1}^\star - \widehat{\bm{B}}_{j+1}\right\Vert_{F}^2;  \nonumber
\end{align} }
the third term in inequality (i) holds  by deviation bound condition in \eqref{eq:dev_bound}; the inequality (ii) holds  by the result \eqref{eq:small_o_3_var} and  \eqref{eq:small_o_4_var}  in  Lemma 2. 

Now based on the definition of \eqref{eq:Exhaust:final}, we have $ L_n\left(\widetilde{t}_j^f \right)  \leq L_n(t_j) $ and then
{\footnotesize\begin{equation}
        I_3+  \sum_{t=t_j}^{\widetilde{t}^{f}_j-1}\|\bm{\varepsilon}_t\|_2^2
        + K_1\nu_n^2 \left|\widetilde{t}^{f}_j-t_j \right| + I_2 \leq L_n(\widetilde{t}_j^f)  \leq L_n(t_j) \leq  I_3 + I_2 
        + \sum_{t=t_j}^{\widetilde{t}^{f}_j-1}\|\bm{\varepsilon}_t\|_2^2
        + {K_2d_{j+1}\log( p_xp_y \vee n)}, \nonumber
    \end{equation}}
which leads to
{\footnotesize\begin{equation}
        { \left |\widetilde{t}_j^f -t_j \right|  \leq  \frac{K_j d_n^\star  \log( p_x p_y \vee n)}{\nu_n^2} .}\nonumber
\end{equation}}
 This contradicts the setting. Under the Assumption A1 and A2, we set $K^\star = \max_{1\leq j \leq m_0} K_j$ and complete the proof.    
\end{proof}

\begin{proof}[proof of Theorem~\ref{thm:estimation}]

Let $\widehat{\bm{B}}_{j} = \sum_{i=1}^{{\big\lfloor\frac{1}{2}\left(\max(J_{j-1})+\min(J_{j})\right)\big\rfloor}} \widehat{\bm{\Theta}}_i$, 
where $\max(J_{j-1})$ and $\min(J_{j})$ are the  corresponding block indices  for  change point clusters $R_{j-1}$ and $R_j$.
By Theorem~\ref{thm_selection_block}, we have
{\footnotesize\begin{equation*}
{
 \left\lfloor\frac{1}{2}\left(\max(J_{j-1 })+\min(J_{j})\right) \right\rfloor- \max(J_{j-1 })  \geq   \frac{\min_{1\leq j \leq m_0+1} \vert t_{j} - t_{j-1}\vert - 2b_n }{2}  \geq b_n. }
\end{equation*}}

Apply the result from Lemma~\ref{lemma_lm_theta_1},  we have 
{\footnotesize\begin{equation*}
   \left\Vert  \widehat{\bm{B}}_j - \bm{B}_j^\star \right\Vert_F = { O_p \left(\sqrt{\frac{d^\star_n\log (p_x  p_y \vee n)}{b_n}}\right) }
    \end{equation*}}
which complete the proof of the first part.


Let $S$ denote the support of $\bm{B}_j^\star$.
To derive the upper bound on the number of false positives selected by thresholded lasso, note that 
{\footnotesize
\begin{equation*}
    \left \vert  \text{supp}(\widetilde{\bm{B}}_j) \backslash  \text{supp}(\bm{B}_j^\star) \right \vert  = \sum_{s \notin S}\mathbbm{1}_{\{ \vert \widehat{\bm{B}}_{j,s} \vert > {\eta_n} \}} \leq \sum_{s \notin S} \vert \widehat{\bm{B}}_{j, s} \vert/{\eta_{n,j}} \leq \frac{1}{{\eta_{n,j}}} \sum_{s \notin S} \vert v_s \vert \leq
    \frac{3}{{\eta_{n,j} }} \sum_{s \in S} \vert v_s \vert 
    \leq
    \frac{3\Vert v \Vert_1}{{\eta_{n,j}}} \leq
    {\frac{12 \sqrt{d_n^\star} \Vert v \Vert_F}{{\eta_{n,j}}} } .
    \end{equation*}}where $v = \widehat{\bm{B}}_j - \bm{B}_j^\star$.
\end{proof}

\begin{proof}[Proof of Theorem~\ref{thm:mean}]
We only need to verify the general Assumptions A1-A2 hold, under Assumption B1.
Specifically, we want to prove that there exist constants $c_1, c_2 > 0 $ such that for any 
${a_n = \Omega( \log (p \vee n ))} $,
{\footnotesize\begin{equation}\label{eq:DB_bound_mean}
\sup_{1 \leq j \leq {m_0+1},  t_j > u > l \geq t_{j-1}, | u - l  |  > a_n }  \left|\left| {(l-u)}^{-1}  \sum_{t=l}^{u - 1} \bm{\varepsilon}_t^\prime    \right|\right|_\infty \leq { c_0\sigma_{\bm{\varepsilon}}^2 \sqrt{\frac{\log( p\vee n)}{a_n}},}
\end{equation}}
with probability at least ${1- c_1 \exp( -c_2 \log( p \vee n))}$.

Note that for mean model,  the $\bm{x}_t = 1$ in each segment. Therefore, it is easy to verify that Assumption A1 holds. Specifically, with any $0< \alpha_1 \leq1$ and $\alpha_2 > 1$,
{\footnotesize\begin{equation}
\inf_{1 \leq j \leq {m_0+1}, t_j > u > l \geq t_{j-1}, | u - l  |  > a_n  }  v^\prime \left({(l-u)}^{-1}  \sum_{t=l}^{u - 1} \bm{x}_{t}\bm{x}_{t}^\prime \right) v = \Vert v \Vert_2^2  \geq
\alpha_1 \Vert v \Vert_2^2, \nonumber
\end{equation}}
and
{\footnotesize\begin{equation}
\sup_{1 \leq j \leq {m_0+1}, t_j > u > l \geq t_{j-1}, | u - l  |  > a_n  }  v^\prime \left({(l-u)}^{-1}  \sum_{t=l}^{u - 1} \bm{x}_{t}\bm{x}_{t}^\prime \right) v = \Vert v \Vert_2^2  \leq
\alpha_2 \Vert v \Vert_2^2. \nonumber
\end{equation}}

Using the fact that
{\footnotesize\begin{equation}
  \left|\left| {(l-u)}^{-1}  \sum_{t=l}^{u - 1} \bm{x}_{t} \bm{\varepsilon}_t^\prime    \right|\right|_\infty  =  \left|\left| {(l-u)}^{-1}  \sum_{t=l}^{u - 1} \bm{\varepsilon}_t   \right|\right|_\infty = \max_{1\leq i \leq p}  \left \{ {(l-u)}^{-1}  \sum_{t=l}^{u - 1} e_i^\prime\bm{\varepsilon}_t  \right\}, \nonumber
\end{equation}}
where $e_i \in \R^{p}$ is a unit vector with the $i$-th element being one and the rest zero, we may  first consider the deviation bound for ${(l-u)}^{-1}  \sum_{t=l}^{u - 1} e_i^\prime\bm{\varepsilon}_t$. 

Let  $Z = ( e_i^\prime\varepsilon_l, e_i^\prime\varepsilon_{l+1}, \dots,  e_i^\prime\varepsilon_{u-1})^\prime$ and
apply Proposition 5.10 (Hoeffding-type inequality) in \cite{vershynin2010introduction}. Given the fact that
$e_i^\prime\bm{\varepsilon}_t$ is sub-Gaussian with parameter at most $\sigma_{\bm{\varepsilon}} = \max_{1 \leq j \leq m_0 + 1} \sigma_j$, we have
{\footnotesize\begin{equation}
\mathbb{P} \left( \left \vert  \left \{ {(l-u)}^{-1}  \sum_{t=l}^{u - 1} e_i^\prime\bm{\varepsilon}_t  \right\} \right\vert  > t \right)
= \mathbb{P} \left( \left|  \frac{1}{u-l}\sum_{i=1}^{u-l}Z_{i} \right| > t_1 \right) 
\leq 2 \exp (-c a_n  \frac{t}{2\sigma_{\bm{\varepsilon}}^2} ).\nonumber
\end{equation}}

Taking union over all $1 \leq i \leq p$, we have
{\footnotesize\begin{equation}
  \mathbb{P} \left( \left \Vert {(l-u)}^{-1}  \sum_{t=l}^{u - 1}\bm{\varepsilon}_t  \right\Vert_\infty > t \right) \leq  2p  \exp \left(-c a_n \frac{t}{2\sigma_{\bm{\varepsilon}}^2} \right). \nonumber
\end{equation}}

Setting ${t = c_0\sigma_{\bm{\varepsilon}}^2 \sqrt{\frac{\log (p\vee n)}{a_n}}}$ and the assumption that 
${a_n = \Omega( \log (p\vee n))} $,
we have
{\footnotesize\begin{align}
 &\mathbb{P} \left( \left \Vert {(l-u)}^{-1}  \sum_{t=l}^{u - 1} \bm{x}_{t}\varepsilon_t  \right\Vert_\infty > c_0\sigma_{\bm{\varepsilon}}^2 \sqrt{\frac{\log (p\vee n)}{a_n}} \right)  \nonumber\\
 \leq  & c_1  \exp \left (-c a_n  \frac{c_0\sqrt{\log (p\vee n) }}{  \sqrt{a_n}} \right)    \nonumber \\
 \leq& c_1  \exp (-c_2 \log (p\vee n) ).  \nonumber
\end{align}}
This completes the proof.
\end{proof}

\begin{proof}[Proof of Theorem~\ref{thm:ggm}]
We only need to prove that under Assumption D1, the general Assumptions A1-A2 hold.
Let $\Lambda_{\min}(\Sigma_{\bm{x}}) = \min_{1 \leq j \leq m_0+1}\Lambda_{\min}(\Sigma_{j})$ and  $\sigma_{\varepsilon} = \max_{1 \leq j \leq m_0+1}\sigma_{\varepsilon, j}$.
Specifically, we want to prove that there exist constants $c_1, c_2 > 0 $ such that for any 
${a_n =\Omega \left( \log (p\vee n)  \frac{\Lambda^2_{\max}(\Sigma_{\bm{x}}) }{\Lambda^2_{\min}(\Sigma_{\bm{x}})}\right)}$
and  all $ v \in \mathbb{R}^{p^2}$,
{\footnotesize\begin{equation}
\inf_{1 \leq j \leq {m_0+1}, t_j > u > l \geq t_{j-1}, | u - l  |  > a_n  }  v^\prime I_p \otimes \left({(l-u)}^{-1}  \sum_{t=l}^{u - 1} \bm{x}_{t}\bm{x}_{t}^\prime\right) v  \geq
\alpha_1 \Vert v \Vert_2^2 - \tau_1 \Vert v \Vert_1^2 ,\nonumber
\end{equation}}
{\footnotesize\begin{equation}
\sup_{1 \leq j \leq {m_0+1}, t_j > u > l \geq t_{j-1}, | u - l  |  > a_n  }  v^\prime I_p \otimes \left({(l-u)}^{-1}  \sum_{t=l}^{u - 1} \bm{x}_{t}\bm{x}_{t}^\prime \right) v  \leq
\alpha_2 \Vert v \Vert_2^2 + \tau_2 \Vert v \Vert_1^2 ,\nonumber
\end{equation}}
 with parameters $\alpha_1 = \frac{\lambda_{\min}(\Sigma_{\bm{x}})}{2}$, $\alpha_2 = \frac{3\lambda_{\max}(\Sigma_{\bm{x}})}{2}$, ${\tau_1 = \tau_2 = c_0\frac{\log (p\vee n)}{a_n}  \frac{\Lambda^2_{\max}(\Sigma_{\bm{x}}) }{\Lambda_{\min}(\Sigma_{\bm{x}})} }$,
and  probability at least 
$ 1 - c_1 \exp\left(-c_2a_n  \frac{\Lambda^2_{\min}(\Sigma_{\bm{x}})}{\Lambda^2_{\max}(\Sigma_{\bm{x}})}  \right) $.
Moreover,

{\footnotesize
\begin{equation}\label{eq:DB_bound_ggm}
\sup_{1 \leq j \leq {m_0+1},  t_j > u > l \geq t_{j-1}, | u - l  |  > a_n }  \left|\left| {(l-u)}^{-1}  \sum_{t=l}^{u - 1} \bm{x}_{t} \bm{\varepsilon}_t^\prime    \right|\right|_\infty \leq {c_0\Lambda_{\max}(\Sigma_{\bm{x}}) \sqrt{\frac{\log (p\vee n)}{a_n}}}, \nonumber
\end{equation}}
with probability at least ${1- c_1 \exp( -c_2 \log (p\vee n))}$.


The first part of the proof is similar to the proof of Lemma \ref{thm:mlr}.
Based on Lemma B.1 in \cite{Basu_2015},  it is enough to show that for all $ v \in \mathbb{R}^{p}$, 
{\footnotesize\begin{equation}\label{eq:RE_bound_ggm_2}
\inf_{1 \leq j \leq {m_0+1}, t_j > u > l \geq t_{j-1}, | u - l  |  > a_n  }  v^\prime \left({(l-u)}^{-1}  \sum_{t=l}^{u - 1} \bm{x}_{t}\bm{x}_{t}^\prime - \Sigma_{\bm{x}}\right) v  \geq
\alpha_1 \Vert v \Vert_2^2 - \tau_1 \Vert v \Vert_1^2 ,
\end{equation}}
and 
{\footnotesize\begin{equation}\label{eq:RE_bound_ggm_3}
\sup_{1 \leq j \leq {m_0+1}, t_j > u > l \geq t_{j-1}, | u - l  |  > a_n  }  v^\prime \left({(l-u)}^{-1}  \sum_{t=l}^{u - 1} \bm{x}_{t}\bm{x}_{t}^\prime - \Sigma_{\bm{x}}\right) v  \leq
\alpha_2 \Vert v \Vert_2^2 + \tau_2 \Vert v \Vert_1^2 .
\end{equation}}

Note that the vector $\bm{x}$ is Gaussian random variable with zero mean and covariance $\Sigma_{\bm{x}}$.  For any unit vector $\bm{u} \in \mathbb{R}^{p}$, the random variable $\bm{u}^\prime \bm{x}$ is a Gaussian random variable with zero mean and variance at most $\Lambda_{\max}(\Sigma_{\bm{x}})$. 
The result in Lemma 1 in \cite{loh2012}, together with the substitutions $\sigma_x^2 = \Lambda_{\max}(\Sigma_{\bm{x}}) $, completes the first part of the proof.

 
  For the second part, note that $\bm{x}_t$ and $\bm{\varepsilon}_t$ are independent, we have
 {\footnotesize
\begin{align}
e_i^\prime \left \{ {(l-u)}^{-1}  \sum_{t=l}^{u - 1} \bm{x}_{t}\bm{\varepsilon}_t^\prime   \right\}e_j
= & \frac{1}{2} \Bigg[\left (  {(l-u)}^{-1}\sum_{t=l}^{u - 1} (e_i^\prime \bm{x}_{t} + e_j^\prime \bm{\varepsilon}_t )^\prime(e_i^\prime\bm{x}_{t} + e_j^\prime \bm{\varepsilon}_t )-\Sigma_{e_i^\prime\bm{x} + e_j^\prime \bm{\varepsilon}} \right) \nonumber\\
& \quad \quad  -  \left({(l-u)}^{-1} \sum_{t=l}^{u - 1} \bm{x}_{t}^\prime e_i e_i^\prime\bm{x}_{t} -e_i^\prime\Sigma_{\bm{x} }e_i\right)  -   \left({(l-u)}^{-1}\sum_{t=l}^{u - 1}  \bm{\varepsilon}_t^\prime e_j e_j^\prime\bm{\varepsilon}_t-e_j^\prime\Sigma_{\varepsilon}e_j\right) \Bigg] , \nonumber\\
= & \frac{1}{2}(I_1 - I_2 -I_3), \nonumber
\end{align}}
where $e_i^\prime\bm{x}_{t} + e_j^\prime\bm{\varepsilon}_t$ is Gaussian variable with mean zero and variance 2.

 We combine the three upper bounds, which gives a upper  bound of the deviation:
 {\footnotesize
\begin{align}
\mathbb{P} \left(  \left \vert e_i^\prime \left \{ {(l-u)}^{-1}  \sum_{t=l}^{u - 1} \bm{x}_{t}\bm{\varepsilon}_t^\prime    \right\}e_j \right \vert  > t \right) & = \mathbb{P} \left( \left| I_1 - I_2 - I_3 \right| > 2t \right) \nonumber\\    
& \leq \mathbb{P} \left( \vert I_1  \vert  + \vert I_2  \vert + \vert I_3 \vert  > 2t \right) \nonumber\\    
& \leq \mathbb{P} \left( \vert I_1  \vert  > \frac{2t}{3} \right) + \mathbb{P} \left( \vert I_2  \vert  > \frac{2t}{3} \right) + \mathbb{P} \left( \vert I_3  \vert  > \frac{2t}{3} \right) \nonumber \\   
& \leq 6  \exp \left(-c^\prime a_n \min \left(t^2, t \right) \right). \nonumber
\end{align}}
Taking union over all $1 \leq i \leq p$ and $1 \leq j \leq p$ and rescalling the sub-exponential parameter, we have 
{\footnotesize
\begin{equation}
  \mathbb{P} \left( \left \Vert {(l-u)}^{-1}  \sum_{t=l}^{u - 1} \bm{x}_{t}\bm{\varepsilon}^\prime_t  \right\Vert_\infty > t \right) \leq  6p  \exp \left(-c a_n \min \left(\frac{t^2}{  \Lambda_{\max}^2(\Sigma_{\bm{x}})}, \frac{t}{  \Lambda_{\max}(\Sigma_{\bm{x}})}\right) \right) \nonumber
\end{equation}}

Setting ${t = c_0\Lambda_{\max}(\Sigma_{\bm{x}}) \sqrt{\frac{\log( p\vee n)}{a_n}}}$ and the assumption that 
${a_n = \Omega( \log (p\vee n))}$,
we have
{\footnotesize
\begin{align}
 \mathbb{P} \left( \left \Vert {(l-u)}^{-1}  \sum_{t=l}^{u - 1} \bm{x}_{t}\bm{\varepsilon}_t^\prime  \right\Vert_\infty > c_0\Lambda_{\max}(\Sigma_{\bm{x}}) \sqrt{\frac{\log (p\vee n)}{a_n}} \right)  
 \leq c_1  \exp (-c_2 \log (p\vee n) ). \nonumber 
\end{align}}
  \end{proof}

\begin{proof}[Proof of Proposition~\ref{thm:mlr}]
We only need to prove that under Assumption E1, the general Assumptions A1-A2 hold.
Let $\Lambda_{\min}(\Sigma_{\bm{x}}) = \min_{1 \leq j \leq m_0+1}\Lambda_{\min}(\Sigma_{\bm{x},j})$, $\sigma_{\bm{x}} = \max_{1 \leq j \leq m_0+1}\sigma_{\bm{x},j}$ and  $\sigma_{\varepsilon} = \max_{1 \leq j \leq m_0+1}\sigma_{\varepsilon,j}$.
Specifically, we want to prove that there exist constants $c_1, c_2 > 0 $ such that for any 
${a_n = \Omega \left( \log (p\vee n) \max \left\{ \frac{\sigma^4_{\bm{x}}}{\Lambda^2_{\min}(\Sigma_{\bm{x}})}, 1\right\} \right)}$, 
{\footnotesize\begin{equation}\label{eq:RE_bound_mlr_1}
\inf_{1 \leq j \leq {m_0+1}, t_j > u > l \geq t_{j-1}, | u - l  |  > a_n  }  v^\prime \left({(l-u)}^{-1}  \sum_{t=l}^{u - 1} \bm{x}_{t}\bm{x}_{t}^\prime \right) v  \geq
\alpha_1 \Vert v \Vert_2^2 - \tau_1 \Vert v \Vert_1^2 ,
\end{equation}}
{\footnotesize\begin{equation}\label{eq:RE_bound_mlr_2}
\sup_{1 \leq j \leq {m_0+1}, t_j > u > l \geq t_{j-1}, | u - l  |  > a_n  }  v^\prime \left({(l-u)}^{-1}  \sum_{t=l}^{u - 1} \bm{x}_{t}\bm{x}_{t}^\prime \right) v  \leq
\alpha_2 \Vert v \Vert_2^2 + \tau_2 \Vert v \Vert_1^2 ,
\end{equation}}
for all $ v \in \mathbb{R}^p$, with parameters $\alpha_1 = \frac{\lambda_{\min}(\Sigma_{\bm{x}})}{2}$, $\alpha_2 = \frac{3\lambda_{\max}(\Sigma_{\bm{x}})}{2}$, ${\tau_1 = c_0\frac{\log (p\vee n) \lambda_{\min}(\Sigma_{\bm{x}})}{a_n} \max \left\{ \frac{\sigma^4_{\bm{x}}}{\Lambda^2_{\min}(\Sigma_{\bm{x}})}, 1\right\}} $, ${\tau_2 = c_0\frac{\log (p\vee n) \lambda_{\max}(\Sigma_{\bm{x}})}{a_n} \max \left\{ \frac{\sigma^4_{\bm{x}}}{\Lambda^2_{\min}(\Sigma_{\bm{x}})}, 1\right\} }$ and  probability at least $ 1 - c_1 \exp\left(-c_2a_n \min \left\{ \frac{\Lambda^2_{\min}(\Sigma_{\bm{x}})}{\sigma^4_{\bm{x}}}, 1\right\}  \right) $. Moreover,

{\footnotesize\begin{equation}\label{eq:DB_bound_mlr}
\sup_{1 \leq j \leq {m_0+1},  t_j > u > l \geq t_{j-1}, | u - l  |  > a_n }  \left|\left| {(l-u)}^{-1}  \sum_{t=l}^{u - 1} \bm{x}_{t} \varepsilon_t    \right|\right|_\infty \leq {c_0\sigma_{\bm{x}}\sigma_{\varepsilon} \sqrt{\frac{\log( p\vee n)}{a_n}} },\nonumber
\end{equation}}
with probability at least ${1- c_1 \exp( -c_2 \log (p\vee n)) }$.

The  proof of the first part  lemma is similar to those of Lemma 1 in \cite{loh2012}. 
Applying Supplementary Lemma 13 in \cite{loh2012}
, together with the substitutions 
{\footnotesize\begin{equation}
\widehat{\Gamma} - \Sigma = {(u - l)}^{-1}  \sum_{t=l}^{u - 1} \bm{x}_{t}\bm{x}_{t}^\prime - \Sigma_{\bm{x}}, \text{ and } {s= \frac{u-l}{c\log (p\vee  n)} \min \left\{ \frac{\Lambda^2_{\min}(\Sigma_{\bm{x}})}{\sigma^4_{\bm{x}}}, 1\right\} }\nonumber
\end{equation}}
where $c$ is chosen sufficiently small so $s>1$, we see that it suffices to show that
{\footnotesize\begin{equation}
\sup_{v \in \mathbb{K}(2s))} \left | v^\prime \left({(u - l)}^{-1}  \sum_{t=l}^{u - 1} \bm{x}_{t}\bm{x}_{t}^\prime - \Sigma_{\bm{x}}\right) v \right | \leq \frac{\Lambda^2_{\min}(\Sigma_{\bm{x}})}{54}
, \nonumber
\end{equation}}
with high probability, where $\mathbb{K}(2s) = \left \{ v \in \mathbb{R}^p: \Vert v\Vert_2 \leq 1, \Vert v \Vert_0 \leq 2s \right\}$  is the set of sparse vectors .

Note that the vector $\bm{x}$ is sub-Gaussian with parameters $(\Sigma_x, \sigma_x^2)$.
Consequently, by Lemma 15 in \cite{loh2012}, we have
{\footnotesize\begin{equation}
\mathbb{P} \left[\sup_{v \in \mathbb{K}(2s))} \left | v^\prime \left({(l-u)}^{-1}  \sum_{t=l}^{u - 1} \bm{x}_{t}\bm{x}_{t}^\prime - \Sigma_{\bm{x}}\right) v \right | > t \right] \leq 2 \exp \left ( - c^\prime (u - l) \min \left(\frac{t^2}{\sigma_x^4} , \frac{t}{\sigma_x^2} \right) + 2s \log p  \right)
, \nonumber
\end{equation}}
for some universal constant $c^\prime > 0$.
Setting $t=\frac{\Lambda^2_{\min}(\Sigma_{\bm{x}})}{54}$, we see that as long as the constant $c$ is chosen sufficiently small, we are guaranteed that
{\footnotesize\begin{equation}
\mathbb{P} \left[\sup_{v \in \mathbb{K}(2s))} \left | v^\prime \left({(l-u)}^{-1}  \sum_{t=l}^{u - 1} \bm{x}_{t}\bm{x}_{t}^\prime - \Sigma_{\bm{x}}\right) v \right | >\frac{\Lambda^2_{\min}(\Sigma_{\bm{x}})}{54} \right] \leq 2 \exp \left ( - c_1 (u - l) \min \left( \frac{\Lambda^2_{\min}(\Sigma_{\bm{x}})}{\sigma_x^4} , 1 \right) \right)
. \nonumber
\end{equation}}

Thus, we have the lower-RE condition
{\footnotesize\begin{equation}
v^\prime \left({(l-u)}^{-1}  \sum_{t=l}^{u - 1} \bm{x}_{t}\bm{x}_{t}^\prime \right) v  \geq \frac{\lambda_{\min}(\Sigma_{\bm{x}})}{2} \Vert v \Vert_2^2 - \frac{\lambda_{\min}(\Sigma_{\bm{x}})}{2s} \Vert v \Vert_1^2, \nonumber
\end{equation}}
and the upper-RE condition 
{\footnotesize\begin{equation}
v^\prime \left({(l-u)}^{-1}  \sum_{t=l}^{u - 1} \bm{x}_{t}\bm{x}_{t}^\prime \right) v  \leq \frac{3\lambda_{\max}(\Sigma_{\bm{x}})}{2} \Vert v \Vert_2^2 + \frac{\lambda_{\min}(\Sigma_{\bm{x}})}{2s} \Vert v \Vert_1^2\leq \frac{3\lambda_{\max}(\Sigma_{\bm{x}})}{2} \Vert v \Vert_2^2 + \frac{\lambda_{\max}(\Sigma_{\bm{x}})}{2s} \Vert v \Vert_1^2. \nonumber
\end{equation}}

 
 The  rest of proof is similar to that of Lemma 14 in \cite{loh2012}. Based on the fact that  $\text{Cov}( e_i^\prime\bm{x}_t, \varepsilon_t) = 0$ for any $i = 1 , \dots, p$, we have
 {\footnotesize
\begin{align}
& e_i^\prime \left \{ {(l-u)}^{-1}  \sum_{t=l}^{u - 1} \bm{x}_{t}\varepsilon_t   \right\} \nonumber \\
= & \frac{1}{2} \left [\left (  {(l-u)}^{-1}\sum_{t=l}^{u - 1} (e_i^\prime \bm{x}_{t} + \varepsilon_t )^\prime(e_i^\prime\bm{x}_{t} + \varepsilon_t )-\Sigma_{e_i^\prime\bm{x}_{t} + \varepsilon_t} \right)-  \left({(l-u)}^{-1} \sum_{t=l}^{u - 1} \bm{x}_{t}^\prime e_i e_i^\prime\bm{x}_{t} -\Sigma_{\bm{x}_{t} }\right) -   \left({(l-u)}^{-1}\sum_{t=l}^{u - 1}  \varepsilon_t^2-\sigma^2_{\varepsilon_t}\right) \right] , \nonumber\\
= & \frac{1}{2}(I_1 - I_2 -I_3), \nonumber
\end{align}}
where $e_i^\prime\bm{x}_{t} + \varepsilon_t$ is sub-Gaussian with parameter at most $\sqrt{\sigma_{x}^2 +\sigma_{\varepsilon}^2 }$.

Note that if $X$ is a zero-mean sub-Gaussian random variable with parameter $\sigma$, then the random variable $Z = X^2- \E(X^2)$ is sub-exponential with parameter $\Vert Z \Vert_{\psi_1}^2 \leq 2 \Vert X \Vert_{\psi_2}^2 = 2 \sigma^2$ \citep{vershynin2010introduction}.  
Let $Z_1 = (e_i^\prime\bm{x}_{l} + \varepsilon_l, e_i^\prime\bm{x}_{l+1} + \varepsilon_{l+1}, \dots, e_i^\prime\bm{x}_{u-1} + \varepsilon_{u-1})^\prime$,  $Z_2 = (e_i^\prime\bm{x}_{l}, e_i^\prime\bm{x}_{l+1}, \dots, e_i^\prime\bm{x}_{u-1} )^\prime$ and $Z_3 = ( \varepsilon_l, \varepsilon_{l+1}, \dots,  \varepsilon_{u-1})^\prime$, note that if $X$ is a zero-mean sub-Gaussian variable with parameter  $\sigma$, then the rescaled variable $X/ \sigma$ is sub-Gaussian with parameter 1.
We may assume that  $\sigma_x =  \sigma_{\varepsilon} = 1$ without loss of generality.
Applying Proposition 5.16 (Bernstein-type inequality) in \cite{vershynin2010introduction}, we have
{\footnotesize
\begin{equation}
\mathbb{P} \left( \vert I_1 \vert  > t_1 \right)
= \mathbb{P} \left( \left|  \frac{1}{u-l}\sum_{i=1}^{u-l}Z_{1,i}^2 -  \E(Z_1^\prime Z_1)  \right| > t_1 \right) 
\leq 2 \exp (-c a_n \min(\frac{t_1^2}{16}, \frac{t_1}{4}) ).\nonumber
\end{equation}
\begin{equation}
\mathbb{P} \left( \vert I_2 \vert  > t_2 \right)
= \mathbb{P} \left( \left|  \frac{1}{u-l}\sum_{i=1}^{u-l}Z_{2,i}^2 -  \E(Z_2^\prime Z_2)  \right| > t_1 \right) 
\leq 2 \exp (-c a_n \min(\frac{t_2^2}{4 }, \frac{t_2}{ 2 }) ).\nonumber
\end{equation}
\begin{equation}
\mathbb{P} \left( \vert I_3 \vert  > t_3 \right)
= \mathbb{P} \left( \left|  \frac{1}{u-l}\sum_{i=1}^{u-l}Z_{3,i}^2 -  \E(Z_3^\prime Z_3)  \right| > t_1 \right) 
\leq 2 \exp (-c a_n \min(\frac{t_3^2}{  4}, \frac{t_3}{  2}) ). \nonumber
\end{equation}}

Combining the three upper bounds, we establish the result
{\footnotesize
\begin{align}
\mathbb{P} \left(  \left \vert e_i^\prime \left \{ {(l-u)}^{-1}  \sum_{t=l}^{u - 1} \bm{x}_{t}\varepsilon_t   \right\} \right \vert  > t \right) & = \mathbb{P} \left( \left| I_1 - I_2 - I_3 \right| > 2t \right) \nonumber\\    
& \leq \mathbb{P} \left( \vert I_1  \vert  + \vert I_2  \vert + \vert I_3 \vert  > 2t \right) \nonumber\\    
& \leq \mathbb{P} \left( \vert I_1  \vert  > \frac{2t}{3} \right) + \mathbb{P} \left( \vert I_2  \vert  > \frac{2t}{3} \right) + \mathbb{P} \left( \vert I_3  \vert  > \frac{2t}{3} \right) \nonumber \\   
& \leq 6  \exp \left(-c^\prime a_n \min \left(t^2, t \right) \right). \nonumber
\end{align}}

Taking union over all $1 \leq i \leq p$ and rescalling the sub-exponential parameter, we have
{\footnotesize
\begin{equation}
  \mathbb{P} \left( \left \Vert {(l-u)}^{-1}  \sum_{t=l}^{u - 1} \bm{x}_{t}\varepsilon_t  \right\Vert_\infty > t \right) \leq  6p  \exp \left(-c a_n \min \left(\frac{t^2}{  (\sigma_x \sigma_{\varepsilon})^2}, \frac{t}{  \sigma_x \sigma_{\varepsilon}}\right) \right) \nonumber
\end{equation}}

Setting ${t = c_0\sigma_{\bm{x}}\sigma_{\varepsilon} \sqrt{\frac{\log (p\vee n)}{a_n}}}$ and the assumption that 
${a_n = \Omega( \log (p\vee n)) }$,
we have
{\footnotesize
\begin{align}
 &\mathbb{P} \left( \left \Vert {(l-u)}^{-1}  \sum_{t=l}^{u - 1} \bm{x}_{t}\varepsilon_t  \right\Vert_\infty > c_0\sigma_{\bm{x}}\sigma_{\varepsilon}\ \sqrt{\frac{\log (p\vee n)}{a_n}} \right)  \nonumber\\
 \leq  & c_1  \exp \left(-c a_n \min \left(\frac{c_0^2 \log (p\vee n)}{  a_n}, \frac{c_0 \sqrt{\log( p\vee n)}}{  \sqrt{a_n}} \right) \right)   \nonumber \\
 =& c_1  \exp (-c_2 \log (p\vee n) ).  \nonumber
\end{align}}
 \end{proof}

\section{Details about the illustration plots in the introduction}\label{sec:plot}

In this scenario, we set $ n = 1,000 $, $ p = 20 $, $ m_0 = 2 $. 
The  number of non-zero elements of the coefficient vectors  in $j$th segments $d_j= 2$, for all $j= 1, \dots, m_0 + 1$.
 The coefficient vector are chosen to have the random sparse structure in each segment, with different entries $ -2 $, $2 $ and $-2$,  respectively. 
 The true jump size between two consecutive stationary segments is given by $\left\|{\bm{B}}_{j+1} - {\bm{B}}_{j} \right\|_F = 4$ for $j = 1,\dots, m_0$.
 The error variance  is $\Sigma = I$ (which is an identity matrix).
The two change points are equally spaced   ($t_1 = \lfloor \frac{n}{3} \rfloor = 333$ and $t_2=  \lfloor \frac{2n}{3} \rfloor = 666$) with the block size of $b_n=30$.

Denote the \emph{jump} for each block by setting $v_i = \left\|\widehat{\bm{\Theta}}_i \right\|_F $, $i = 2, \cdots, k_n$ and $v_1 = 0 $. 
    Set $V = (v_1,  \cdots, v_{k_n})$, where $k_n$ is the number of blocks in~\eqref{block_model_1}.
    We apply $K$-means clustering to the jump vector $V$ with two centers. Denote the sub-vector with a smaller center as the small subgroup, $V_S$ , and the other sub-vector as the large subgroup, $V_L$.
     Add the corresponding blocks in the large subgroup into $J$.  Set $\widetilde{I}_n = J$,  which contains indices of blocks with large jumps.
     The hard-thresholding value $\omega_n$ would be  any value within the interval
     $ [\max_{v \in V_S}{v}, \min_{v \in V_L}{v})$.
     
     The set of estimated change points after hard-thresholding is given by
 \[ \widetilde{A}_n  = \left\{\widetilde{t}_1 ,\dots,  \widetilde{t}_{\widetilde{m} } \right\} = \left \{r_{i-1} : \left\|\widehat{\bm{\Theta}}_i\right \|_F  > \omega_n,\  i = 2,\dots, k_n \right\},
 \]
where $\omega_n$ is the hard-threshold value.


\section{Additional Results of Simulation Scenario A}\label{sec:add_sim}
\setcounter{equation}{0}
In this section, additional results of Simulation Scenario A  with different $\gamma$ settings are provided.
The detection results are robust with respect to changes in $\gamma$ as shown in Table \ref{table_sim_E_bn}.
From Table \ref{table_sim_E_bn}, we can also see the median of optimal block size are decreasing as the number of change points increases. This results is consistent with the results of selection frequency of block size as shown in Table~\ref{table_sim_E_bn_2}.

\begin{table}[!ht]
\caption{\label{table_sim_E_bn}Results of TBFL with optimal block size ($b_n = 30, 40, 50, 60, 70$) in simulation scenario A. The $\widehat{b}_n$ stands for the median of optimal block size. }
\scriptsize
\centering
\begin{tabular}{l|ccccc|ccccc|ccccc} 
  \hline
  & \multicolumn{5}{c}{TBFL $\gamma = 1$}& \multicolumn{5}{|c}{TBFL $\gamma = 1.5$}& \multicolumn{5}{|c}{TBFL $\gamma = 3$} \\
  $\vert \widetilde{m}^f -m_0\vert$ & $\widehat{b}_n$ & 0 &1 &2 & $>2$& $\widehat{b}_n$ & 0 &1 &2 & $>2$& $\widehat{b}_n$ & 0 &1 &2 & $>2$ \\ 
  \hline 
 $m_0= 2$ &50 & 94 & 4 & 1 & 1 & 50 & 94 & 4 & 2 & 0 & 50 & 94 & 5 & 1 & 0 \\ 
 $m_0= 4$ & 50 & 98 & 2 & 0 & 0 & 50 & 98 & 2 & 0 & 0 & 50 & 98 & 2 & 0 & 0 \\ 
$m_0= 6$ &  40 & 93 & 7 & 0 & 0 & 40 & 93 & 7 & 0 & 0 & 40 & 95 & 5 & 0 & 0 \\ 
$m_0= 8$ &  40 & 93 & 7 & 0 & 0 & 40 & 93 & 7 & 0 & 0 & 40 & 93 & 7 & 0 & 0 \\ 
$m_0= 10$ &  40 & 99 & 1 & 0 & 0 & 40 & 99 & 1 & 0 & 0 & 40 & 99 & 1 & 0 & 0 \\ 
$m_0= 12$ &  40 & 97 & 3 & 0 & 0 & 40 & 97 & 3 & 0 & 0 & 40 & 98 & 2 & 0 & 0 \\ 
 $m_0= 14$ & 40 & 95 & 5 & 0 & 0 & 40 & 95 & 5 & 0 & 0 & 40 & 95 & 5 & 0 & 0 \\ 
$m_0= 16$ &  40 & 74 & 16 & 6 & 4 & 40 & 74 & 16 & 6 & 4 & 40 & 73 & 17 & 6 & 4 \\ 
  \hline
  \end{tabular}
  \end{table}

\begin{table}[!ht]
\caption{\label{table_sim_E_bn_2}Results of selection frequency of  block size $b_n = 30, 40, 50, 60, 70$ in simulation scenario A. }
\scriptsize
\centering
\begin{tabular}{l|ccccc} 
  \hline
  & \multicolumn{5}{c}{TBFL $\gamma = 1.5$} \\
 & $b_n = 30$ & $b_n = 40$& $b_n = 50$& $b_n = 60$& $b_n = 70$  \\ 
  \hline 
 $m_0= 2$ &14 & 22 & 23 & 25 & 16 \\ 
 $m_0= 4$ & 20 & 22 & 20 & 20 & 18 \\ 
 $m_0= 6$ & 34 & 25 & 14 & 21 & 6 \\ 
 $m_0= 8$ & 32 & 31 & 15 & 10 & 12 \\ 
 $m_0= 10$ & 35 & 20 & 8 & 19 & 18 \\ 
 $m_0= 12$ & 37 & 32 & 8 & 10 & 13 \\ 
 $m_0= 14$ & 40 & 33 & 22 & 5 & 0 \\ 
 $m_0= 16$ & 29 & 36 & 24 & 8 & 3 \\ 
  \hline
  \end{tabular}
  \end{table}


\section{Additional Simulation Results}\label{Sim_Scenarios}
\setcounter{equation}{0}
We evaluate the performance of the proposed three-stage estimator with respect to both structural break detection and parameter estimation. In this section, we consider four simulation scenarios. 

For all settings, we report the error of locations of the estimated break points and the selection rate, i.e., the percentage of replicates where each break point is correctly identified. 
The error of the locations of estimated break points is defined as
$\text{error}_j = {\left\vert\tilde{t}_j^f - t_j\right\vert}$,
$j = 1,\dots, m_0$.
The percentage is calculated as the proportion of replicates, where the estimated break points by TBFL are close to each of the true break points. 
Specifically, to compute the selection rate,  a selected break point is counted as a ``success'' for the $j$-th true break point, $t_j$, if it falls in the interval $ [ t_j - \frac{t_{j} - t_{j-1}}{5}, t_j + \frac{t_{j+1} - t_{j}}{5}  ] $, $j = 1,\dots, m_0$. 
The results are reported in Table~\ref{table_sim_A}, Table~\ref{table_sim_B} and  Table~\ref{table_sim_C}.

For parameter estimation, we evaluate the performance of our procedure by reporting 
the mean and standard deviation of relative estimation error (REE), the true positive rate (TPR) and the false positive rate (FPR). 
The  relative estimation error (REE), the true positive rate (TPR) and the false positive rate (FPR)  are calculated by 
\begin{align}\label{REE}
    \text{REE} = \frac{\Vert {\widetilde{\bm{B}}} - \bm{B}^\star \Vert_F}{\Vert \bm{B}^\star \Vert_F},\ 
    \text{TPR} = \frac{\text{TP}}{\text{TP} + \text{FN}},\ 
    \text{FPR} = \frac{\text{FP}}{\text{FP} + \text{TN}}.
\end{align}
Specifically, for TPR and FPR, we use the median number of nonzero and zero elements among 100 replicates.
The results of all simulation settings are reported in Table~\ref{table_sim_tpr}.

\begin{table}[!ht]
\caption{\label{table_sim_A}Means and standard deviations of estimation error and selection rates.}
\scriptsize
\centering
{
\begin{tabular}{lccccc} 
  \hline
  & break point & truth & mean (error) & std (error) & selection rate  \\ 
  \hline
    Simulation B.1  & & &  \\
 & 1 & 333 & 1.0714 & 5.6281 & 0.98 \\ 
   & 2 & 666 & 5.4796 & 18.6692 & 0.94 \\ 
    Simulation B.2 & &  \\
     & 1 & 333 & 0.11 & 0.3145 & 1 \\ 
   & 2 & 666 & 0.08 & 0.4422 & 1 \\ 
    Simulation B.3  & \\
  & 1 & 333 & 3.5833 & 9.2106 & 0.96 \\ 
   & 2 & 666 & 8.3636 & 20.0959 & 0.94 \\ 
  Simulation B.4  & & &  \\
  $b_n = 20$\\
& 1 & 333 & 0.78 & 4.1889 & 1 \\ 
   & 2 & 666 & 3.66 & 14.8869 & 0.98 \\ 
 $b_n = 30$\\
 & 1 & 333 & 3.94 & 13.841 & 0.99 \\ 
   & 2 & 666 & 5.8817 & 17.7034 & 0.91 \\ 
 $b_n = 40$\\
 & 1 & 333 & 1.1837 & 3.6703 & 0.98 \\ 
   & 2 & 666 & 3.398 & 13.8048 & 0.96 \\ 
$b_n = 50$\\
   & 1 & 333 & 3.433 & 13.2672 & 0.95 \\ 
   & 2 & 666 & 5.2143 & 13.3696 & 0.98 \\ 
   \hline
\end{tabular}
}
\end{table}
The details of the simulation setting in each scenario are explained as follows. 

\textit{ Setting B (Constant Model)}. In the scenario B, $ n = 1,000 $, $ p = 100 $, $ m_0 = 2 $, $ t_1 = \left\lfloor \frac{n}{3} \right\rfloor  = 333 $, $ t_2 =  \lfloor \frac{2n}{3} \rfloor  = 666$ and $b_n=\left \lfloor n^{\frac{1}{2}} \right\rfloor = 31$ for Scenario B.1-B.3, while the  mean coefficient vector vary across scenarios. 
The number of non-zero elements in $j$th segments $d_j= 10$, for all $j= 1, \dots, m_0 + 1$.
\begin{itemize}
    \item[B.1] (Simple $\bm{\mu}$): In the scenario B.1, the mean coefficient vector are chosen to have the same structure in each segment, but different magnitude  entries $ -0.5 $, $ 0.5 $, and $ -0.3 $, respectively. 
    \item[B.2] (Random $\bm{\mu}$): In the scenario B.2, the mean coefficient vector in each segmentation is chosen to have a random sparse structure and random entries sampled from $\mbox{Uniform}(-1,-0.5)$, $\mbox{Uniform}(0.5,1)$ and $\mbox{Uniform}(-1,0.5)$, respectively.
    \item[B.3] (Changes in only a subset of components in $\bm{\mu}$) : In the scenario B.3, the mean coefficient vector are chosen similar to scenario B.1, but only a subset of components have abrupt changes in their parameters. There are only 5 out of 100 components which are dealing with abrupt change. 
    \item[B.4] (Simple $\bm{\mu}$ with different blocks size): In the scenario B.4, all the settings are same as those in the scenario B.1 except the blocks size. 
\end{itemize}

\begin{table}[!ht]
\caption{\label{table_sim_B}Results of means and standard deviations of estimation error and selection rates.}
\scriptsize
\centering
\begin{tabular}{lccccc} 
  \hline
  & break point & truth & mean (error) & std (error) & selection rate  \\ 
  \hline
    Simulation C.1  & & &  \\
 & 1 & 500 & 0.58 & 2.1376 & 1 \\ 
   & 2 & 1000 & 0.53 & 3.1155 & 1 \\ 
   & 3 & 1500 & 1.04 & 2.8459 & 1 \\ 
Simulation C.2  & & &  \\
 &   1 & 400 & 1.2424 & 3.6731 & 0.99 \\ 
&  2 & 800 & 1.48 & 7.2704 & 1 \\ 
 & 3 & 1600& 1.43 & 2.4173 & 1 \\ 
  Simulation C.3  & & &  \\
  $b_n = 30$ \\
  & 1 & 500 & 0.25 & 0.6256 & 1 \\ 
   & 2 & 1000 & 0.81 & 2.4769 & 1 \\ 
   & 3 & 1500 & 2.84 & 6.8956 & 1 \\ 
  $b_n = 50$ \\
  & 1 & 500 & 1.4343 & 6.3216 & 0.99 \\ 
   & 2 & 1000 & 5.8788 & 24.6776 & 0.98 \\ 
   & 3 & 1500 & 2.8283 & 12.7031 & 0.98 \\ 
$b_n = 60$ \\
    & 1 & 500 & 0.97 & 4.7725 & 1 \\ 
   & 2 & 1000 & 1.28 & 4.895 & 1 \\ 
   & 3 & 1500 & 3.65 & 12.5781 & 1 \\ 
  Simulation C.4  & & &  \\
 & 1 & 500 & 5.08 & 16.4198 & 1 \\ 
   \hline
\end{tabular}
\end{table}

\textit{ Setting C (Multiple Linear Model)}. In the scenario C, we set $ n = 2,000 $, $ p = 150 $, $ m_0 = 3 $ for Scenario C.1-C.3.  
 The  coefficient vectors in  Scenario C.1-C.3 are the same, with the number of non-zero elements in $j$th segments $d_j= 15$, for all $j= 1, \dots, m_0 + 1$.
 The coefficient vector are chosen to have the random sparse structure in each segment, with different magnitude  entries $ -3 $, $ 5 $, $-3$ and $3$, respectively. 
 The error variance in  Scenario C.1-C.3 is $\Sigma = I$.
 
\begin{itemize}
    \item[C.1] (Random $\bm{\beta}$, change points equally spaced): 
    $ t_1 = \lfloor \frac{n}{4} \rfloor  = 500 $, $ t_2 =  \lfloor \frac{2n}{4} \rfloor  = 1000$,
    $ t_3 =  \lfloor \frac{3n}{4} \rfloor  =1500$,
    and $b_n= 40$. 
    \item[C.2] (Random $\bm{\beta}$, change points not equally spaced ): In the scenario C.2, all the setting are the same as scenario C.1, except that change points are not equally spaced and $b_n = 30$. Specifically, we set
    $ t_1 = 400 $, $ t_2 = 800$, $ t_3 =  1600$. 
    \item[C.3] (Different block size) : In the scenario C.3, all the settings are same as those in the scenario C.1 except  the blocks sizes change. Here, we consider four different block size settings: $b_n= 30$, $b_n= 50$ and $b_n= 60$.
    \item[C.4] (Larger $p$ case): In the scenario C.4, we set $n= 1000$, $p = 200$, $b_n = 30$ with boundary block size $b_n^b= 120$, and only one change point in the middle $t_1 = 500$. 
    The number of non-zero elements in $j$th segments $d_j= 12$, for all $j= 1, \dots, m_0 + 1$, with entries $ -1 $ and $1$, respectively. 
    The error variance is smaller in this setting with $\Sigma = 0.01I$.

\end{itemize}



\textit{ Setting D (Gaussian Graphical Model)}. In the scenario D, $ n = 3,000 $,  $ m_0 = 2 $, $ t_1 = \lfloor \frac{n}{3} \rfloor  = 1000 $, $ t_2 =  \lfloor \frac{2n}{3} \rfloor  = 2000$ and $b_n=50$, while the covariance matrix $\Sigma_x$ vary across scenarios. All precision matrices are depicted in Figure~\ref{fig:sim_ggm_oemga}.

\begin{figure}[!ht]
\begin{center}
\includegraphics[width=0.32\textwidth, clip=TRUE, trim=0mm 25mm 0mm 16mm]{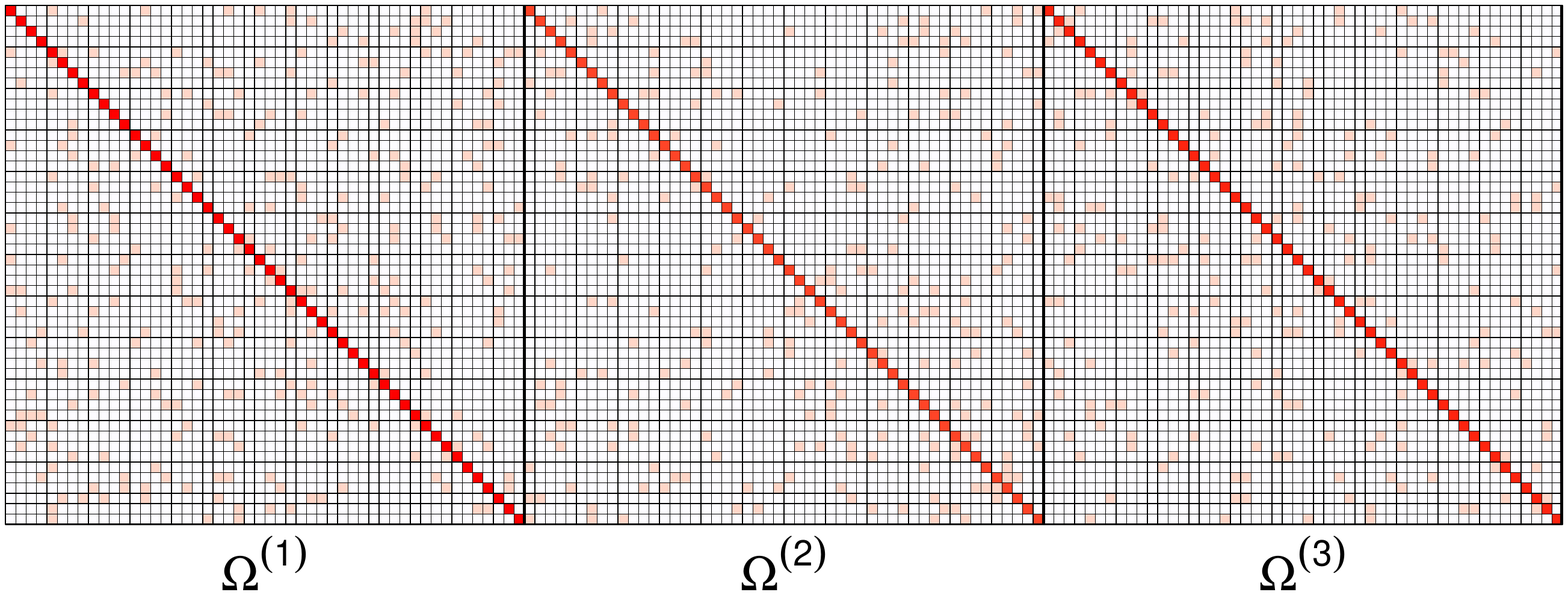}
\includegraphics[width=0.32\textwidth, clip=TRUE, trim=0mm 25mm 0mm 16mm]{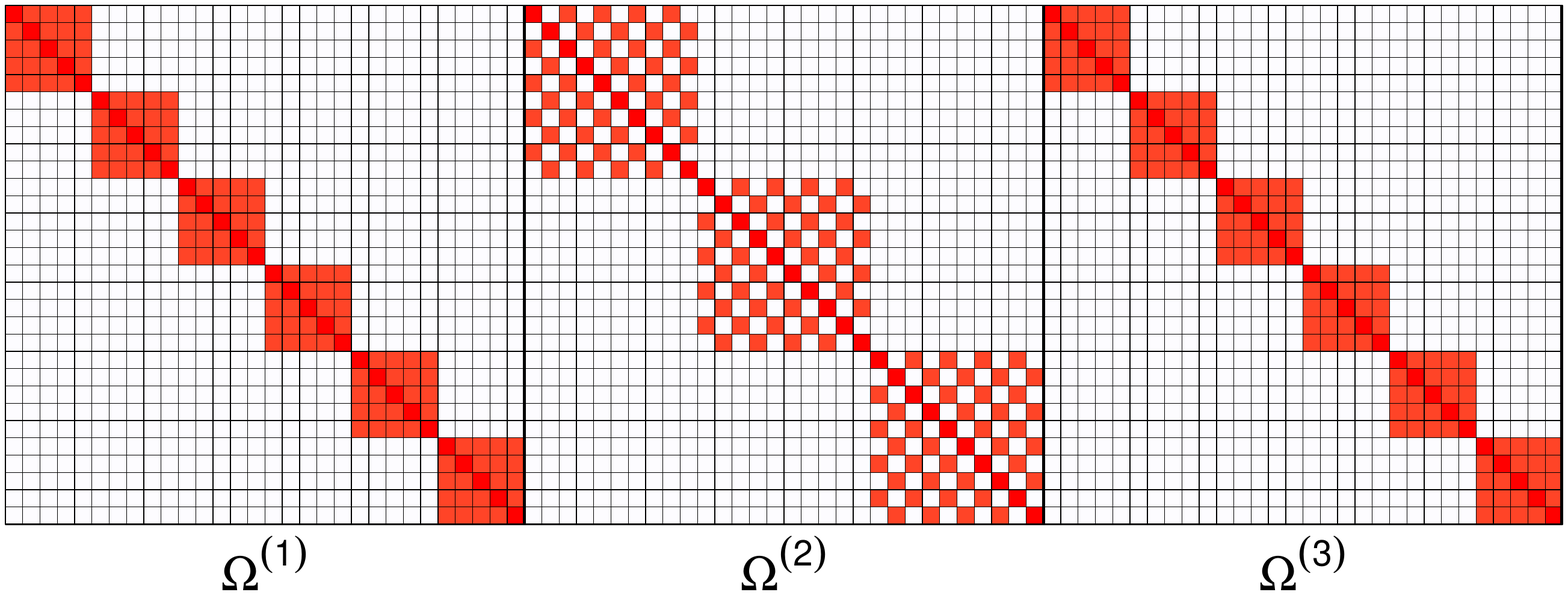}
\includegraphics[width=0.32\textwidth, clip=TRUE, trim=0mm 25mm 0mm 16mm]{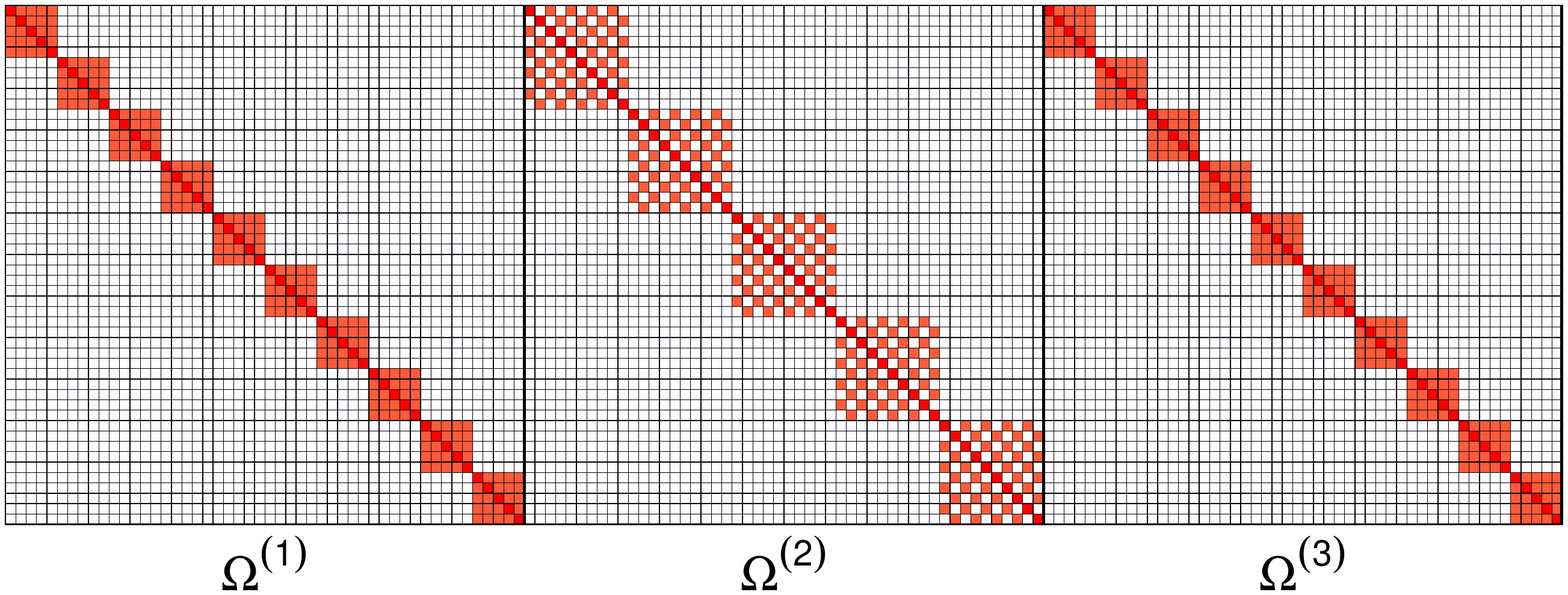}
\caption{True precision matrix $\Omega_x$ in Gaussian Graphical Model in Setting D.}
\label{fig:sim_ggm_oemga}
\end{center}
\end{figure}

\begin{table}[!ht]
\caption{\label{table_sim_C}Results of means and standard deviations of estimation error and selection rates.}
\scriptsize
\centering
\begin{tabular}{lccccc} 
  \hline
  & break point& truth & mean (error) & std (error) & selection rate  \\ 
  \hline
  Simulation D.1  & & &  \\
      & 1 & 1000 & 0.6495 & 4.975 & 0.97 \\ 
   & 2 & 2000 & 0.91 & 5.9086 & 1 \\ 
Simulation D.2  & & &  \\
   & 1 & 1000 & 0.0303 & 0.1723 & 0.99 \\ 
   & 2 & 2000 & 0.23 & 0.6333 & 1 \\ 
   Simulation D.3  & & &  \\
   & 1 & 1000 & 0.043 & 0.2917 & 0.93 \\ 
   & 2 & 2000 & 0.22 & 0.6289 & 1 \\ 
   \hline
\end{tabular}
\end{table}

\begin{itemize}
    \item[D.1] (Random $\bm{\Omega}$. Erd\"os-R\'enyi random graph): In the scenario D.1, the precision matrices are chosen to have a random sparse structure. 
    \item[D.2] (Toeplitz type $\bm{\Omega}$.): In the scenario D.2, the precision matrices are chosen to have a  Toeplitz type structure with $p= 30$.
     \item[D.3] (Toeplitz type $\bm{\Omega}$.): In the scenario D.3, the precision matrices are chosen to have a  Toeplitz type structure with $p = 50$.
\end{itemize}




\begin{table}[!ht]
\caption{\label{table_sim_tpr}Results of mean and standard deviation of relative estimation error (REE), true positive rate (TPR), and false positive rate (FPR) for estimated coefficients.}
\scriptsize
\centering
\resizebox{\linewidth}{!}{
\begin{tabular}{lccc|lccc|lccc} 
  \hline
&REE &  TPR& FPR & &REE & TPR& FPR & &REE & TPR& FPR \\ 
  \hline
   Sim B.1  &&& &Sim C.1  & &  &&Sim D.1  & &  \\
    & 0.3126 (0.1477)  & 1 & 0.0037 && 0.0816 (0.0365)& 1 & 0 && 0.3492 (0.0541) &0.9695 & 0.0103 \\ 
    Sim B.2 & & &&Sim C.2 & && &Sim D.2 & & \\
    & 0.1662 (0.0563) & 1 & 0.0037&  & 
   0.2592 (0.1446)  & 1 & 0&&0.0987 (0.0476)& 1 & 0  \\ 
    Sim B.3 &&&&Sim C.3  &  && &Sim D.3  & \\
    & 0.2559 (0.1036)  & 1 & 0.0037  &$b_n=30$& 0.0873 (0.0506) & 1 & 0 &&  0.1387 (0.0443)&  1 & 3e-04\\ 
     Sim B.4 && &&$b_n=50$& 0.1642 (0.1425)  & 1 & 0&\\
    $b_n = 20$ & 0.2691 (0.0875)& 1 & 0.0037 &$b_n=60$& 0.0993 (0.0494)& 1 & 0  \\
    $b_n = 30$ & 0.301 (0.155)& 1 & 0.0056 &Sim C.4  &&& \\
    $b_n = 40$& 0.3235 (0.1531)& 1 & 0.0037 & &  0.2046 (0.1535) & 1 & 0 &&&\\
    $b_n = 50$& 0.3407 (0.1557) & 1 & 0.0037&&&&& \\ 
   \hline
\end{tabular}}
\end{table}

\textit{ Setting E (Robustness of block size.)}. In the setting E, $n=5000$ and $p=25$, with the number of non-zero elements in $j$th segments $d_j= 3$, for all $j= 1, \dots, m_0 + 1$. The coefficient vector $\bm{\beta}$ are chosen to  have the random sparse structure in each segment, with different random entries sampled from $\mbox{Uniform}(-3,-1)\mathbbm{1}_{\{j \text{ is odd}\}} + \mbox{Uniform}(1,3)\mathbbm{1}_{\{j \text{ is even}\}}$, for each $j = 1, \dots, m_0 + 1$.  
We consider different setting of $m_0$ starting from 1 to 8.


\begin{table}[!ht]
\caption{\label{table_sim_D}Results of TBFL in simulation scenario E.}
\scriptsize
\centering
\begin{tabular}{l|cccc|cccc|cccc|cccc} 
  \hline
  & \multicolumn{4}{c|}{TBFL $b_n = 20$} & \multicolumn{4}{c|}{TBFL $b_n = 40$} & \multicolumn{4}{c|}{TBFL $b_n = 60$}  & \multicolumn{4}{c}{TBFL $b_n = 80$}\\
  $\vert \widehat{m} -m_0\vert$ & 0 &1 &2 & $>2$& 0 &1 &2 & $>2$& 0 &1 &2 & $>2$& 0 &1 &2 & $>2$ \\ 
  \hline 
  $m_0 =1$& 100 &   0 &   0 &   0 &  96 &   2 &   2 &   0 &  97 &   3 &   0 &   0 &  90 &  10 &   0 &   0 \\ 
 $m_0 =2$&  96 &   4 &   0 &   0 &  98 &   0 &   1 &   1 &  98 &   2 &   0 &   0 &  92 &   7 &   1 &   0 \\ 
 $m_0 =3$&  97 &   3 &   0 &   0 &  98 &   2 &   0 &   0 &  97 &   2 &   1 &   0 &  97 &   2 &   1 &   0 \\ 
 $m_0 =4$&  97 &   3 &   0 &   0 &  99 &   0 &   1 &   0 &  99 &   1 &   0 &   0 & 100 &   0 &   0 &   0 \\ 
$m_0 =5$&   99 &   1 &   0 &   0 & 100 &   0 &   0 &   0 &  97 &   3 &   0 &   0 &  97 &   3 &   0 &   0 \\ 
 $m_0 =6$&  98 &   2 &   0 &   0 & 100 &   0 &   0 &   0 &  99 &   1 &   0 &   0 &  95 &   4 &   1 &   0 \\ 
$m_0 =7$&  100 &   0 &   0 &   0 &  99 &   1 &   0 &   0 & 100 &   0 &   0 &   0 & 100 &   0 &   0 &   0 \\ 
 $m_0 =8$&  97 &   3 &   0 &   0 &  99 &   1 &   0 &   0 &  99 &   1 &   0 &   0 &  95 &   0 &   1 &   4 \\ 
  \hline
  \end{tabular}
  \end{table}


The performance of the TBFL algorithm is robust to the changes in the parameters’ zero/non-zero pattern, block size $b_n$, the number of time points $n$, the dimensions of response $p_y$ and the dimensions of predictor variables, and finally including many break points as investigated in simulation settings B through D. 
In scenario C.1, three change points are not equally spaced. Specifically, the first two change points are closer than the last two change points, which leads to a lower selection rate for the first change point.
In scenario C.4, a larger $p$ and smaller $n$ are chosen. Here to solve the numerical problem in the block fused lasso step, we applied a time-varying block sizes, where the blocks near the boundary are chosen to have a much larger block size (and smaller block sizes close to potential break time segments). Still, given that the $p$ is relatively large, the parameter estimates can be  unstable, which leads to a smaller true positive rate (TPR) as shown in Table~\ref{table_sim_tpr}. 
As shown in Table~\ref{table_sim_D}, the results are robust to the change of the block size. 

\section{Comparison with Simulated Annealing (SA) Method}\label{sec:sa}
\setcounter{equation}{0}
In this section, we compared the TBFL method to the Simulated Annealing (SA) method  \citep{bybee2018change} in terms of detection accuracy. The simulation setting is provided as follows:

\textit{ Setting F (Gaussian Graphical Model)}. In the scenario setting F, the precision matrices are chosen to have a Toeplitz type sparse structure, similar as scenario D.2, with $n=6000$, $p = 20$.
We consider different setting of $m_0$ starting from 2 to 8.


\begin{figure}[h]
\begin{center}
\includegraphics[width=0.6\textwidth, clip=TRUE, trim=0mm 85mm 0mm 85mm]{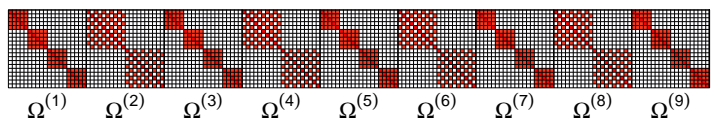}
\caption{True precision matrix $\Omega_x$ in Gaussian Graphical Model in simulation F ($m_0 = 8$).}
\label{fig:sim_ggm_F}
\end{center}
\end{figure}

\begin{figure}[!ht]
\begin{center}
\includegraphics[width=0.24\linewidth, clip=TRUE, trim=0mm 0mm 00mm 10mm]{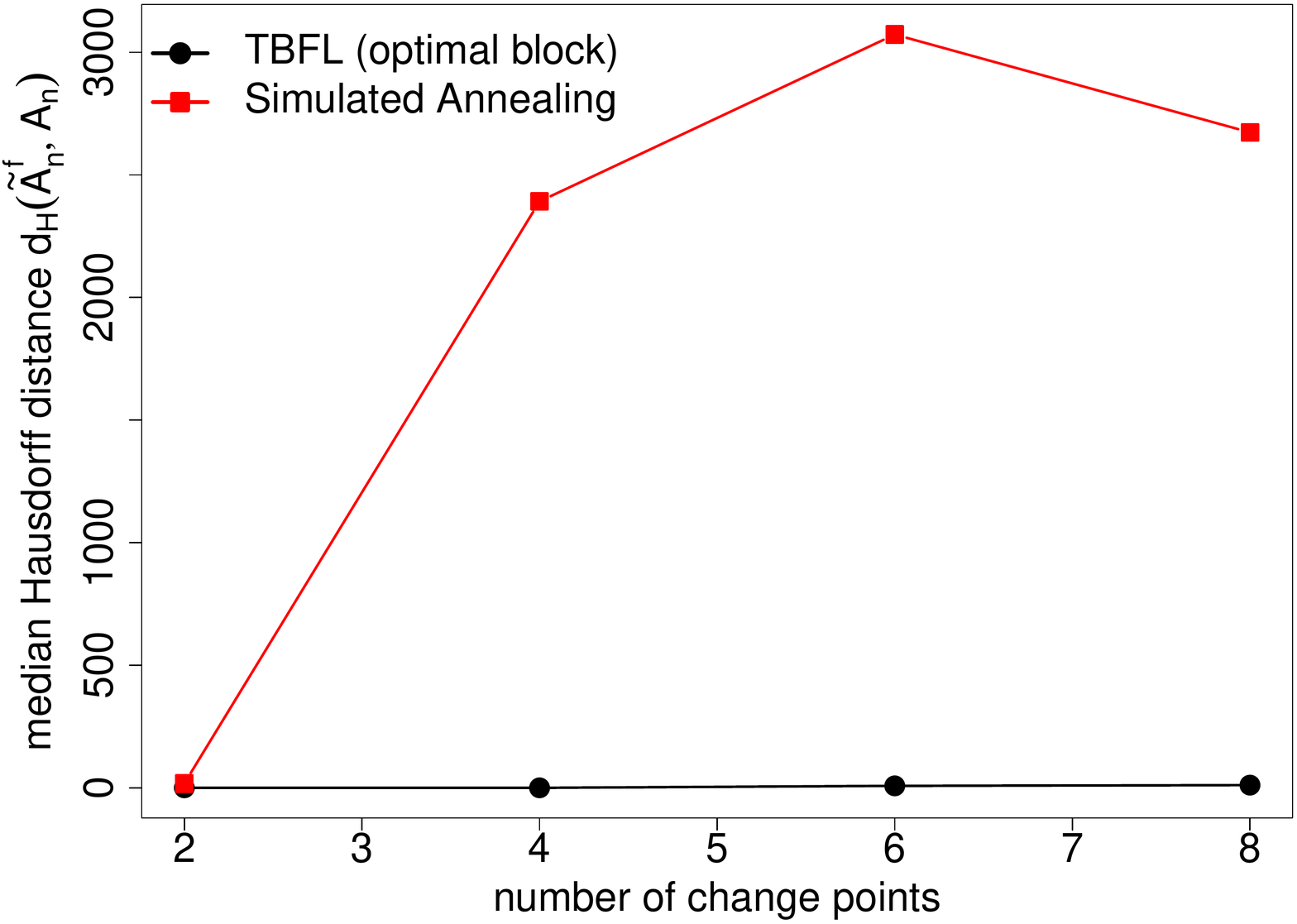}
\includegraphics[width=0.24\linewidth, clip=TRUE, trim=0mm 0mm 00mm 10mm]{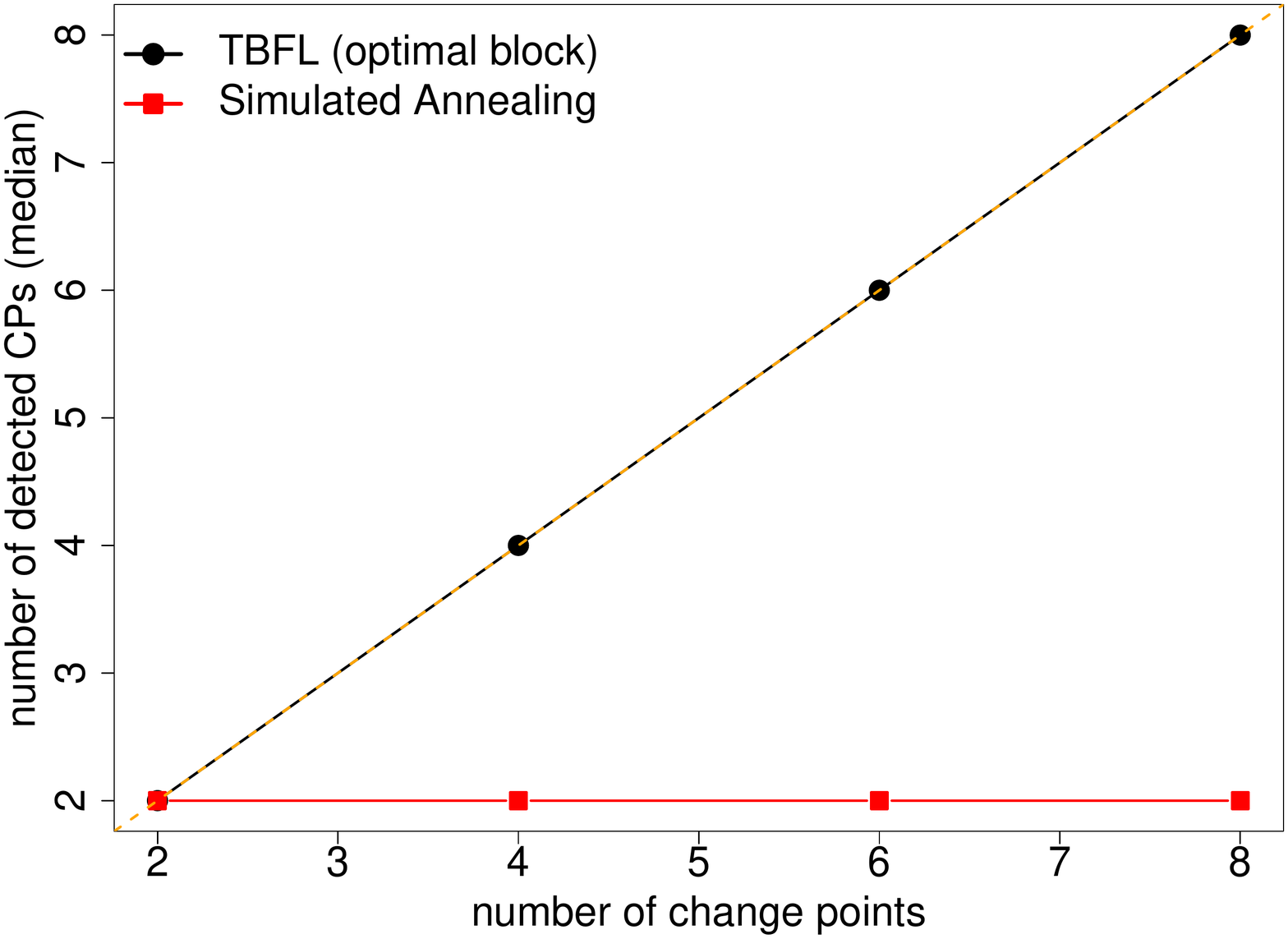}
\includegraphics[width=0.24\linewidth, clip=TRUE, trim=0mm 0mm 00mm 10mm]{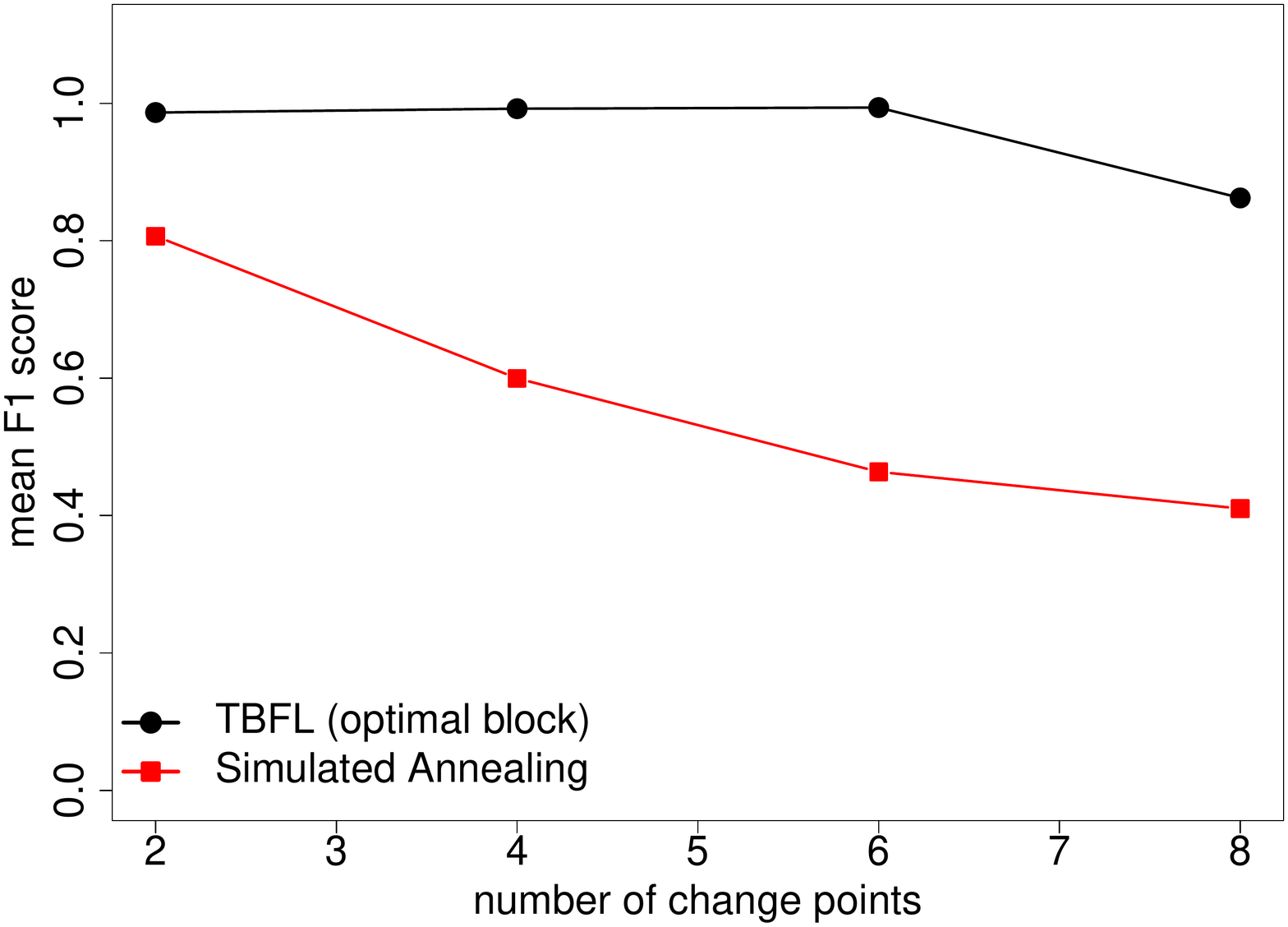}
\caption{
(a)  Hausdorff distance $d_H \left(  \widetilde{\mathcal{A}}_n^f, \mathcal{A}_n \right)$  for the TBFL and Simulated Annealing methods; 
(b) number of detected change points for the TBFL and Simulated Annealing methods. The 45 degree orange dashed line shows equality between the  number of detected change points on the vertical axis and the  number of true change points on the horizontal axis; 
(c) F1 score.
}
\label{fig:comput_performance_3}
\end{center}
\end{figure}

  \begin{table}[!ht]
\caption{\label{table_sim_F}Results of difference between $\widetilde{m}^f$ and $m_0$ for  TBFL and Simulated Annealing performance in simulation scenario F.}
\scriptsize
\centering
\begin{tabular}{l|cccc|cccc} 
  \hline
  & \multicolumn{4}{c|}{TBFL } & \multicolumn{4}{c}{SA}\\
  $\vert \widetilde{m}^f -m_0\vert$ & 0 &1 &2 & $>2$
  & 0 &1 &2 & $>2$\\ 
  \hline 
  $m_0 =2$&   94 & 5 & 1 & 0&
  50 &  46 &   2 &   2 \\ 
$m_0 =4$ & 93 & 7 & 0 & 0 &
13 &  25 &  28 &  34 \\ 
 $m_0 =6$&92 & 8 & 0 & 0 &
 2 &   7 &  10 &  81 \\ 
 $m_0 =8$ &66 & 15 & 1 & 18& 
 2 &  1 & 4 &  93 \\ 
  \hline
  \end{tabular}
  \end{table}
  
As shown in Table~\ref{table_sim_F}, among 100 replicates, our method can correctly estimate $m_0$ over 90\% replicates when $m_0 = 2$ to 6, while the SA method tends to underestimate the $m_0$. 
Note that our method also fails in estimating the number of the change points when the $m_0$ becomes larger (when $m_0 = 8$) due to the reason that the gap statistics and k-means method in block clustering step tend to fail when we have large cluster and different number of samples in each clusters.

Compared with the proposed Simulated Annealing (SA) method by \cite{bybee2018change}, our method have similar consistency rate but has better simulation results as shown in Table~\ref{table_sim_F}.  The Simulated Annealing (SA) method is not sensitive to the number of change point. 
\cite{bybee2018change} introduce majorize-minimize algorithm plus Simulated Annealing (SA) algorithm  for locating change points in large graphical models.
They also extend the method to multiple change-points by binary segmentation.
The rate of consistency for estimating the single change point locations is of order $O({d_n^\star}\log p)$ (see Theorem 9 in \cite{bybee2018change}).
Note that there is no consistency guarantee in terms of the number of change point for the method in \cite{bybee2018change} as the numerical results also verifies this.

\section{Details about EEG Data Pre-processing }\label{sec:eeg_more}

After removing the trend patterns in the raw EEG data, the pre-processing is accomplished in the following three steps: \textit{Step 1}: Partition the time series into sub-intervals with length $250$ and fit a VAR model for each of them; \textit{Step 2}:
Compute the residuals for each segment from the VAR model, scale the residuals and detect change points using TBFL assuming a Gaussian Graphical Model; 
\textit{Step 3}: Fit a VAR model again for each stationary segment from Step 2 and compute the residuals from the VAR model. The scaled residual data  in each stationary segment is then used to fit the Gaussian Graphical model for network connectivity analysis.

The Step 1 is to remove the possible temporal dependence in the EEG data. The Step 2 is to detect the change points in the Gaussian graphical model. Finally, the Step 3 is to get a more precise result in network connectivity.



\par



\bibhang=1.7pc
\bibsep=2pt
\fontsize{9}{14pt plus.8pt minus .6pt}\selectfont
\renewcommand\bibname{\large \bf References}
\expandafter\ifx\csname
natexlab\endcsname\relax\def\natexlab#1{#1}\fi
\expandafter\ifx\csname url\endcsname\relax
  \def\url#1{\texttt{#1}}\fi
\expandafter\ifx\csname urlprefix\endcsname\relax\def\urlprefix{URL}\fi

 \bibliographystyle{chicago}      
\bibliography{Reference}

\begin{thebibliography}{}

\bibitem[\protect\citeauthoryear{Aue and Horv{\'a}th}{Aue and
  Horv{\'a}th}{2013}]{aue2013structural}
Aue, A. and L.~Horv{\'a}th (2013).
\newblock Structural breaks in time series.
\newblock {\em Journal of Time Series Analysis\/}~{\em 34\/}(1), 1--16.

\bibitem[\protect\citeauthoryear{Aue, Rice, and S{\"o}nmez}{Aue
  et~al.}{2017}]{aue2017detecting}
Aue, A., G.~Rice, and O.~S{\"o}nmez (2017).
\newblock Detecting and dating structural breaks in functional data without
  dimension reduction.
\newblock {\em Journal of the Royal Statistical Society: Series B (Statistical
  Methodology)\/}.

\bibitem[\protect\citeauthoryear{Bai, Safikhani, and Michailidis}{Bai
  et~al.}{2020}]{bai2020multiple}
Bai, P., A.~Safikhani, and G.~Michailidis (2020).
\newblock Multiple change points detection in low rank and sparse high
  dimensional vector autoregressive models.
\newblock {\em IEEE Transactions on Signal Processing\/}~{\em 68}, 3074--3089.

\bibitem[\protect\citeauthoryear{Bai and Safikhani}{Bai and
  Safikhani}{2021}]{lineardetect}
Bai, Y. and A.~Safikhani (2021).
\newblock {\em LinearDetect: Change Point Detection in High-Dimensional Linear
  Regression Models}.
\newblock R package version 0.1.4.

\bibitem[\protect\citeauthoryear{Basseville and Nikiforov}{Basseville and
  Nikiforov}{1993}]{Basseville1993detection}
Basseville, M. and I.~V. Nikiforov (1993).
\newblock {\em Detection of abrupt changes: theory and application}, Volume
  104.
\newblock Prentice Hall Englewood Cliffs.

\bibitem[\protect\citeauthoryear{Basu and Michailidis}{Basu and
  Michailidis}{2015}]{Basu_2015}
Basu, S. and G.~Michailidis (2015).
\newblock Regularized estimation in sparse high-dimensional time series models.
\newblock {\em The Annals of Statistics\/}~{\em 43\/}(4), 1535--1567.

\bibitem[\protect\citeauthoryear{Bickel, Ritov, Tsybakov, et~al.}{Bickel
  et~al.}{2009}]{bickel2009simultaneous}
Bickel, P.~J., Y.~Ritov, A.~B. Tsybakov, et~al. (2009).
\newblock Simultaneous analysis of lasso and dantzig selector.
\newblock {\em The Annals of Statistics\/}~{\em 37\/}(4), 1705--1732.

\bibitem[\protect\citeauthoryear{Bleakley and Vert}{Bleakley and
  Vert}{2011}]{bleakley2011group}
Bleakley, K. and J.-P. Vert (2011).
\newblock The group fused lasso for multiple change-point detection.
\newblock {\em arXiv preprint arXiv:1106.4199\/}.

\bibitem[\protect\citeauthoryear{Bybee and Atchad{\'e}}{Bybee and
  Atchad{\'e}}{2018}]{bybee2018change}
Bybee, L. and Y.~Atchad{\'e} (2018).
\newblock Change-point computation for large graphical models: a scalable
  algorithm for gaussian graphical models with change-points.
\newblock {\em The Journal of Machine Learning Research\/}~{\em 19\/}(1),
  440--477.

\bibitem[\protect\citeauthoryear{Chan, Ng, and Yau}{Chan
  et~al.}{2021}]{chan2021self}
Chan, N.~H., W.~L. Ng, and C.~Y. Yau (2021).
\newblock A self-normalized approach to sequential change-point detection for
  time series.
\newblock {\em Statistica Sinica\/}~{\em 31\/}(1), 491--517.

\bibitem[\protect\citeauthoryear{Cho}{Cho}{2016}]{cho2016change}
Cho, H. (2016).
\newblock Change-point detection in panel data via double cusum statistic.
\newblock {\em Electronic Journal of Statistics\/}~{\em 10\/}(2), 2000--2038.

\bibitem[\protect\citeauthoryear{Cho and Fryzlewicz}{Cho and
  Fryzlewicz}{2015}]{cho2015multiple}
Cho, H. and P.~Fryzlewicz (2015).
\newblock Multiple-change-point detection for high dimensional time series via
  sparsified binary segmentation.
\newblock {\em Journal of the Royal Statistical Society: Series B (Statistical
  Methodology)\/}~{\em 77\/}(2), 475--507.

\bibitem[\protect\citeauthoryear{Cs{\"o}rg{\"o} and Horv{\'a}th}{Cs{\"o}rg{\"o}
  and Horv{\'a}th}{1997}]{csorgo1997limit}
Cs{\"o}rg{\"o}, M. and L.~Horv{\'a}th (1997).
\newblock {\em Limit theorems in change-point analysis}, Volume~18.
\newblock John Wiley \& Sons Inc.

\bibitem[\protect\citeauthoryear{Davis, Lee, and Rodriguez-Yam}{Davis
  et~al.}{2006}]{davis2006structural}
Davis, R.~A., T.~C.~M. Lee, and G.~A. Rodriguez-Yam (2006).
\newblock Structural break estimation for nonstationary time series models.
\newblock {\em Journal of the American Statistical Association\/}~{\em
  101\/}(473), 223--239.

\bibitem[\protect\citeauthoryear{Frick, Munk, and Sieling}{Frick
  et~al.}{2014}]{frick2014multiscale}
Frick, K., A.~Munk, and H.~Sieling (2014).
\newblock Multiscale change point inference.
\newblock {\em Journal of the Royal Statistical Society: Series B (Statistical
  Methodology)\/}~{\em 76\/}(3), 495--580.

\bibitem[\protect\citeauthoryear{Friedman, Hastie, and Tibshirani}{Friedman
  et~al.}{2010}]{friedman2010regularization}
Friedman, J., T.~Hastie, and R.~Tibshirani (2010).
\newblock Regularization paths for generalized linear models via coordinate
  descent.
\newblock {\em Journal of statistical software\/}~{\em 33\/}(1), 1.

\bibitem[\protect\citeauthoryear{Fris{\'e}n}{Fris{\'e}n}{2008}]{frisen2008financial}
Fris{\'e}n, M. (2008).
\newblock {\em Financial surveillance}, Volume~71.
\newblock John Wiley \& Sons.

\bibitem[\protect\citeauthoryear{Fryzlewicz}{Fryzlewicz}{2017}]{fryzlewicz2017tail}
Fryzlewicz, P. (2017).
\newblock Tail-greedy bottom-up data decompositions and fast mulitple
  change-point detection.
\newblock {\em Annals of Statistics\/}.

\bibitem[\protect\citeauthoryear{Gibberd and Roy}{Gibberd and
  Roy}{2017}]{gibberd2017multiple}
Gibberd, A.~J. and S.~Roy (2017).
\newblock Multiple changepoint estimation in high-dimensional gaussian
  graphical models.
\newblock {\em arXiv preprint arXiv:1712.05786\/}.

\bibitem[\protect\citeauthoryear{Harchaoui and L{\'e}vy-Leduc}{Harchaoui and
  L{\'e}vy-Leduc}{2010}]{harchaoui2010multiple}
Harchaoui, Z. and C.~L{\'e}vy-Leduc (2010).
\newblock Multiple change-point estimation with a total variation penalty.
\newblock {\em Journal of the American Statistical Association\/}~{\em
  105\/}(492), 1480--1493.

\bibitem[\protect\citeauthoryear{Hartigan and Wong}{Hartigan and
  Wong}{1979}]{hartigan1979algorithm}
Hartigan, J.~A. and M.~A. Wong (1979).
\newblock Algorithm as 136: A k-means clustering algorithm.
\newblock {\em Journal of the royal statistical society. series c (applied
  statistics)\/}~{\em 28\/}(1), 100--108.

\bibitem[\protect\citeauthoryear{Hastie, Tibshirani, and Friedman}{Hastie
  et~al.}{2009}]{hastie2009elements}
Hastie, T., R.~Tibshirani, and J.~Friedman (2009).
\newblock {\em The elements of statistical learning: data mining, inference,
  and prediction}.
\newblock Springer Science \& Business Media.

\bibitem[\protect\citeauthoryear{Hushchyn, Arzymatov, and Derkach}{Hushchyn
  et~al.}{2020}]{hushchyn2020online}
Hushchyn, M., K.~Arzymatov, and D.~Derkach (2020).
\newblock Online neural networks for change-point detection.
\newblock {\em arXiv preprint arXiv:2010.01388\/}.

\bibitem[\protect\citeauthoryear{Jackson, Scargle, Barnes, Arabhi, Alt,
  Gioumousis, Gwin, Sangtrakulcharoen, Tan, and Tsai}{Jackson
  et~al.}{2005}]{jackson2005algorithm}
Jackson, B., J.~D. Scargle, D.~Barnes, S.~Arabhi, A.~Alt, P.~Gioumousis,
  E.~Gwin, P.~Sangtrakulcharoen, L.~Tan, and T.~T. Tsai (2005).
\newblock An algorithm for optimal partitioning of data on an interval.
\newblock {\em IEEE Signal Processing Letters\/}~{\em 12\/}(2), 105--108.

\bibitem[\protect\citeauthoryear{Killick, Fearnhead, and Eckley}{Killick
  et~al.}{2012}]{killick2012optimal}
Killick, R., P.~Fearnhead, and I.~A. Eckley (2012).
\newblock Optimal detection of changepoints with a linear computational cost.
\newblock {\em Journal of the American Statistical Association\/}~{\em
  107\/}(500), 1590--1598.

\bibitem[\protect\citeauthoryear{Kolar and Xing}{Kolar and
  Xing}{2012}]{kolar2012estimating}
Kolar, M. and E.~P. Xing (2012).
\newblock Estimating networks with jumps.
\newblock {\em Electronic journal of statistics\/}~{\em 6}, 2069.

\bibitem[\protect\citeauthoryear{Leonardi and B{\"u}hlmann}{Leonardi and
  B{\"u}hlmann}{2016}]{leonardi2016computationally}
Leonardi, F. and P.~B{\"u}hlmann (2016).
\newblock Computationally efficient change point detection for high-dimensional
  regression.
\newblock {\em arXiv preprint arXiv:1601.03704\/}.

\bibitem[\protect\citeauthoryear{Liu, Zhou, Zhang, and Liu}{Liu
  et~al.}{2020}]{liu2020unified}
Liu, B., C.~Zhou, X.~Zhang, and Y.~Liu (2020).
\newblock A unified data-adaptive framework for high dimensional change point
  detection.
\newblock {\em Journal of the Royal Statistical Society: Series B (Statistical
  Methodology)\/}~{\em 82\/}(4), 933--963.

\bibitem[\protect\citeauthoryear{Loh and Wainwright}{Loh and
  Wainwright}{2012}]{loh2012}
Loh, P.-L. and M.~J. Wainwright (2012, 06).
\newblock High-dimensional regression with noisy and missing data: Provable
  guarantees with nonconvexity.
\newblock {\em Ann. Statist.\/}~{\em 40\/}(3), 1637--1664.

\bibitem[\protect\citeauthoryear{L{\"u}tkepohl}{L{\"u}tkepohl}{2005}]{lutkepohl2005new}
L{\"u}tkepohl, H. (2005).
\newblock {\em New introduction to multiple time series analysis}.
\newblock Springer Science \& Business Media.

\bibitem[\protect\citeauthoryear{Matteson and James}{Matteson and
  James}{2014}]{matteson2014nonparametric}
Matteson, D.~S. and N.~A. James (2014).
\newblock A nonparametric approach for multiple change point analysis of
  multivariate data.
\newblock {\em Journal of the American Statistical Association\/}~{\em
  109\/}(505), 334--345.

\bibitem[\protect\citeauthoryear{Meinshausen and B{\"u}hlmann}{Meinshausen and
  B{\"u}hlmann}{2006}]{meinshausen2006high}
Meinshausen, N. and P.~B{\"u}hlmann (2006).
\newblock High-dimensional graphs and variable selection with the lasso.
\newblock {\em Annals of statistics\/}~{\em 34\/}(3), 1436--1462.

\bibitem[\protect\citeauthoryear{Nezamfar, Orhan, Purwar, Hild, Oken, and
  Erdogmus}{Nezamfar et~al.}{2011}]{EEGvisual}
Nezamfar, H., U.~Orhan, S.~Purwar, K.~Hild, B.~Oken, and D.~Erdogmus (2011).
\newblock Decoding of multichannel eeg activity from the visual cortex in
  response to pseudorandom binary sequences of visual stimuli.
\newblock {\em International Journal of Imaging Systems and Technology\/}~{\em
  21\/}(2), 139--147.

\bibitem[\protect\citeauthoryear{Ombao, Von~Sachs, and Guo}{Ombao
  et~al.}{2005}]{OmbaoVonSachsGuo_2005}
Ombao, H., R.~Von~Sachs, and W.~Guo (2005).
\newblock Slex analysis of multivariate nonstationary time series.
\newblock {\em Journal of the American Statistical Association\/}~{\em
  100\/}(470), 519--531.

\bibitem[\protect\citeauthoryear{Qiu}{Qiu}{2013}]{qiu2013introduction}
Qiu, P. (2013).
\newblock {\em Introduction to statistical process control}.
\newblock CRC press.

\bibitem[\protect\citeauthoryear{Rinaldo}{Rinaldo}{2009}]{rinaldo2009properties}
Rinaldo, A. (2009).
\newblock Properties and refinements of the fused lasso.
\newblock {\em Annals of Statistics\/}~{\em 37\/}(5B), 2922--2952.

\bibitem[\protect\citeauthoryear{Rothman, Levina, and Zhu}{Rothman
  et~al.}{2010}]{rothman2010sparse}
Rothman, A.~J., E.~Levina, and J.~Zhu (2010).
\newblock Sparse multivariate regression with covariance estimation.
\newblock {\em Journal of Computational and Graphical Statistics\/}~{\em
  19\/}(4), 947--962.

\bibitem[\protect\citeauthoryear{Roy, Atchad{\'e}, and Michailidis}{Roy
  et~al.}{2017}]{roy2017change}
Roy, S., Y.~Atchad{\'e}, and G.~Michailidis (2017).
\newblock Change point estimation in high dimensional markov random-field
  models.
\newblock {\em Journal of the Royal Statistical Society: Series B (Statistical
  Methodology)\/}~{\em 79\/}(4), 1187--1206.

\bibitem[\protect\citeauthoryear{Safikhani, Bai, and Michailidis}{Safikhani
  et~al.}{2021}]{safikhani2021fast}
Safikhani, A., Y.~Bai, and G.~Michailidis (2021).
\newblock Fast and scalable algorithm for detection of structural breaks in big
  var models.
\newblock {\em Journal of Computational and Graphical Statistics\/}, 1--14.

\bibitem[\protect\citeauthoryear{Safikhani and Shojaie}{Safikhani and
  Shojaie}{2020}]{safikhani2020joint}
Safikhani, A. and A.~Shojaie (2020).
\newblock Joint structural break detection and parameter estimation in
  high-dimensional nonstationary var models.
\newblock {\em Journal of the American Statistical Association\/}, 1--14.

\bibitem[\protect\citeauthoryear{Savage, Zhang, Yu, Chou, and Wang}{Savage
  et~al.}{2014}]{savage2014anomaly}
Savage, D., X.~Zhang, X.~Yu, P.~Chou, and Q.~Wang (2014).
\newblock Anomaly detection in online social networks.
\newblock {\em Social Networks\/}~{\em 39}, 62--70.

\bibitem[\protect\citeauthoryear{Schwarz et~al.}{Schwarz
  et~al.}{1978}]{schwarz1978estimating}
Schwarz, G. et~al. (1978).
\newblock Estimating the dimension of a model.
\newblock {\em The annals of statistics\/}~{\em 6\/}(2), 461--464.

\bibitem[\protect\citeauthoryear{Tibshirani, Saunders, Rosset, Zhu, and
  Knight}{Tibshirani et~al.}{2005}]{tibshirani2005sparsity}
Tibshirani, R., M.~Saunders, S.~Rosset, J.~Zhu, and K.~Knight (2005).
\newblock Sparsity and smoothness via the fused lasso.
\newblock {\em Journal of the Royal Statistical Society: Series B (Statistical
  Methodology)\/}~{\em 67\/}(1), 91--108.

\bibitem[\protect\citeauthoryear{Tibshirani, Walther, and Hastie}{Tibshirani
  et~al.}{2001}]{tibshirani2001estimating}
Tibshirani, R., G.~Walther, and T.~Hastie (2001).
\newblock Estimating the number of clusters in a data set via the gap
  statistic.
\newblock {\em Journal of the Royal Statistical Society: Series B (Statistical
  Methodology)\/}~{\em 63\/}(2), 411--423.

\bibitem[\protect\citeauthoryear{Trujillo}{Trujillo}{2019}]{ANS9Q1_2019}
Trujillo, L. (2019).
\newblock {Raw Empirical EEG Data}.

\bibitem[\protect\citeauthoryear{van~de Geer, B{\"u}hlmann, Zhou,
  et~al.}{van~de Geer et~al.}{2011}]{van2011adaptive}
van~de Geer, S., P.~B{\"u}hlmann, S.~Zhou, et~al. (2011).
\newblock The adaptive and the thresholded lasso for potentially misspecified
  models (and a lower bound for the lasso).
\newblock {\em Electronic Journal of Statistics\/}~{\em 5}, 688--749.

\bibitem[\protect\citeauthoryear{Vershynin}{Vershynin}{2010}]{vershynin2010introduction}
Vershynin, R. (2010).
\newblock Introduction to the non-asymptotic analysis of random matrices.
\newblock {\em arXiv preprint arXiv:1011.3027\/}.

\bibitem[\protect\citeauthoryear{Wang, Lin, and Willett}{Wang
  et~al.}{2019}]{wang2019statistically}
Wang, D., K.~Lin, and R.~Willett (2019).
\newblock Statistically and computationally efficient change point localization
  in regression settings.
\newblock {\em arXiv preprint arXiv:1906.11364\/}.

\bibitem[\protect\citeauthoryear{Wang, Yu, Rinaldo, and Willett}{Wang
  et~al.}{2019}]{wang2019localizing}
Wang, D., Y.~Yu, A.~Rinaldo, and R.~Willett (2019).
\newblock Localizing changes in high-dimensional vector autoregressive
  processes.
\newblock {\em arXiv preprint arXiv:1909.06359\/}.

\bibitem[\protect\citeauthoryear{Wang, Ren, and Gu}{Wang
  et~al.}{2016}]{wang2016precision}
Wang, L., X.~Ren, and Q.~Gu (2016).
\newblock Precision matrix estimation in high dimensional gaussian graphical
  models with faster rates.
\newblock In {\em Artificial Intelligence and Statistics}, pp.\  177--185.

\bibitem[\protect\citeauthoryear{Wang and Samworth}{Wang and
  Samworth}{2016}]{wang2016high}
Wang, T. and R.~J. Samworth (2016).
\newblock High-dimensional changepoint estimation via sparse projection.
\newblock {\em arXiv preprint arXiv:1606.06246\/}.

\bibitem[\protect\citeauthoryear{Wang and Zhu}{Wang and
  Zhu}{2011}]{wang2011consistent}
Wang, T. and L.~Zhu (2011).
\newblock Consistent tuning parameter selection in high dimensional sparse
  linear regression.
\newblock {\em Journal of Multivariate Analysis\/}~{\em 102\/}(7), 1141--1151.

\bibitem[\protect\citeauthoryear{Wang and Mei}{Wang and
  Mei}{2015}]{wang2015large}
Wang, Y. and Y.~Mei (2015).
\newblock Large-scale multi-stream quickest change detection via shrinkage
  post-change estimation.
\newblock {\em IEEE Transactions on Information Theory\/}~{\em 61\/}(12),
  6926--6938.

\bibitem[\protect\citeauthoryear{Yu}{Yu}{2020}]{yu2020review}
Yu, Y. (2020).
\newblock A review on minimax rates in change point detection and localisation.
\newblock {\em arXiv preprint arXiv:2011.01857\/}.

\bibitem[\protect\citeauthoryear{Yuan and Lin}{Yuan and
  Lin}{2007}]{yuan2007model}
Yuan, M. and Y.~Lin (2007).
\newblock Model selection and estimation in the gaussian graphical model.
\newblock {\em Biometrika\/}~{\em 94\/}(1), 19--35.

\bibitem[\protect\citeauthoryear{Zhang and Lavitas}{Zhang and
  Lavitas}{2018}]{zhang2018unsupervised}
Zhang, T. and L.~Lavitas (2018).
\newblock Unsupervised self-normalized change-point testing for time series.
\newblock {\em Journal of the American Statistical Association\/}~{\em
  113\/}(522), 637--648.

\bibitem[\protect\citeauthoryear{Zhou, R{\"u}timann, Xu, and B{\"u}hlmann}{Zhou
  et~al.}{2011}]{zhou2011high}
Zhou, S., P.~R{\"u}timann, M.~Xu, and P.~B{\"u}hlmann (2011).
\newblock High-dimensional covariance estimation based on gaussian graphical
  models.
\newblock {\em The Journal of Machine Learning Research\/}~{\em 12},
  2975--3026.

\bibitem[\protect\citeauthoryear{Zhu, Pan, Li, Liu, and Wang}{Zhu
  et~al.}{2017}]{zhu2017network}
Zhu, X., R.~Pan, G.~Li, Y.~Liu, and H.~Wang (2017).
\newblock Network vector autoregression.
\newblock {\em The Annals of Statistics\/}~{\em 45\/}(3), 1096--1123.

\end{thebibliography}







\vskip .65cm
\noindent
Department of Statistics, University of Florida
\vskip 2pt
\noindent
E-mail: (baiyue@ufl.edu)
\vskip 2pt

\noindent
Department of Statistics, University of Florida
\vskip 2pt
\noindent
E-mail: (a.safikhani@ufl.edu)


\end{document}